\documentclass{article}
\usepackage{arxiv}

\usepackage[utf8]{inputenc} 
\usepackage[T1]{fontenc}    
\usepackage{booktabs}       
\usepackage{amsfonts}       
\usepackage{nicefrac}       
\usepackage{microtype}      
\usepackage{lipsum}         
\usepackage{doi}

\usepackage{amsmath,amsfonts,mathrsfs,bm}
\usepackage{comment}

\usepackage[table]{xcolor}

\usepackage[pdftex]{graphicx}
\usepackage[title]{appendix}
\usepackage{amsthm,mathtools}
\usepackage{algorithm}
\usepackage{algpseudocode}
\usepackage{enumitem}
\usepackage{blindtext}
\usepackage{mdframed}
\usepackage{multicol}

\usepackage{dsfont}

\usepackage{subcaption}
\usepackage[justification=centering]{caption}

\usepackage{booktabs}

\usepackage{tikz}
\usetikzlibrary{arrows.meta}
\usetikzlibrary{positioning}
\usetikzlibrary{shapes,calc}
\usetikzlibrary{arrows,fit}

\usetikzlibrary{decorations.markings}

\usepackage{adjustbox}

\usepackage{array}
\newcolumntype{L}[1]{>{\raggedright\let\newline\\\arraybackslash\hspace{0pt}}m{#1}}
\newcolumntype{C}[1]{>{\centering\let\newline\\\arraybackslash\hspace{0pt}}m{#1}}
\newcolumntype{R}[1]{>{\raggedleft\let\newline\\\arraybackslash\hspace{0pt}}m{#1}}

\usepackage{multirow,multicol}

\usepackage{float}
\usepackage{pgffor}

\usepackage[section]{placeins}

\usepackage{chngcntr}
\counterwithin{figure}{section}
\counterwithin{table}{section}

\newtheorem{theorem}{Theorem}[section]

\newtheorem{prop}[theorem]{Proposition}
\newtheorem{corol}[theorem]{Corollary}
\newtheorem{rem}[theorem]{Remark}

\theoremstyle{definition}

\usepackage[sort,square,numbers]{natbib}

\usepackage{hyperref}
\usepackage{cleveref}

\crefalias{prop}{theorem}
\crefname{theorem}{proposition}{propositions}
\Crefname{theorem}{Proposition}{Propositions}

\newcommand{\KwIn}[1]{\Statex \textbf{Input:} #1}

\newcommand{\KwOut}[1]{\Statex \textbf{Output:}  #1}
\newcommand{\KwRet}[1]{\State \textbf{return}  #1}

\newcommand{\KwDepend}[1]{\Statex \textbf{Depends on:} #1}

\newenvironment{smallmat}{%
	\small\arraycolsep=0.8\arraycolsep
	\begin{pmatrix}}
	{\end{pmatrix}}

\definecolor{myred}{HTML}{B83232}
\definecolor{myblue}{HTML}{28568A}
\definecolor{mygreen}{HTML}{288A42}

\newcommand{\e}{\mathbb{E}}
\newcommand{\cov}{\mathrm{Cov}}
\newcommand{\var}{\mathrm{Var}}

\newcommand{\R}{\mathbb{R}}
\newcommand{\N}{\mathbb{N}}

\newcommand{\spn}[1]{\mathop{\mathrm{span}}\left\lbrace{#1}\right\rbrace}%
\newcommand{\ind}{\mathds{1}}
\newcommand{\q}[1]{``#1''}%
\newcommand{\dd}{\mathrm{d}}
\newcommand{\dv}{{\mathrm{div}}}
\newcommand{\curl}{\mathop{\mathrm{curl}}}

\newcommand{\lb}{-\Delta_{\mathcal{M}}}
\newcommand{\link}[2]{\href{#1}{\textcolor{myred}{#2}}}

\title{Prediction of spatio-temporal data on meshed surfaces using advection-diffusion SPDEs}

\author{ {\large Mike PEREIRA$^{1}$, Lucia CLAROTTO$^{2,\star}$ and Nicolas DESASSIS$^1$} \vspace{-1ex} \\ \vspace{-1ex}
 $^1$ STIM, Mines Paris -- PSL University, France.\\
 $^2$ UMR MIA PS, AgroParisTech/INRAE, Paris-Saclay University, France.\\
	$^{\star}$ Corresponding author : lucia.clarotto@agroparistech.fr
}

\renewcommand{\shorttitle}{\textsc{Prediction of spatio-temporal data on meshed surfaces using advection-diffusion SPDEs}}

\hypersetup{
	pdftitle={Prediction of spatio-temporal data on meshed surfaces using advection-diffusion SPDEs},
	pdfauthor={Mike Pereira, Lucia Clarotto, Nicolas Desassis}
}

\begin{document}
	
	
\maketitle
	

\begin{abstract}
    The aim of this work is to propose a statistical model for spatio-temporal data on meshed surfaces based on the Stochastic Partial Differential Equation (SPDE) modeling approach. Specifically, we focus on a class of advection-diffusion SPDEs defined on smooth compact orientable closed Riemannian manifolds of dimension 2, and their discretization via a Galerkin approach. Data following this model include measurements on tunnel surfaces, brain surfaces, or the globe. We demonstrate how this method enables the development of scalable algorithms for the simulation, inference and prediction of Gaussian random fields that are solutions to the discretized SPDE. The method is applied to a simulated spatio-temporal dataset exhibiting advective and diffusive behavior on the sphere, as well as to a real case study on aerosol optical depth in the atmosphere across the globe’s surface.
\end{abstract}


\let\thefootnote\relax\footnotetext{\paragraph{\textsc{Acknowledgment}} The authors would like to thank the \href{https://chaire-geolearning.org/}{Geolearning} chair for its financial support.}


\vspace{3em}

\begin{center}
\textit{
Note: This document contains links to animated figures colored in \textcolor{myred}{dark red}. All these animations are collected at: \textcolor{myred}{\href{https://mike-pereira.github.io/PRED_STRF/}{\url{https://mike-pereira.github.io/PRED_STRF/}}}.}
\end{center}


\vspace{2em}
\section{Introduction}

In Geostatistics, when modeling spatio-temporal data, the observed variable is seen as (a realization of) a space-time Gaussian random field (GRF), so that the mere characterization of its mean and covariance functions suffices to fully describe its statistical properties. In particular, these two functions are chosen to mimic the spatio-temporal variability and structure observed in the data \cite{wackernagel2003multivariate}. 
This probabilistic framework has many advantages. On the one hand, it allows to perform simulations of random fields with the same spatio-temporal structure as the one observed in the data, and predictions at unobserved locations. On the other hand, uncertainties can be quantified both on the variable behavior at unobserved locations (using so-called conditional simulations) or on the model parameters (through Bayesian approaches) \citep{chiles2009geostatistics}.

Of particular interest in this work is the setting where the spatial domain on which the data lie is not Euclidean, but rather represents a meshed surface. For instance, this is the case when dealing with fMRI data in neuroimaging applications, in which case the data lie on the cortical surface (i.e. the surface of the brain): see eg.~\cite{mejia2020bayesian}. This is also the case when considering global data in environmental applications, for which the data lie on a sphere representing our planet (see eg. \cite{rayner2020eustace}), and on this surface transport phenomena (due to winds and currents for instance) can affect the structure of the data (see eg. \Cref{fig:sulfate}). Hence the main motivations of this work: proposing models for spatio-temporal GRFs flexible enough to represent complex patterns of correlations in data lying on compact meshed surfaces without boundary, but simple enough so that numerically efficient algorithms for their simulation, inference and prediction can be derived.

\begin{figure}
	\centering
\centering
{\includegraphics[height=0.18\textheight]{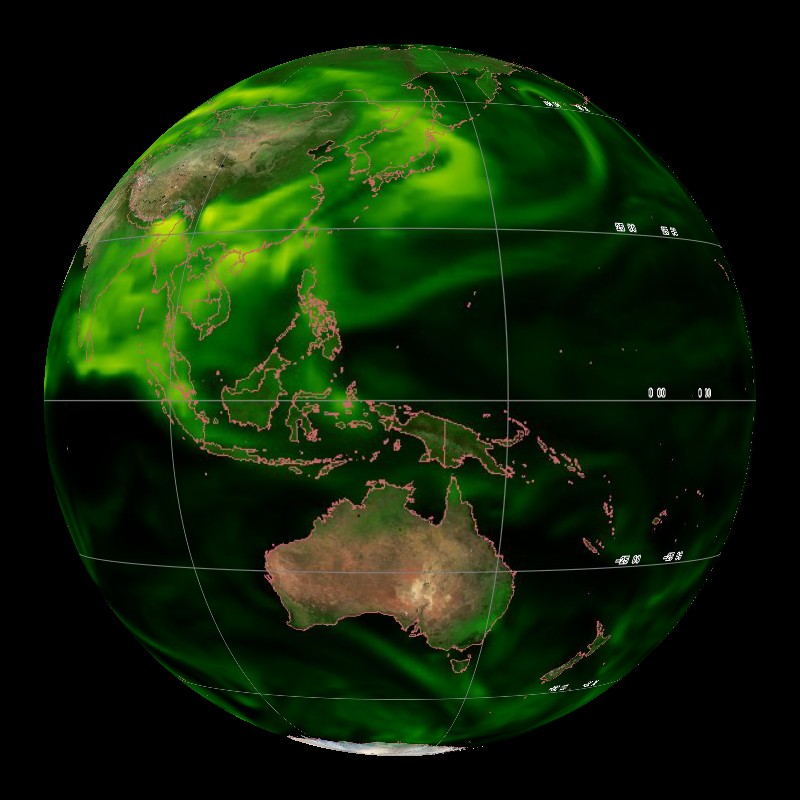}}
	\caption{Example of spatio-temporal data distributed on the sphere: \link{https://mike-pereira.github.io/PRED_STRF/sulfate}{Simulation of the presence of sulfate in the Earth atmosphere} (Source: \href{https://sos.noaa.gov/catalog/datasets/aerosols-sulfate/}{NASA Global Modeling and Assimilation Office})}\label{fig:sulfate}
\end{figure}%

\section{Context and state-of-the art}\label{sec:context}

When it comes to defining and building spatio-temporal GRFs to model data, there exists two main approaches: either through the definition of valid covariance functions (which are then \q{fitted} to the data), or through dynamical models describing the evolution in space and time of the GRFs. Let us review the principle and limits of both approaches.

\subsection{The covariance-based approach} Since we consider GRFs, tasks such as sampling from (un)conditional distributions, predictions (through conditional expectations) and likelihood-based inference can all be performed by solving linear systems or adequately factorizing covariance matrices of the field \cite{wackernagel2003multivariate}. Hence, the most straightforward (and classical) approach to spatio-temporal geostatistical modeling consists in specifying valid space–time covariance functions and fitting them to the data, where the parameters determine the scale and structure of spatial and temporal dependence, while ensuring positive definiteness of the associated covariance matrices. Consequently, extensive literature on which covariance functions may be used to model spatio-temporal data, even with complex correlation patterns, is available (see \cite{porcu202130,porcu2018modeling,chen2021space} for recent reviews).

Nonetheless, the covariance-based approach has two main drawbacks.
First, the matrix factorizations required in sampling, prediction and inference tasks have a complexity that scales as the cube of the number of observations, thus making them unfeasible when this last number is large. To circumvent this, simplifying assumptions on the covariance model must be made, such as the separability of space and time dependencies or stationarity. However, these restrictive assumptions fail to capture spatial heterogeneity, regime shifts and complex space–time dependence patterns. As a consequence, they may substantially reduce model realism, particularly for environmental processes where dynamics often vary across regions and evolve over time.
Secondly, since they rely on Euclidean or arc-length distances, most of the covariance models available in the literature are restricted to the setting where the spatial domain is either Euclidean or the sphere. Hence, they hardly generalize to other surfaces.

\subsection{The dynamic approach}

As foretold by its name, this approach relies on models of the dynamic evolution of the GRF in time and space. Such models offer an elegant framework for describing spatio-temporal evolution, especially when the underlying dynamics are complex. In addition, the model parameters are often interpretable and explicitly control the temporal evolution and spatial propagation mechanisms of the process.

Early approaches include vector autoregressive and moving-average formulations for spatially indexed time series, extending classical time-series ideas to the multivariate spatial setting \citep{cressie2011}. Integral–difference equation (IDE) models represent evolution via convolution operators that describe spatial propagation and redistribution over time, providing a flexible framework for large-scale environmental systems (e.g., \cite{wikle99, zammit2020}).

In this work, we specifically focus on dynamical models that take the form of stochastic partial differential equations (SPDE), the solutions of which are GRFs. This \q{SPDE approach} to GRF modeling has been popularized by \citet{lindgren2011explicit}, and builds on a result from \citet{whittle1954stationary} which states that isotropic GRFs $Z$ on $\R^d$ ($d\in\mathbb{N}$) with a Matérn covariance function are stationary solutions of the SPDE given by
\begin{equation}
	(\kappa^2-\Delta)^{\alpha/2} Z = \tau \mathcal{W},
	\label{eq:spde_space}
\end{equation}
where $\kappa >0$, $\alpha>d/2$, $\tau >0$, and $(\kappa^2-\Delta)^{\alpha/2}$ is a pseudo-differential operator (defined as $(\kappa^2-\Delta)^{\alpha/2}[\cdot]=\mathcal{F}^{-1}[\bm w\mapsto (\kappa^2+\Vert \bm w\Vert^2)^{\alpha/2}\mathcal{F}[\cdot](\bm w)]$), and $\mathcal{W}$ is a Gaussian white noise on $\R^d$. Solving numerically this SPDE using stochastic finite elements allows to directly obtain an expression for the precision matrix (i.e. the inverse of the covariance matrix) of a GRF with Matérn covariance. Then, at the price of a minor approximation (called mass lumping), these expressions yield a Gaussian Markov random field representation of the GRF, characterized by a sparse precision matrix \cite{rue2005gaussian,lindgren2022spde}. This in turn results in significant computational gains since sparse matrix algorithms can be used to deal with the matrix factorizations and linear system solving involved when performing sampling, prediction and inference \cite{lindgren2011explicit,davis2006direct}.

The SPDE approach has been extensively used to model spatial data on Euclidean domains (see \cite{lindgren2022spde} for a recent review), and extended to model spatial data on surfaces (see eg.~\cite{mejia2020bayesian,bonito2022numerical,coveney2019probabilistic}) and more generally on Riemannian manifolds (see eg. \cite{herrmann2020multilevel,lang2023galerkin}) by replacing the Laplace operator $-\Delta$ in SPDE~\eqref{eq:spde_space} by a Laplace--Beltrami operator. 

Extensions of the SPDE approach to the spatio-temporal setting have also been proposed. \citet{cameletti2013spatio} propose an approach where the spatial SPDE is coupled with an AR(1) process in time, thus yielding a separable model. Non-separable models based on a direct generalization of SPDE~\eqref{eq:spde_space} have been proposed by \citet{lindgren2024diffusion} and \citet{rayner2020eustace}, who consider the solutions of diffusion SPDE defined on Euclidean domains and on surfaces by
\begin{equation*}
	\frac{\partial Z}{\partial t}+(\kappa^2-\Delta)^{\alpha/2} Z = \tau \mathcal{W}_T \otimes \mathcal{Y}_S,
\end{equation*}
where $\mathcal{W}_T$ denotes a temporal white noise and $\mathcal{Y}_S$ denotes either a white or colored noise in space, whose detailed mathematical formulation is presented in \Cref{sec:model}. For reference, we provide in \Cref{fig:diff-sph} a simulation of a solution to this SPDE on the sphere. However, these models result in random fields with even covariance functions, meaning that changing the sign of the spatial or temporal lag at which the covariance is evaluated does not change the value of the covariance. Consequently, these models are incapable of accounting for transport effects such as advection phenomena (which are intrinsically asymmetrical in time). In the Euclidean setting, extensions of the SPDE approach allowing to deal with asymmetries in the covariance structure have been proposed by \citet{liu2022statistical,sigrist2015stochastic,clarotto2024spde}. Nonetheless, to the best of our knowledge, the generalization of such models to more complex geometries in spatio-temporal statistics is left open. We also note that extending SPDE models to a spatio-temporal framework is computationally challenging, as the classical use of Cholesky factorization for sparse matrices can suffer from significant fill-in, leading to increased computational and memory costs.

\begin{figure}
	\centering
	{\includegraphics[width=0.7\textwidth]{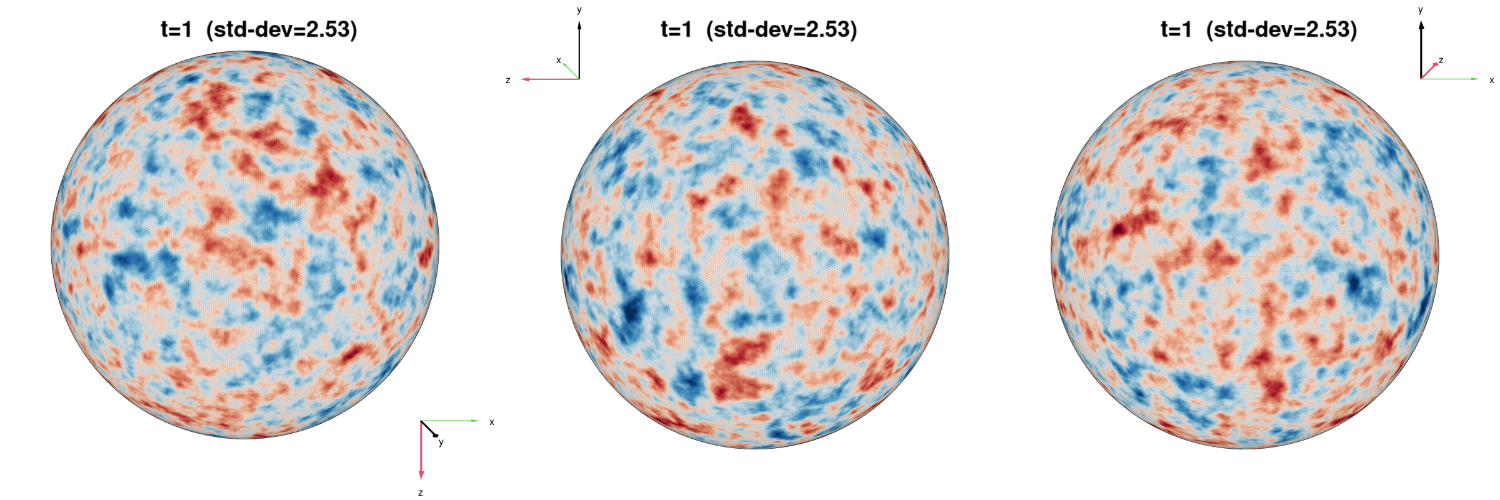}}\\
	\vspace{-1em}
	\includegraphics[width=0.2\textwidth]{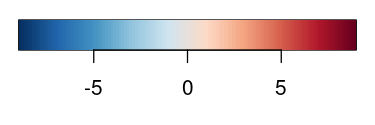}
	\caption{\link{https://mike-pereira.github.io/PRED_STRF/diff-sphere}{Simulation a spatio-temporal diffusion SPDE on the sphere represented from three different viewpoints on the surface (t represents the time step, std-dev the standard deviation of the field value across the surface).}}\label{fig:diff-sph}
\end{figure}

\subsection{Proposed approach}

The aim of this work is to propose new models for spatio-temporal data on meshed surfaces based on the SPDE modeling approach. To do so, we generalize the approach proposed by \citet{clarotto2024spde} to model spatio-temporal data using an advection-diffusion SPDE to compact smooth orientable Riemannian manifolds of dimension 2, without boundary.

Advection-diffusion models, unlike pure diffusion models, include a general vector field tangent to the Riemannian manifold, allowing them to capture directional transport and better explain environmental phenomena such as wind-driven or water-driven movement of particles.

Then, following the framework  proposed in \cite{lang2023galerkin}, we define a counterpart to the advection-diffusion SPDE on a meshed triangulation of the manifold using Galerkin approximations of the differential operators. We show how this approach allows to easily propose scalable algorithms for the simulation, inference and prediction of Gaussian random fields that are solutions to the resulting SPDE.

We demonstrate our methodology using aerosol optical depth (AOD) data on a high-resolution global grid. By modeling AOD as a spatio-temporal Gaussian process on a Riemannian manifold with wind-driven advection, our approach naturally incorporates the geometry of the Earth’s surface and the influence of atmospheric transport, enabling more accurate and physically informed predictions of dust concentration.

In this example, we consider advection to be driven by the observed, temporally varying wind vector field; however, alternative parameterizations of the advection field could also be used.

\section{Advection-diffusion SPDE on a Riemannian manifold}\label{sec:model}

We start by defining the advection-diffusion SPDE on a Riemannian manifold that will be considered in this work. Let $(\mathcal{M}, g)$ denote a compact orientable smooth Riemannian manifold of dimension $d=2$, without boundary. Let $T>0$, we denote by $\lb$ the Laplace--Beltrami operator of the surface.
We consider the following advection-diffusion SPDE on the domain $[0,T]\times \mathcal{M}$:
\begin{equation}
	\frac{\partial \mathcal{Z}}{\partial t} + \frac{1}{c}\big(P(\lb)\mathcal{Z} + \dv(\mathcal{Z}\gamma_t)\big)=\frac{1}{c}\mu_S+\frac{\tau}{\sqrt{c}}\mathcal{W}_T\otimes \mathcal{Y}_S,
	\label{eq:spde}
\end{equation}
where
\begin{itemize}
	\item $P$ is a polynomial such that for all $\lambda \ge 0$, $P(\lambda)\ge C_P$,  for some fixed $C_P >0$;
	\item $(t,s)\in[0,T]\times \mathcal{M}\mapsto \gamma_t(s)$ is a time-dependent smooth, bounded, \textit{divergence-free} vector field in $T\mathcal{M}$ (the so-called tangent bundle of $\mathcal{M}$, i.e. the collection of all tangent spaces of a manifold, organized smoothly into a single space), meaning that for any $t\in[0,T]$, $\gamma_t$ is a smooth vector field satisfying $\dv(\gamma_t)=0$, and the map $t\mapsto \gamma_t$ is smooth;
	\item $\mu_S \in L^2(\mathcal{M})$ is a deterministic, possibly null, time-invariant source term, capturing constant inputs that drive the system independently of advection and diffusion;
	\item $\mathcal{W}_T\otimes \mathcal{Y}_S$ is a space-time separable stochastic forcing defined as a Gaussian process whose covariance operator is the tensor product of the covariance of a time-dependent Gaussian white noise $\mathcal{W}_T$ and a space-dependent colored noise $\mathcal{Y}_S=f_S(-\Delta_{\mathcal{M}})\mathcal{W}_S$,
	where $f_S : \R_+ \rightarrow \R$ is a function for which there exists $\beta > d/4$ such that  $ \lambda \mapsto \lambda^{\beta}f_S(\lambda)$ is bounded on $\R_+$ (cf. \Cref{sec:funcL} for a proper definition of the operator $f_S(-\Delta_{\mathcal{M}})$  and of the colored noise);
    \item $c>0$ is a time-scaling parameter, and $\tau>0$ is a variance-scaling parameter;
    \item The initial condition $\mathcal{Z}(0,\cdot)$ is a $L^2(\mathcal{M})$-valued random variable.
\end{itemize} 

More precisely, the  forcing  term $\mathcal{W}_T\otimes \mathcal{Y}_S$ is defined as a generalized random field acting on functions of $L^2([0,T])\times L^2(\mathcal{M})$. Let $\langle\cdot,\cdot\rangle$ (resp.  $\langle\cdot,\cdot\rangle_T$) denote the usual inner product on $L^2(\mathcal{M})$ (resp. $L^2([0,T])$). Then for any $(\phi_T, \phi_S), (\varphi_T, \varphi_S)\in L^2([0,T])\times L^2(\mathcal{M})$, $\mathcal{W}_T\otimes \mathcal{Y}_S(\phi_T,\phi_S)$ and $\mathcal{W}_T\otimes \mathcal{Y}_S(\varphi_T,\varphi_S)$ are  centered Gaussian random variables, and 
\begin{equation*}
	\cov\bigg[\mathcal{W}_T\otimes \mathcal{Y}_S(\phi_T,\phi_S),\mathcal{W}_T\otimes \mathcal{Y}_S(\varphi_T,\varphi_S)\bigg]=\langle \phi_T, \varphi_T\rangle_T \langle f_S(-\Delta_{\mathcal{M}})\phi_S, f_S(-\Delta_{\mathcal{M}})\varphi_S\rangle.
\end{equation*}
Hence, SPDE~\eqref{eq:spde} can be rewritten in a perhaps more familiar form, for the readers used to stochastic differential equations (SDE) in infinite dimensions~\cite{da2014stochastic}, as 
\begin{equation}\label{eq:spde-sde}
	\dd \mathcal{Z} =-\frac{1}{c}\big(P(\lb)\mathcal{Z} + \dv(\mathcal{Z}\gamma_t)\big)\dd t 
	+\frac{1}{c}\mu_S+\frac{\tau}{\sqrt{c}}\dd \widetilde{\mathcal{W}}_t,
\end{equation}
where $\lbrace\widetilde{\mathcal{W}}_t\rbrace_{t\in [0,T]}$ denotes a cylindrical Wiener process (cf. Appendix~\ref{sec:analogy} for more details). Since the bilinear form $a_t(u,v) = \langle P(\lb)u + \dv(u\gamma_t), v\rangle$ is coercive (i.e., $\exists k\in\R, k'>0, \ a_t(u,u)\geq k'\Vert P(-\Delta)^{1/2}u \Vert_{L^2(\mathcal{M})}^2-k\Vert u \Vert_{L^2(\mathcal{M})}^2, \ \forall u, t$), and since the covariance operator of $\mathcal{Y}_S$ is trace-class (owing to the assumption on $f_S$), the existence and
uniqueness of a variational solution to SPDE~\eqref{eq:spde-sde} is ensured \citep[Chapter 2]{krylov1981stochastic}. In particular, the solution is continuous in time and has at least the spatial regularity induced by the norm $\Vert P(-\Delta)^{1/2}\cdot\Vert_{L^2(\mathcal{M})}$.

\begin{rem}
    Existence and uniqueness of the solution are also ensured when vector fields $\gamma_t$ with a non-zero divergence are considered. However, such vector fields should satisfy the condition
	\begin{equation}\label{eq:cond_gamma}
		\inf_{t\in [0,T]} C_P + \frac{1}{2}\dv(\gamma_t)\ge \eta_0
	\end{equation}
	for some fixed $\eta_0 >0$. This condition ensures the coercivity of the bilinear form $a_t(u,v)$ described above.
\end{rem}

\begin{rem}
	By linearity, a solution to~\eqref{eq:spde-sde} can be expressed as the sum $\mathcal{Z}=z_0+\mathcal{X}$, where  $z_0$ is a solution to the deterministic differential equation
	\begin{equation*}
		\dd z_0 =-\frac{1}{c}\big(P(\lb)z_0 + \dv(z_0\gamma_t)\big)\dd t 
		+\frac{1}{c}\mu_S,
	\end{equation*}
and $\mathcal{X}$ is a solution to the SDE
\begin{equation*}
	\dd \mathcal{X} =-\frac{1}{c}\big(P(\lb)\mathcal{X} + \dv(\mathcal{X}\gamma_t)\big)\dd t 
	+\frac{\tau}{\sqrt{c}}\dd \widetilde{\mathcal{W}}_t.
\end{equation*}
where in particular, $\e[\mathcal{Z}]=z_0$ and $\e[\mathcal{X}]=0$.
\end{rem}

Note that we can consider $P$ in \eqref{eq:spde} as a more general polynomial than  $(\kappa^2 - \Delta)$. This generalization allows us to extend our framework to operators associated with oscillating Whittle–Matérn fields (see, e.g., \cite[Section 3.3]{lindgren2011explicit}). It also enables the treatment of SPDEs linked to PDEs involving iterated Laplacians, which can appear as linearizations of non-linear operators (see eg. \cite{larsson2011finite}), or hyperviscosity terms, introduced for instance in fluid dynamics (see e.g. \cite{rot2026adaptive}).

Besides, we also restrict the model to divergence-free vector fields, which effectively means that we do not model, through the advection term, certain transport effects (eg. local accumulation or depletion due to the ``divergence'' part of the vector field). But our goal is not to reproduce the detailed dynamics exactly, but rather to inform the construction of a Gaussian process that captures the large-scale behavior and key statistical properties of the field. 
To that end, retaining only the divergence-free component of the wind is a good option since it will still preserve the main advection patterns. As for the discarded part of the wind field, along with other any unmodeled effects, we effectively treat them as a stochastic spatio-temporal source term.

We conclude this section with a few words about the vector fields $\gamma_t$, $t\in[0,T]$. A complete characterization of smooth vector fields on a manifold is given by the Helmholtz-Hodge decomposition theorem, which states that vector fields may be uniquely decomposed as the sum of a irrotational component (whose $\curl$ is zero), a divergence-free component (whose $\dv$ is zero) and a harmonic component \cite{bhatia2012helmholtz}. In particular, since we are considering manifolds of dimension $2$, this decomposition yields a natural parametrization for vector fields.
Starting from two scalar functions $\xi, \chi : {\mathcal{M}} \rightarrow \R$, we can obtain a tangent vector field $v$ as 
\begin{equation}
	v(s)=\nabla\xi(s) + \vec{n}(s)\times \nabla \chi(s) \in T_s\mathcal{M},
	\label{eq:pgamma2}
\end{equation}
with $\vec{n}(s)$ denoting the  vector normal to $\mathcal{M}$ and pointing outwards \cite{molina2018rapid}. In particular, $\nabla\xi$ then denotes the irrotational component of $v$ and $\vec{n}\times \nabla \chi$ its divergence-free component in the Helmholtz-Hodge decomposition \cite{bhatia2012helmholtz}. 
Hence, a divergence-free vector field would only require the scalar potential function $\chi$ to be specified/parametrized. Considering smooth time-dependent scalar functions $\xi$ and $\chi$ then allows to tackle the parametrization of vector fields such time-dependent vector fields $\gamma_t$ through the easier angle of parameterizing scalar functions on the manifold $\mathcal{M}$.

\section{Advection-diffusion SPDE on a meshed surface}\label{sec:theory}

	\subsection{Definition and discretization of the SPDE}

Let  $\mathcal{M}_h$ be a discretization of the manifold $\mathcal{M}$ into a polyhedral surface with mesh size $h>0$ (by triangulation). In particular, following the surface finite element approach, we assume that the nodes of the polyhedral surface $\mathcal{M}_h$ lie on the surface $\mathcal{M}$, and that $\mathcal{M}_h$ is close enough to $\mathcal{M}$ that there exists a smooth and invertible function $\ell : \mathcal{M}_h \rightarrow \mathcal{M}$ that maps any point of $\mathcal{M}_h$ to a unique point of $\mathcal{M}$ and vice-versa (i.e. $\mathcal{M}_h$ can be \q{lifted} to $\mathcal{M}$). An illustration of such an approximation for the sphere is provided in \Cref{fig:mesh_sphere}

\begin{figure}
    \centering
    \includegraphics[width=0.3\linewidth,trim={5cm 5cm 5cm 5cm},clip]{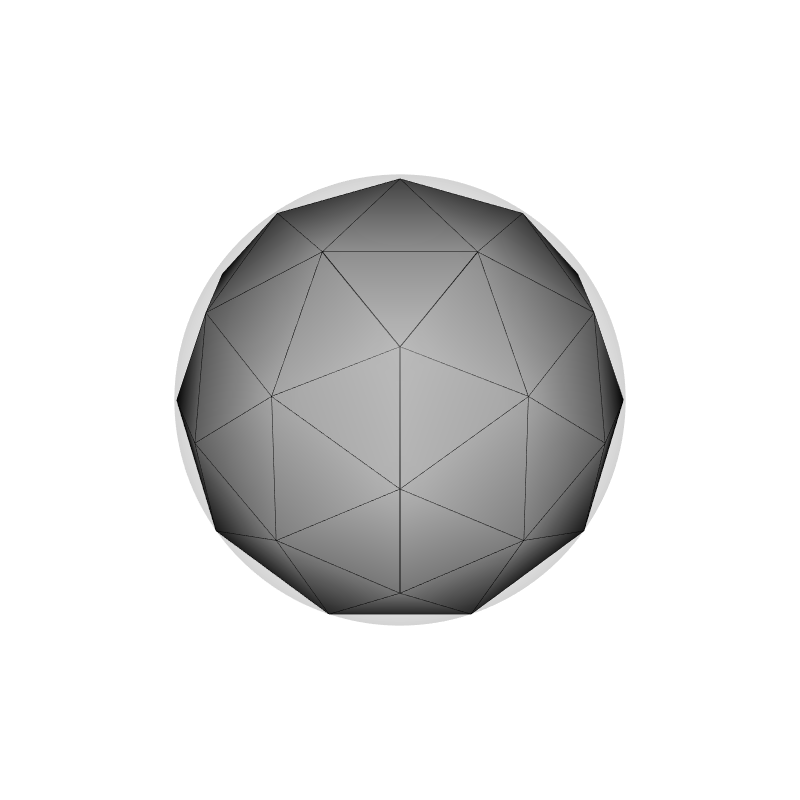}
    \caption{Illustration of a (coarse) polyhedral approximation $\mathcal{M}_h$ (in dark gray) when the manifold $\mathcal{M}$ is the sphere (light gray).}
    \label{fig:mesh_sphere}
\end{figure}

Let $ \lbrace \psi_1, \dots, \psi_N\rbrace \subset H^1(\mathcal{M}_h)$ be the \textit{linear} finite element basis associated with $\mathcal{M}_h$, where $N$ is the number of nodes of the triangulation.  $H^1(\mathcal{M}_h)$ denotes a Sobolev space, i.e., roughly speaking, the space of smooth enough functions on the surface $\mathcal{M}_h$, with finite values and gradients. Let then $V_N=\spn{\psi_k : 1\le k\le N}$ and $V_N^\ell =\spn{\psi_k^\ell : 1\le k\le N} \subset H^1(\mathcal{M})$ be the set of lifted finite element functions : for any $1\le k\le N$,  $\psi_k^\ell = \psi_k\circ\ell^{-1}$. 

In order to formulate a discretization of the advection-diffusion SPDE~\eqref{eq:spde}, we
look for an approximation of the solution $\mathcal{Z}$ of~\eqref{eq:spde} that can be expressed as a $V_N^\ell$-valued random variable (at any time). We obtain it by approximating each term in the SPDE by operators or variables that \q{live} on the lifted finite element space $V_N^\ell$, using so-called Galerkin approximations (cf. Appendix~\ref{sec:Galerkin}). 

Let $\dv_N(\gamma_t\,\cdot)$  and $-\Delta_{N}$ be the Galerkin approximations of, respectively, $\dv(\gamma_t\,\cdot)$ and $\lb$ over $V_N^\ell$ (cf. Appendix~\ref{sec:Galerkin}). Let $\lbrace \lambda_k^{(N)}\rbrace_{1\le k\le N}$ denote the eigenvalues of $-\Delta_N$, and  $\lbrace e_k^{(N)}\rbrace_{1\le k\le N}$ be a set of associated eigenfunctions forming an orthonormal basis of $V_N^\ell$.
First, the colored noise $\mathcal{Y}_S$ is approximated by a $V_N^\ell$-valued random variable $Y_S$ defined by
\begin{equation}\label{eq:noise_sd}
	{Y}_S = f_S(-\Delta_N)W_S=\sum_{k=1}^{N}w_kf_S(\lambda_k^{(N)}) e_k^{(N)},
\end{equation}
where $\lbrace w_k\rbrace_{1\le k\le N}$ a sequence of independent standard Gaussian variables and $f_S$ is defined as in the previous section. Note that the definition of $Y_S$ is equivalent to the definition of the colored noise $\mathcal{Y}_S$, after replacing $H$ by $V_N^\ell$ and the Laplace--Beltrami operator $\lb$ by its Galerkin approximation $-\Delta_N$. 

 Then, we define an approximation ${Z}^\ell(t,\cdot)\in V_N^\ell$ of the field $\mathcal{Z}(t,\cdot)$ by considering  SPDE~\eqref{eq:spde} and replacing the operators $\lb$ and $\dv(\gamma_t\,\cdot)$ by their Galerkin approximations, thus giving
\begin{equation}
	\frac{\partial{{Z}^\ell}}{\partial t} + \frac{1}{c}\bigg(P(-\Delta_N){{Z}^\ell} + \dv_N(\gamma_t {Z}^\ell)\bigg)=\frac{1}{c}M_S+\frac{\tau}{\sqrt{c}}\mathcal{W}_T\otimes {Y}_S, \quad t\in [0,T],
	\label{eq:spde_discr}
\end{equation}
where $M_S$ denotes the $L^2$-orthogonal projection of $\mu_S$ onto $V_N^\ell$, and $\mathcal{W}_T\otimes {Y}_S$ is defined in the same way as its counterpart $\mathcal{W}_T\otimes \mathcal{Y}_S$, i.e. as a generalized random field acting on functions of $L^2([0,T])\times V_N^\ell$ (which can be identified with a cylindrical Wiener process, cf. Appendix~\ref{sec:analogy}).

We now discretize~\eqref{eq:spde_discr} in time, by applying an implicit Euler scheme with time step $\delta t>0$.  We start by discretizing the time interval $[0,T]$  into $K+1$ regular time steps of size $\delta t=T/K$, and write $t_k=k\delta t$ for $k\in \lbrace 0, \dots, K\rbrace$.
Let then ${Z}^{(k)}={Z}^\ell(t_k,\cdot)$ denote the approximation of the spatial trace of the solution to SPDE~\eqref{eq:spde_discr} at time $t_k$, and take $\gamma^{(k)}=\gamma_{t_k}$. We start from an initial condition ${Z}^{(0)}={Z}^\ell(0,\cdot)$ taken as a $V_N^\ell$-valued random variable, for instance the projection of the (random) initial condition $\mathcal{Z}(0,\cdot)$ of the original SPDE~\eqref{eq:spde} onto $V_N^\ell$. 
Then, we have the recursion 
\begin{equation}
	{Z}^{(k+1)}-{Z}^{(k)}
	+\frac{\delta t}{c}\bigg(P(-\Delta_N){Z}^{(k+1)} + \dv_N(\gamma^{(k)} Z^{(k+1)})\bigg)
	=\frac{\delta t}{c}M_S+\tau\sqrt{\frac{\delta t}{c}}{Y}^{(k+1)}, \quad 0\le k<K,
	\label{eq:euler_discr}
\end{equation}
where  $\lbrace Y^{(k)}\rbrace_{k\in\N}$ is a sequence of independent samples of $Y_S$.

To facilitate computations and simulations, the following proposition provides an explicit matrix-based formulation of the Euler recursion. First, let us introduce some standard finite element discretization matrices. We denote by $\bm C$, $\bm R$ and $\bm B^{(k)}$ ($0\le k< K$) the matrices whose entries are respectively given by
\begin{equation}\label{eq:coef}
	C_{ij}=\langle \psi_i^\ell, \psi_j^\ell\rangle, 
	\quad R_{ij}=\langle \nabla\psi_i^\ell, \nabla\psi_j^\ell\rangle,
	\quad B_{ij}^{(k)}=\langle \psi_i^\ell, \dv(\gamma^{(k)} \psi_j^\ell)\rangle, \quad 1\le i,j\le N,
\end{equation}
and let $\sqrt{\bm C}\in\R^{N\times N}$ such that $\bm C = \sqrt{\bm C}(\sqrt{\bm C})^{T}$. We also introduce the scaled matrices $\tilde{\bm R}$ and $\tilde{\bm B}^{(k)}$ defined by
\begin{equation*}
	\widetilde{\bm R} = (\sqrt{\bm C})^{-1}\bm R(\sqrt{\bm C})^{-T}, \quad 
	\widetilde{\bm B}^{(k)}=(\sqrt{\bm C})^{-1}\bm B^{(k)}(\sqrt{\bm C})^{-T}.
\end{equation*}

\begin{prop}\label{prop:rec_ad}
	Let $\bm \mu^{(0)}=(\mu_1^{(0)},\dots,\mu_N^{(0)})$, $\bm m=(m_1,\dots,m_N)$ and, for $0\le k\le K$, let $\bm z^{(k)}=(z_1^{(k)},\dots, z_N^{(k)})^T$ such that the time-discretization  $\lbrace {Z}^{(k)}\rbrace_{0\le k\le K}$ of SPDE~\eqref{eq:spde_discr} satisfy
	\begin{equation}
		M_S=\sum_{j=1}^N m_j\psi_j^\ell, \quad
		\e[Z^{(0)}]=\sum_{j=1}^N \mu_j^{(0)}\psi_j^\ell, \quad\text{and}\quad
		Z^{(k)}=\sum_{j=1}^N z_j^{(k)}\psi_j^\ell.
		\label{eq:dd}
	\end{equation}

	We also denote, for $0\le k<K$, by ${\bm \Gamma}^{(k)}$ the matrix defined by
	\begin{equation}
		{\bm \Gamma^{(k)}}=\bm I  +\frac{\delta t}{c} \big(P(\widetilde{\bm R}) + \widetilde{\bm B}^{(k)}\big).
		\label{eq:Gamma}
	\end{equation}
	Let $\bm x^{(0)}=(\sqrt{\bm C})^T \bm z^{(0)}$. Then we have the following recursion for $0\le k < K$:
	\begin{equation}
	\left\lbrace\begin{aligned}
			& \bm\Gamma^{(k)}\bm x^{(k+1)}
			=\bm x^{(k)}+\frac{\delta t}{c}(\sqrt{\bm C})^T \bm m+{f}_{\delta t}(\widetilde{\bm R}) \bm w ^{(k+1)},\\
			&\bm z^{(k+1)}=(\sqrt{\bm C})^{-T} \bm x^{(k+1)},
		\end{aligned}\right.
		\label{eq:rec_ad}
	\end{equation}
	where $\lbrace\bm w ^{(k)}\rbrace_{k\in\N}$ is a sequence of independent centered Gaussian vectors with covariance matrix $\bm I$, and ${f}_{\delta t}$ is the function defined by
	\begin{equation*}
		{f}_{\delta t}(\lambda)=\tau\sqrt{\frac{\delta t}{c}}f_S(\lambda), \quad \lambda\ge 0,
	\end{equation*}
and $f_{\delta t}(\widetilde{\bm R})$ is the associated matrix function (cf. Appendix~\ref{sec:funcMat} for a definition of matrix functions). 
\end{prop}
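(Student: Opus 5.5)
We plan to test the Euler recursion~\eqref{eq:euler_discr} against each basis function $\psi_i^\ell$, $1\le i\le N$, which turns it into a linear system for the coefficient vectors, and then to change variables via $\sqrt{\bm C}$. Recall that for $u\in V_N^\ell$, the Galerkin approximations are defined by $\langle -\Delta_N u, v\rangle = \langle \nabla u,\nabla v\rangle$ and $\langle \dv_N(\gamma^{(k)}u),v\rangle=\langle \dv(\gamma^{(k)}u),v\rangle$ for all $v\in V_N^\ell$ (cf. Appendix~\ref{sec:Galerkin}). Writing $u=\sum_j u_j\psi_j^\ell$ and $-\Delta_N u=\sum_j a_j\psi_j^\ell$, testing against $\psi_i^\ell$ gives $\bm C\bm a=\bm R\bm u$, so $-\Delta_N$ is represented on coefficients by $\bm C^{-1}\bm R$. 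Similarly, $\dv_N(\gamma^{(k)}\cdot)$ is represented by $\bm C^{-1}\bm B^{(k)}$, where the index convention in~\eqref{eq:coef} places $j$ on the trial function, as needed.

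\textbf{Step 1: polynomials of the Laplacian.} The key observation is that $\bm C^{-1}\bm R=(\sqrt{\bm C})^{-T}\widetilde{\bm R}(\sqrt{\bm C})^{T}$. Hence, by induction on powers, $P(\bm C^{-1}\bm R)=(\sqrt{\bm C})^{-T}P(\widetilde{\bm R})(\sqrt{\bm C})^{T}$, and likewise $\bm C^{-1}\bm B^{(k)}=(\sqrt{\bm C})^{-T}\widetilde{\bm B}^{(k)}(\sqrt{\bm C})^{T}$. Multiplying the coefficient form of~\eqref{eq:euler_discr} on the left by $(\sqrt{\bm C})^{T}$ and setting $\bm x^{(k)}=(\sqrt{\bm C})^T\bm z^{(k)}$ therefore yields the left-hand side $\bm\Gamma^{(k)}\bm x^{(k+1)}-\bm x^{(k)}$, with $\bm\Gamma^{(k)}$ as in~\eqref{eq:Gamma}. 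The source term $\frac{\delta t}{c}M_S$ has coefficient vector $\frac{\delta t}{c}\bm m$, which becomes $\frac{\delta t}{c}(\sqrt{\bm C})^T\bm m$. The second line of~\eqref{eq:rec_ad} is just the inverse of this change of variables.

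\textbf{Step 2: the noise term, which is the main point requiring care.} We need to show that the coefficient vector $\bm y$ of $Y^{(k+1)}$ satisfies $(\sqrt{\bm C})^T\bm y\overset{d}{=} f_S(\widetilde{\bm R})\bm w$ with $\bm w\sim\mathcal N(0,\bm I)$. The map $u\mapsto(\sqrt{\bm C})^T\bm u$ is an isometry from $(V_N^\ell,\langle\cdot,\cdot\rangle)$ onto $(\R^N,\cdot)$, because $\Vert u\Vert^2=\bm u^T\bm C\bm u=\Vert(\sqrt{\bm C})^T\bm u\Vert^2$. Under this isometry, $-\Delta_N$ becomes $(\sqrt{\bm C})^{T}\bm C^{-1}\bm R(\sqrt{\bm C})^{-T}=\widetilde{\bm R}$, which is symmetric. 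Consequently, the orthonormal eigenfunctions $e_k^{(N)}$ map to orthonormal eigenvectors $\bm v_k$ of $\widetilde{\bm R}$ with the same eigenvalues $\lambda_k^{(N)}$. By~\eqref{eq:noise_sd}, $(\sqrt{\bm C})^T\bm y=\sum_k w_kf_S(\lambda_k^{(N)})\bm v_k=f_S(\widetilde{\bm R})\bm V\bm w$, where $\bm V=[\bm v_1,\dots,\bm v_N]$ is orthogonal. Then $\bm V\bm w$ is again standard Gaussian, and its independence across $k$ is inherited from the $Y^{(k)}$. Multiplying by $\tau\sqrt{\delta t/c}$ produces $f_{\delta t}(\widetilde{\bm R})\bm w^{(k+1)}$ and completes the recursion~\eqref{eq:rec_ad}.

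The only delicate points are the consistent choice of transpose convention for $\sqrt{\bm C}$ and the fact that the matrix function is defined via the spectral decomposition of the symmetric matrix $\widetilde{\bm R}$ (Appendix~\ref{sec:funcMat}), which is exactly what the isometry argument in Step 2 uses.
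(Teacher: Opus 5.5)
Your proposal is correct and follows essentially the same route as the paper: test the Euler step against each $\psi_i^\ell$, identify the coefficients of $P(-\Delta_N)Z^{(k+1)}$ as $(\sqrt{\bm C})^{-T}P(\widetilde{\bm R})(\sqrt{\bm C})^T\bm z^{(k+1)}$, and change variables via $\bm x^{(k)}=(\sqrt{\bm C})^T\bm z^{(k)}$. The differences are minor: you get the polynomial term from the similarity $\bm C^{-1}\bm R=(\sqrt{\bm C})^{-T}\widetilde{\bm R}(\sqrt{\bm C})^T$ rather than from the spectral expansion through the isometry $E$, and you derive the noise identity $(\sqrt{\bm C})^T\bm y^{(k)}=f_S(\widetilde{\bm R})\bm w^{(k)}$ directly from that isometry, whereas the paper cites \citep[Theorem 3.4]{lang2023galerkin}.
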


The proof is available in Appendix~\ref{proof:rec_ad}.

In practice, the approximate solution $Z^\ell$ in \eqref{eq:spde_discr} is computed directly on the polyhedral surface $\mathcal{M}_h$, and lifted to the original surface afterwards. To do so, we simply consider the recursion in \Cref{prop:rec_ad}, but replace the finite element matrices $\bm C$, $\bm R$ and $\bm B^{(k)}$ by their approximations on $\mathcal{M}_h$. These approximations are obtained by replacing the surface integrals on $\mathcal{M}$  by integrals computed on its polyhedral approximation $\mathcal{M}_h$ when computing the entries of $\bm C$, $\bm R$ and $\bm B^{(k)}$, thus giving
\begin{equation}\label{eq:coef-approx}
	C_{ij}=\langle \psi_i, \psi_j\rangle_{\mathcal{M}_h}, 
	\quad R_{ij}=\langle \nabla\psi_i, \nabla\psi_j\rangle_{\mathcal{M}_h},
	\quad B_{ij}^{(k)}=\langle \psi_i, \dv(\gamma^{(k)} \psi_j)\rangle_{\mathcal{M}_h}, \quad 1\le i,j\le N,
\end{equation}
where $\langle\cdot,\cdot\rangle_{\mathcal{M}_h}$ denotes the inner product over $L^2({\mathcal{M}_h})$, and the gradient and $\dv$ operator are defined along each (planar) face of $\mathcal{M}_h$. 
Such integrals can be easily computed on each face of $\mathcal{M}_h$ and then summed over the entire mesh. The resulting field $Z$, which lives in $V_N\subset H^1(\mathcal{M}_h)$ is then lifted to $\mathcal{M}$ to obtain a field $Z^\ell$ in $V_N^\ell\subset H^1(\mathcal{M})$. Details and convergence results linked to this approach can be found in \cite{lang2023galerkin}. In the remainder of the text, and unless specified otherwise, we assume that entries are computed according to this approach.

Note that within this approach, the vector field $\gamma_t$ is approximated by a vector field on the polyhedral surface $\mathcal{M}_h$ by identifying each triangle $T$ composing $\mathcal{M}_h$ with the tangent space of $\mathcal{M}$ at the lifted barycenter $s_T\in \mathcal{M}$ of $T$. Then, the vector field $\gamma_t$ is replaced by the vector field of $\mathcal{M}_h$, constant on each triangle $T$ forming $\mathcal{M}_h$, and taking the values $\gamma_t(s_T)$ on $T$. For ease of notations, we also use the notation $\gamma_t$ to refer to such an approximation of the vector field on the polyhedral surface $\mathcal{M}_h$.

\begin{rem}
  When the decomposition \eqref{eq:pgamma2} is used to parametrize the vector field $\gamma_t$ of $\mathcal{M}$ (using time-dependent functions $\xi$ and $\chi$), the corresponding vector field on $\mathcal{M}_h$ can be obtained by considering the potential function $\chi$ (and $\xi$ if vector fields with non-zero divergence were considered) in $V_N^\ell$, which yields a piecewise constant approximation of the vector field. Details about the computations related to such vector fields can be found in \cite{poelke2016boundary}.  
\end{rem}

We present in \Cref{fig:sim_ad} an example of simulation to the  advection-diffusion SPDE on a meshed sphere. We take  $M_S=0$ (which implies that $\bm m=\bm 0$), $\bm \mu^{(0)}=\bm 0$, $P(\lambda)=\kappa^2+\lambda$, $f_S(\lambda)=(\kappa_S^2+\lambda)^{-1}$ and $f_0(\lambda)=(\kappa_{in}^2+\lambda)^{-1}$. A time-invariant advection vector field was parametrized thanks to two scalar functions as in~\eqref{eq:pgamma2}. As seen in the simulation, the resulting random fields can reproduce complex advection and diffusion phenomena. 

\begin{figure}
\centering
\begin{subfigure}[c]{\textwidth}
\centering
\includegraphics[width=0.7\textwidth]{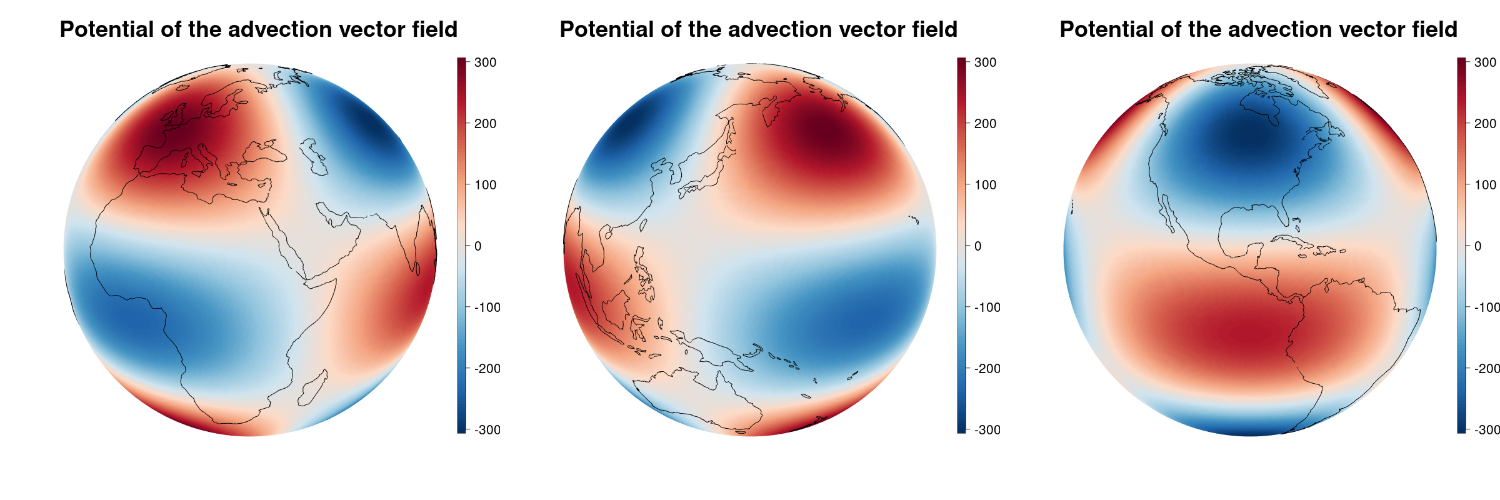}
\caption{Advection potential function $\chi$ (divergence-free component).}
\end{subfigure}
\begin{subfigure}[c]{\textwidth}
	\centering
{\includegraphics[width=0.7\textwidth]{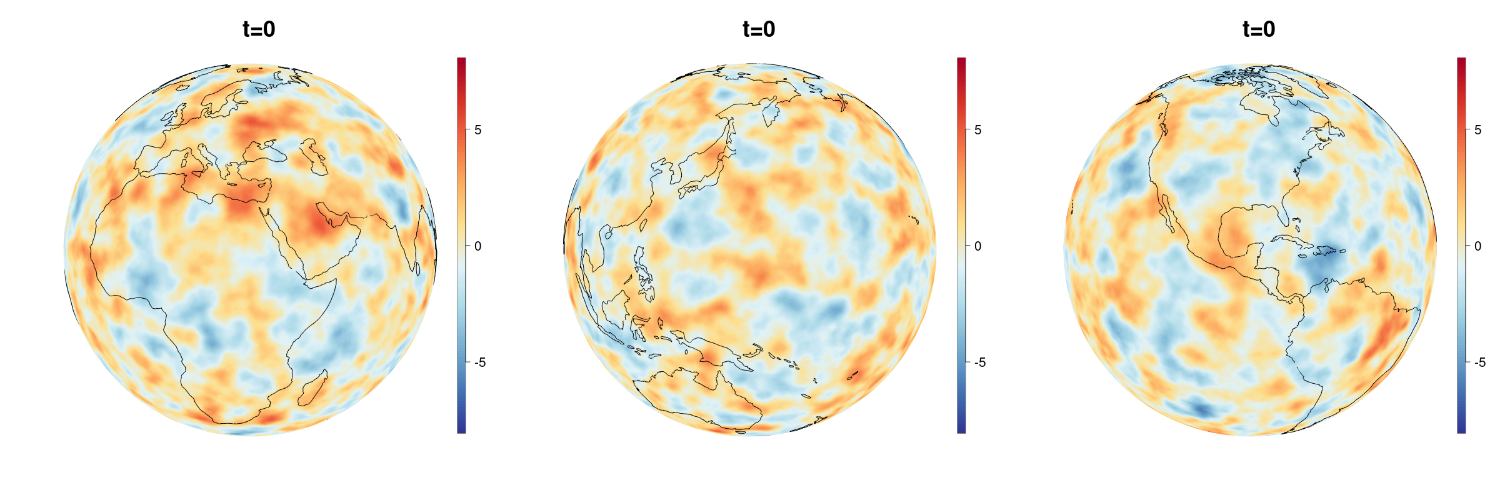}}
	\caption{Numerical solution of the SPDE at $t=0$.}
\end{subfigure}
\begin{subfigure}[c]{\textwidth}
	\centering
{\includegraphics[width=0.7\textwidth]{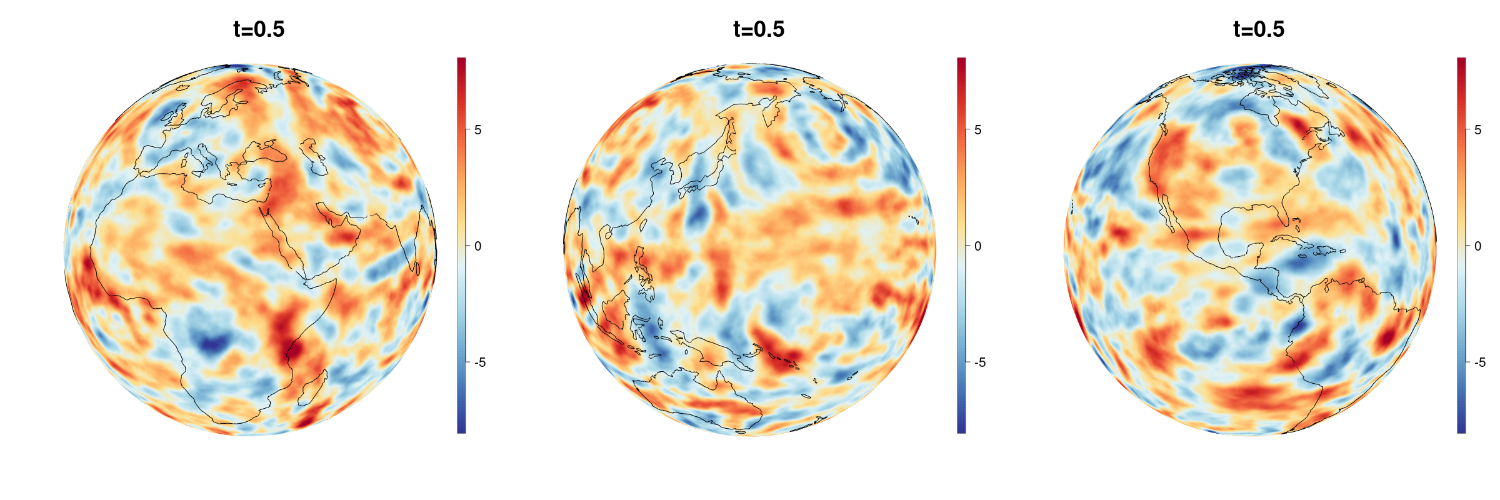}}
	\caption{Numerical solution of the SPDE at $t=0.5$.}
\end{subfigure}
\begin{subfigure}[c]{\textwidth}
	\centering
	{\includegraphics[width=0.7\textwidth]{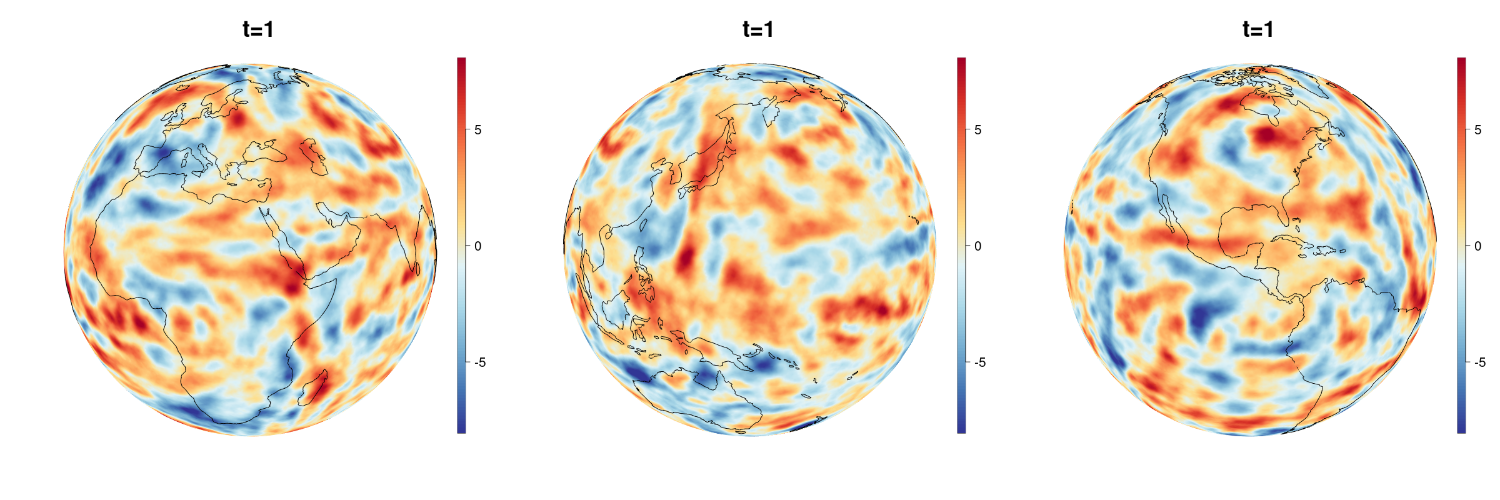}}
	\caption{Numerical solution of the SPDE at $t=1$.}
\end{subfigure}
\caption{\link{https://mike-pereira.github.io/PRED_STRF/adv-diff-sphere}{Simulation of a solution to the advection-diffusion SPDE on a meshed sphere, represented from three different viewpoints on the surface.}}\label{fig:sim_ad}
\end{figure}

Having established a discrete-time representation of the SPDE, we are now in a position to characterize the mean and covariance of the field at each time step.

\subsection{Mean and precision of the field}

The recursion in~\Cref{prop:rec_ad} can be used to derive, for any $K\in\N$, the expression of the mean and of the precision matrix of the vectors $\bm z^{(0)}, \dots, \bm z^{(K)}$. 
 Let us introduce some notations. For $\bm \Theta^{(0)},\dots, \bm \Theta^{(K-1)},\bm \Theta_1,\bm \Theta_2\in\R^{N\times N}$, let $\bm L(\lbrace\bm \Theta^{(k)}\rbrace_{k=0}^{K-1})$, $\bm D(\bm \Theta_1,\bm \Theta_2)$, $\bm D(\bm \Theta_1)\in~\R^{(K+1)N\times (K+1) N}$ be the block matrices  given by
\begin{equation}
	\bm L(\lbrace\bm \Theta^{(k)}\rbrace_{k=0}^{K-1})
	=\begin{smallmat}
		\bm I & &  &   \\
		-\bm I & {\bm \Theta^{(0)}} & &  \\
		& \ddots & \ddots  &   \\
		& &   -\bm I & {\bm \Theta^{(K-1)}}\\
	\end{smallmat},  
	\bm D(\bm \Theta_1,\bm \Theta_2)
	=\begin{smallmat}
		{\bm \Theta_1} & &  &   \\
		& {\bm \Theta_2} & &  \\
		&  & \ddots  &   \\
		& &    & {\bm \Theta_2}\\
	\end{smallmat},  \text{and} \quad 
	\bm D(\bm \Theta_1)=\bm D(\bm \Theta_1,\bm \Theta_1).
	\label{eq:defL}
\end{equation}

The next proposition gives the expression of the precision matrix of the coefficients $\bm z^{(0)}, \dots, \bm z^{(K)}$ obtained through the recursion~\eqref{eq:rec_ad}.

\begin{prop}\label{prop:Qeuler}
	Let us assume that the initial condition of SPDE~\eqref{eq:spde_discr} can be expressed as
	\begin{equation}\label{eq:ic}
		Z^{(0)}=\mu^{(0)}+f_0(-\Delta_N)W_S
	\end{equation}
	for some (deterministic) $\mu^{(0)}\in V_N$  and some  $f_0 : \R_+\rightarrow \R$ that is bounded takes positive values (meaning in particular that $\e[Z^{(0)}]=\mu^{(0)}$).
	Let then $\bm Z,{\bm M}_{\delta t}\in\R^{(K+1)N}$ be defined by
	\begin{equation*}
		\bm Z= \begin{smallmat}
			\bm z^{(0)}   \\
			\bm z^{(1)}   \\
			\vdots   \\
			\bm z^{(K)}\\
		\end{smallmat}
		\quad \text{and}\quad
		{\bm M}_{\delta t}= \begin{smallmat}
			\bm \mu^{(0)}   \\
			\frac{\delta t}{c}{\bm m}   \\
			\vdots   \\
			\frac{\delta t}{c}{\bm m}\\
		\end{smallmat}
	\end{equation*}
	where the vectors the vectors  ${\bm \mu}^{(0)}$, ${\bm m}$, and $\lbrace\bm z^{(k)}\rbrace_{0\le k\le K}$ are given in \Cref{prop:rec_ad}. Let $\bm \mu_{\bm Z}$ be the expectation and $\bm Q_{\bm Z}$ be the precision matrix of $\bm Z$.
	Then, we have 
	\begin{equation}
		\bm \mu_{\bm Z}=\e[\bm Z]=\bm D\big((\sqrt{\bm C})^{-T})\bm L\left(\lbrace\bm \Gamma^{(k)}\rbrace_{k=0}^{K-1}\right)^{-1}\bm D\big((\sqrt{\bm C})^{T}\big){\bm M}_{\delta t},
		\label{eq:defExp}
	\end{equation}
	and
	\begin{equation}
		\bm Q_{\bm Z}=\e[(\bm Z-\bm \mu_{\bm Z})(\bm Z-\bm \mu_{\bm Z})^T]^{-1} =  \bm D\big(\sqrt{\bm C}\big)~\bm L\left(\lbrace\bm \Gamma^{(k)}\rbrace_{k=0}^{K-1}\right)^T~\bm D\left(f_0^{-2}(\widetilde{\bm R}),\, {f}_{\delta t}^{-2}(\widetilde{\bm R})\right)~\bm L\left(\lbrace\bm \Gamma^{(k)}\rbrace_{k=0}^{K-1}\right)~\bm D\big((\sqrt{\bm C})^{T}\big).
		\label{eq:defQ}
	\end{equation}
	
\end{prop}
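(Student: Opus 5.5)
We start from the recursion of \Cref{prop:rec_ad} and stack it over all time steps. Set $\bm X=\bm D\big((\sqrt{\bm C})^T\big)\bm Z$, so that $\bm x^{(k)}=(\sqrt{\bm C})^T\bm z^{(k)}$ for every $0\le k\le K$, consistently with the definition of $\bm x^{(0)}$ and the second line of \eqref{eq:rec_ad}.

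We first identify the law of $\bm x^{(0)}$. Write $W_S=\sum_k w_k e_k^{(N)}$. We use the standard fact, shown in the proof of \Cref{prop:rec_ad}, that the coefficient map of $V_N^\ell$-valued functions intertwines $-\Delta_N$ with $\widetilde{\bm R}$ through $\sqrt{\bm C}$. More precisely, if $u=\sum_j u_j\psi_j^\ell$, then $(\sqrt{\bm C})^T\bm u$ gives the coordinates of $u$ in an orthonormal basis of $V_N^\ell$, and in these coordinates $-\Delta_N$ acts as $\widetilde{\bm R}=(\sqrt{\bm C})^{-1}\bm R(\sqrt{\bm C})^{-T}$. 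Hence
\begin{equation*}
	\bm x^{(0)}=(\sqrt{\bm C})^T\bm\mu^{(0)}+f_0(\widetilde{\bm R})\bm w^{(0)},
\end{equation*}
where $\bm w^{(0)}\sim\mathcal N(\bm 0,\bm I)$ is independent of the $\bm w^{(k)}$, $k\ge1$. Together with the first line of \eqref{eq:rec_ad}, the $K+1$ relations can be written as a single block system:
\begin{equation*}
	\bm L\big(\lbrace\bm\Gamma^{(k)}\rbrace_{k=0}^{K-1}\big)\bm X=\bm D\big((\sqrt{\bm C})^T\big)\bm M_{\delta t}+\bm D\big(f_0(\widetilde{\bm R}),f_{\delta t}(\widetilde{\bm R})\big)\bm W,
\end{equation*}
with $\bm W=(\bm w^{(0)},\dots,\bm w^{(K)})$ standard Gaussian. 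The first block row reads $\bm I\bm x^{(0)}=\dots$, and row $k+1$ reads $-\bm x^{(k)}+\bm\Gamma^{(k)}\bm x^{(k+1)}=\dots$, which is exactly \eqref{eq:rec_ad}.

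Next we check invertibility. $\bm L$ is block lower triangular with diagonal blocks $\bm I$ and $\bm\Gamma^{(k)}$, so it is invertible as soon as each $\bm\Gamma^{(k)}$ is. This is the step needing an actual argument. For $\bm v\neq\bm 0$, the quantity $\bm v^T\widetilde{\bm B}^{(k)}\bm v$ equals $\langle u,\dv(\gamma^{(k)}u)\rangle$ for the function $u$ with orthonormal coordinates $\bm v$. Since $\dv\gamma^{(k)}=0$, we have $\dv(\gamma u)=\gamma\cdot\nabla u$, and $\langle u,\gamma\cdot\nabla u\rangle=\tfrac12\int\gamma\cdot\nabla(u^2)=0$ by integration by parts on a closed manifold. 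On the mesh, with piecewise-constant $\gamma$, we assume the discrete field is chosen so that this skew-symmetry persists, and we would remark on this. Consequently
\begin{equation*}
	\bm v^T\bm\Gamma^{(k)}\bm v\ge \Big(1+\tfrac{\delta t}{c}C_P\Big)\Vert\bm v\Vert^2>0,
\end{equation*}
because $\widetilde{\bm R}$ is symmetric positive semidefinite and $P\ge C_P$ on its spectrum. Thus $\bm\Gamma^{(k)}$ is invertible.

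Finally we read off the moments. Taking expectations gives $\e[\bm X]=\bm L^{-1}\bm D((\sqrt{\bm C})^T)\bm M_{\delta t}$, and applying $\bm D((\sqrt{\bm C})^{-T})$ yields \eqref{eq:defExp}. The centered part is $\bm X-\e\bm X=\bm L^{-1}\bm D_f\bm W$, where $\bm D_f=\bm D\big(f_0(\widetilde{\bm R}),f_{\delta t}(\widetilde{\bm R})\big)$. This block is symmetric and invertible, since $f_0>0$ and we assume $f_S$ does not vanish on the spectrum (implicitly required for $f_{\delta t}^{-2}$ to make sense). Its covariance is therefore $\bm L^{-1}\bm D_f^2\bm L^{-T}$, whose inverse is $\bm L^T\bm D_f^{-2}\bm L$, with $\bm D_f^{-2}=\bm D(f_0^{-2}(\widetilde{\bm R}),f_{\delta t}^{-2}(\widetilde{\bm R}))$ by the functional calculus of Appendix~\ref{sec:funcMat}. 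Since $\bm Z=\bm D((\sqrt{\bm C})^{-T})\bm X$, we get
\begin{equation*}
	\bm Q_{\bm Z}=\bm D(\sqrt{\bm C})\,\bm Q_{\bm X}\,\bm D((\sqrt{\bm C})^T),
\end{equation*}
which is \eqref{eq:defQ}. The main obstacle is the invertibility of $\bm\Gamma^{(k)}$ together with the careful identification $f_0(-\Delta_N)\leftrightarrow f_0(\widetilde{\bm R})$ for the initial condition. The rest is block bookkeeping.
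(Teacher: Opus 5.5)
Your proposal is correct and follows the paper's proof essentially line by line. Like the paper, you stack the recursion of Proposition~\ref{prop:rec_ad} as $\bm L\bm X=\bm D((\sqrt{\bm C})^T)\bm M_{\delta t}+\bm D(f_0(\widetilde{\bm R}),f_{\delta t}(\widetilde{\bm R}))\bm W$, take expectations and covariances, invert, and pull back through $\bm X=\bm D((\sqrt{\bm C})^T)\bm Z$. Your extra checks — invertibility of $\bm\Gamma^{(k)}$ via skew-symmetry of $\widetilde{\bm B}^{(k)}$ (the same computation as the paper's corollary), and non-vanishing of $f_S$ on the spectrum of $\widetilde{\bm R}$ — are left implicit in the paper and are sound, with the caveat you already flag that exact skew-symmetry on the polyhedral mesh is an added assumption.
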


The proof is available in Appendix~\ref{proof:Qeuler}.

The explicit computation of the precision matrix  $\bm Q_{\bm Z}$ results in a block tri-diagonal matrix given by
\begin{equation*}
    \bm Q_{\bm Z}=	\begin{smallmat}
        \bm\Gamma_1^{(0)} & -\bm\Gamma_2^{(0)} &  & \\
        -(\bm\Gamma_2^{(0)})^T& \bm\Gamma_1^{(1)} & -\bm\Gamma_2^{(1)}\\
        & &\\
        & \ddots & \ddots  & \ddots & \\
        & &  - (\bm\Gamma_2^{(K-2)})^T &  \bm\Gamma_1^{(K-1)} & -\bm\Gamma_2^{(K-1)} \\
        & & &  -(\bm\Gamma_2^{(K-1)})^T & \bm\Gamma_1^{(K)}\\
    \end{smallmat},
\end{equation*}
where 
\begin{equation*}
    \bm\Gamma_1^{(k)}
    =
    \begin{cases}
    (\sqrt{\bm C})\big(f_0^{-2}(\widetilde{\bm R}) + {f}_{\delta t}^{-2}(\widetilde{\bm R})\big)(\sqrt{\bm C})^{T} & \text{if } k=0 \\
    (\sqrt{\bm C})\left({(\bm \Gamma^{(k-1)})}^{T}{f}_{\delta t}^{-2}(\widetilde{\bm R})~{\bm \Gamma^{(k-1)}} + {f}_{\delta t}^{-2}(\widetilde{\bm R})\right)(\sqrt{\bm C})^{T} & \text{if } 1\le k\le K-1 \\
    (\sqrt{\bm C}){(\bm \Gamma^{(K-1)}})^{T}{f}_{\delta t}^{-2}(\widetilde{\bm R})~{\bm \Gamma^{(K-1)}}(\sqrt{\bm C})^{T} & \text{if } k=K \\
    \end{cases}
\end{equation*}
and $\displaystyle \bm\Gamma_2^{(k)}=(\sqrt{\bm C}){f}_{\delta t}^{-2}(\widetilde{\bm R}){\bm \Gamma^{(k)}}(\sqrt{\bm C})^{T}$ for $0\le k\le K-1$.

As an illustration, starting from the same model as the one simulated in \Cref{fig:sim_ad}, we represent in \Cref{fig:cov} the spatio-temporal evolution of the covariance between the value of the field at time $t=0$ at a reference point (in orange), and the values of the field elsewhere and at later times. These covariances are computed using the formula~\eqref{eq:defQ} in \Cref{prop:Qeuler} for the spatio-temporal precision matrix of the field. As one can note in the animation, the zone of high-correlation \q{moves} along the advection direction, as expected in an advection problem.

\begin{figure}
	\centering
	\begin{subfigure}[c]{0.33\textwidth}
	\centering
	{\includegraphics[width=\textwidth]{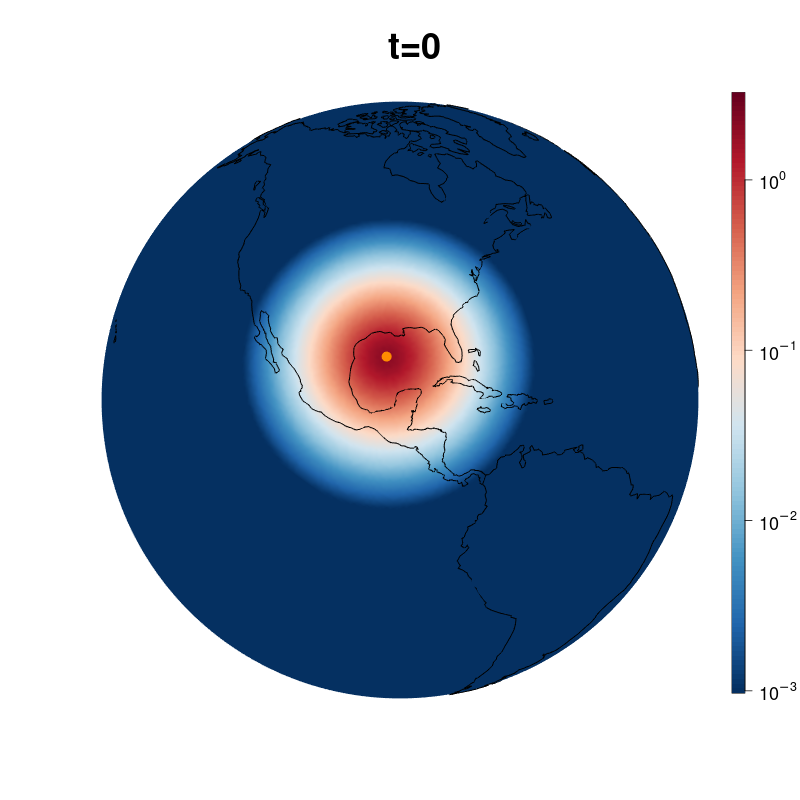}}
	\caption{Covariance at $t=0$.}
\end{subfigure}
\begin{subfigure}[c]{0.33\textwidth}
	\centering
	{\includegraphics[width=\textwidth]{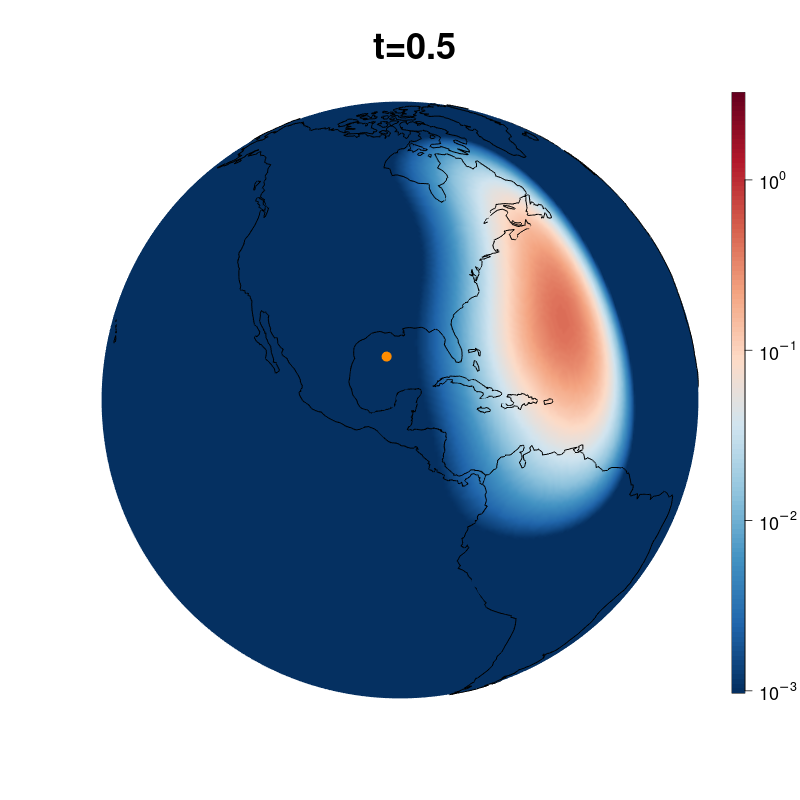}}
	\caption{Covariance at $t=0.5$.}
\end{subfigure}
\begin{subfigure}[c]{0.33\textwidth}
	\centering
	{\includegraphics[width=\textwidth]{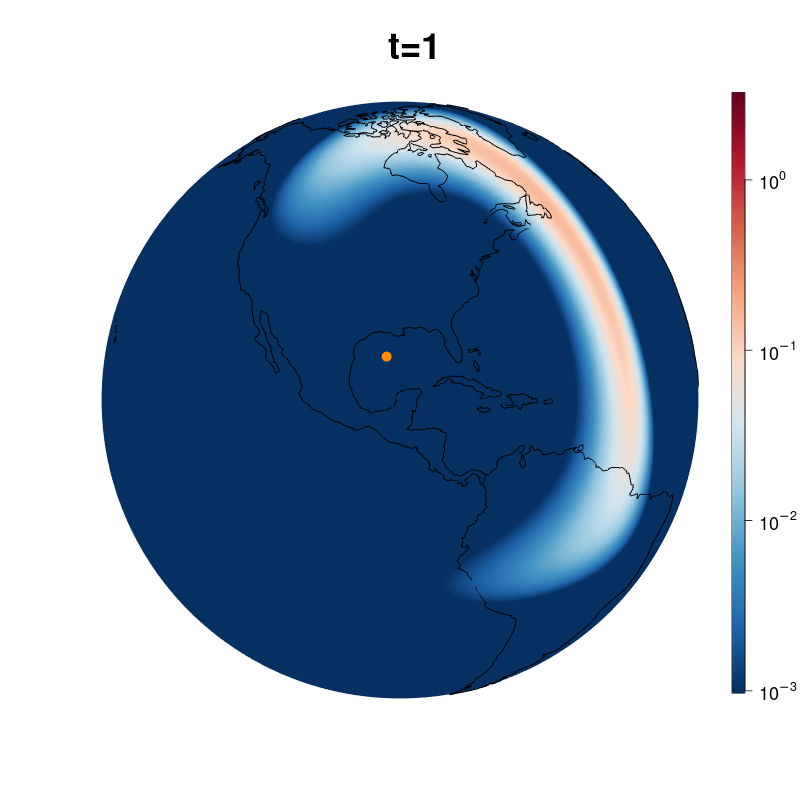}}
	\caption{Covariance at $t=1$.}
\end{subfigure}
	\caption{\link{https://mike-pereira.github.io/PRED_STRF/adv-diff-sphere-cov}{Spatio-temporal evolution of the covariance between a reference point (in orange) and the rest of the points in the domain, for the spatio-temporal model simulated in \Cref{fig:sim_ad}. The color (red to blue) represent the value of the covariance in log-scale.}}\label{fig:cov}
\end{figure}

The covariance matrix $\bm Q_{\bm Z}$ of the Euler discretization of $Z$ can be expressed as the product of block diagonal and block bi-diagonal matrices. This particular structure allows proposing scalable algorithms to perform products with vectors, solving linear systems and computing the log-determinant, all of which will prove useful when tackling inference and prediction problems (cf. \Cref{sec:inf_pred}). 

Having defined the recursion and characterized its mean and covariance, we next examine conditions under which the recursion remains stable over time.

\subsection{Stability of the recursion}

Similarly to the Euclidean case, the Galerkin approximation of advection-diffusion SPDEs on surfaces is subject to instabilities when the advection term dominates the diffusion term \cite{mekuria2016adaptive}. These instabilities are due to the fact that in advection-dominated cases, the matrices $\lbrace\bm \Gamma^{(k)}\rbrace_{k\in\N_0}$ in \eqref{eq:Gamma} dominate and can cause the linear systems involving $\bm\Gamma^{(k)}$ to become ill-conditioned.

Given these potential instabilities, a natural question is whether the recursion in~\Cref{prop:rec_ad} is stable. By stability, we mean here that the mean and covariance matrix of the vectors $\bm z^{(k)}$ stay at least bounded for any $k\in\N_0$. To ease the exposition, we discuss this property for the particular case where the parameters of the SPDE~\eqref{eq:spde} are time-invariant. Hence, we consider that the advection vector field $\gamma_t$ is constant through time, which in turn allows to introduce matrices $\widetilde{\bm B}$ and ${\bm \Gamma}$ such that for any $k$,   $\widetilde{\bm B}=\widetilde{\bm B}^{(k)}$ and  ${\bm \Gamma}={\bm \Gamma}^{(k)}$. Then, it turns out that under a mild condition on the (symmetric part of) the advection matrices $\widetilde{\bm B}$ we can ensure that the sequences $\lbrace \bm\mu^{(k)} \rbrace_{k\in\N_0}$ and $\lbrace \bm\Sigma^{(k)} \rbrace_{k\in\N_0}$ are not only bounded, but also convergent. To do so, we start by giving explicit expressions for the mean and covariance of the field.

\begin{prop}\label{prop:mean_cov_ad}
	We consider the notations introduced in \Cref{prop:rec_ad}, and assume that for  $k\ge 0$, ${\bm \Gamma}^{(k)}={\bm \Gamma}$ for some fixed matrix ${\bm \Gamma}$. Let then $\bm\mu^{(k)} = \e[\bm z^{(k)}]$ and $\bm\Sigma^{(k)}=\var[\bm z^{k}]=\e[(\bm z^{k}-\bm\mu^{(k)})(\bm z^{k}-\bm\mu^{(k)})^T]$ be respectively the mean and the covariance matrix of the vector $\bm z^{(k)}$. Then, for any $k\ge 0$,
    \begin{equation}\label{eq:exp_z}
		\bm\mu^{(k)}=\bm\mu
		+(\sqrt{\bm C})^{-T}~(\bm\Gamma^{-1})^k~(\sqrt{\bm C})^{T}\big({\bm \mu}^{(0)}-\bm\mu\big),
	\end{equation}
	where we write
	\begin{equation*}
		\bm \mu=(\sqrt{\bm C})^{-T}\big(P(\widetilde{\bm R}) + \widetilde{\bm B}\big)^{-1}(\sqrt{\bm C})^{T}{\bm m}.
	\end{equation*}
	On the other hand,
	\begin{equation}\label{eq:cov_z}
		\bm\Sigma^{(k)}= (\sqrt{\bm C})^{-T}\left((\bm\Gamma^{-1})^k~(\sqrt{\bm C})^{T}\bm\Sigma^{(0)}(\sqrt{\bm C})\left((\bm\Gamma^{-1})^k\right)^T 
        +\sum_{i=1}^{k} (\bm\Gamma^{-1})^i~{f}_{\delta t}^2(\widetilde{\bm R})\left((\bm\Gamma^{-1})^i\right)^T\right)(\sqrt{\bm C})^{-1}.
	\end{equation}

\end{prop}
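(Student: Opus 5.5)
The plan is to work with the transformed variables $\bm x^{(k)}=(\sqrt{\bm C})^T\bm z^{(k)}$ from \Cref{prop:rec_ad}. With $\bm\Gamma^{(k)}=\bm\Gamma$ fixed, the recursion \eqref{eq:rec_ad} becomes a time-invariant linear Gaussian recursion:
\begin{equation*}
\bm x^{(k+1)}=\bm\Gamma^{-1}\bm x^{(k)}+\frac{\delta t}{c}\bm\Gamma^{-1}(\sqrt{\bm C})^T\bm m+\bm\Gamma^{-1}f_{\delta t}(\widetilde{\bm R})\bm w^{(k+1)}.
\end{equation*}
Invertibility of $\bm\Gamma$ is implicit in \Cref{prop:rec_ad}, since that recursion is assumed well defined. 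The new noise $\bm w^{(k+1)}$ is independent of $\bm x^{(k)}$, because $\bm x^{(k)}$ depends only on $\bm x^{(0)}$ and $\bm w^{(1)},\dots,\bm w^{(k)}$. Everything then follows from taking means and covariances of this affine recursion, and converting back through $\bm z^{(k)}=(\sqrt{\bm C})^{-T}\bm x^{(k)}$.

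\textbf{Mean.} Taking expectations gives $\e[\bm x^{(k+1)}]=\bm\Gamma^{-1}\e[\bm x^{(k)}]+\frac{\delta t}{c}\bm\Gamma^{-1}(\sqrt{\bm C})^T\bm m$. I first look for a fixed point $\bm\nu$, which must satisfy $(\bm\Gamma-\bm I)\bm\nu=\frac{\delta t}{c}(\sqrt{\bm C})^T\bm m$. From \eqref{eq:Gamma}, $\bm\Gamma-\bm I=\frac{\delta t}{c}(P(\widetilde{\bm R})+\widetilde{\bm B})$, so
\begin{equation*}
\bm\nu=\big(P(\widetilde{\bm R})+\widetilde{\bm B}\big)^{-1}(\sqrt{\bm C})^T\bm m=(\sqrt{\bm C})^T\bm\mu .
\end{equation*}
This requires $P(\widetilde{\bm R})+\widetilde{\bm B}$ to be invertible, which I would note as part of the statement's implicit assumption that $\bm\mu$ is defined. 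Subtracting the fixed point gives $\e[\bm x^{(k)}]-\bm\nu=(\bm\Gamma^{-1})^k(\e[\bm x^{(0)}]-\bm\nu)$. Using $\e[\bm x^{(0)}]=(\sqrt{\bm C})^T\bm\mu^{(0)}$ and applying $(\sqrt{\bm C})^{-T}$ then yields \eqref{eq:exp_z}.

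\textbf{Covariance.} Let $\bm S^{(k)}=\var[\bm x^{(k)}]=(\sqrt{\bm C})^T\bm\Sigma^{(k)}\sqrt{\bm C}$. Independence of the noise from $\bm x^{(k)}$ gives
\begin{equation*}
\bm S^{(k+1)}=\bm\Gamma^{-1}\bm S^{(k)}(\bm\Gamma^{-1})^T+\bm\Gamma^{-1}f_{\delta t}(\widetilde{\bm R})f_{\delta t}(\widetilde{\bm R})^T(\bm\Gamma^{-1})^T .
\end{equation*}
Since $\widetilde{\bm R}$ is symmetric and $f_{\delta t}$ is real-valued, the matrix function $f_{\delta t}(\widetilde{\bm R})$ is symmetric (Appendix~\ref{sec:funcMat}), so $f_{\delta t}(\widetilde{\bm R})f_{\delta t}(\widetilde{\bm R})^T=f_{\delta t}^2(\widetilde{\bm R})$. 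A straightforward induction on $k$ then gives
\begin{equation*}
\bm S^{(k)}=(\bm\Gamma^{-1})^k\bm S^{(0)}\big((\bm\Gamma^{-1})^k\big)^T+\sum_{i=1}^k(\bm\Gamma^{-1})^if_{\delta t}^2(\widetilde{\bm R})\big((\bm\Gamma^{-1})^i\big)^T .
\end{equation*}
In the inductive step, the index shift turns the new term into $i=1$ and each old term $i$ into $i+1$. Conjugating back with $(\sqrt{\bm C})^{-T}(\cdot)(\sqrt{\bm C})^{-1}$ gives \eqref{eq:cov_z}.

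\textbf{Main obstacle.} The argument is routine, so the only real care is bookkeeping. The transposes of $\sqrt{\bm C}$ must be placed correctly: $\bm x=(\sqrt{\bm C})^T\bm z$ gives $\var[\bm x]=(\sqrt{\bm C})^T\bm\Sigma\sqrt{\bm C}$. The symmetry of $f_{\delta t}(\widetilde{\bm R})$ must be justified. Finally, the invertibility of $P(\widetilde{\bm R})+\widetilde{\bm B}$, needed for $\bm\mu$ to exist, should be made explicit.
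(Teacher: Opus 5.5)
Your proposal is correct and follows essentially the same route as the paper. For the mean, the paper shifts $\e[\bm x^{(k)}]$ by $\frac{\delta t}{c}(\bm\Gamma-\bm I)^{-1}(\sqrt{\bm C})^T\bm m$, which is exactly your fixed point $\bm\nu$, iterates, and maps back with $(\sqrt{\bm C})^{-T}$. For the covariance, it uses the same induction on $\bm\Gamma\,\var[\bm x^{(k+1)}]\,\bm\Gamma^T=\var[\bm x^{(k)}]+f_{\delta t}^2(\widetilde{\bm R})$. Your extra justifications are all implicit in the paper's argument: symmetry of $f_{\delta t}(\widetilde{\bm R})$, invertibility of $\bm\Gamma$ and of $P(\widetilde{\bm R})+\widetilde{\bm B}$, and independence of $\bm w^{(k+1)}$ from $\bm x^{(k)}$. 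Note that this last point also requires $\bm x^{(0)}$ to be independent of the forcing.
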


The proof is available in Appendix~\ref{proof:mean_cov_ad}.

\vspace{1em}
\begin{rem}
	If the initial condition is chosen so that $\bm \mu^{(0)}=\bm\mu$, then, for any $k\in\lbrace 0, \dots, K\rbrace$, $\e[\bm z^{(k)}]=\bm\mu$, i.e. the expectation of the spatial trace of the field remains constant through time.
\end{rem}

We now move on to the stability result.

\begin{prop}\label{prop:mean_cov_inf}
	Following the notations introduced in \Cref{prop:rec_ad}, let $\bm\mu^{(k)} = \e[\bm z^{(k)}]$ and let $\bm\Sigma^{(k)}=\var[\bm z^{(k)}]$ be respectively the mean and the covariance matrix of the vector $\bm z^{(k)}$, for any $k\in\N_0$. 
	
	If the matrix $\tilde{\bm G} = \frac{1}{2}(\tilde{\bm B}+\tilde{\bm B}^T)$ satisfies $C_P+\lambda_{\min}(\tilde{\bm G}) > 0$, then  the recursion in \Cref{prop:rec_ad} is stable in the sense that the sequences $\lbrace \bm\mu^{(k)} \rbrace_{k\in\N_0}$ and $\lbrace \bm\Sigma^{(k)} \rbrace_{k\in\N_0}$ are bounded. Moreover, both sequences converge to the following limits
	\begin{equation}\label{eq:lim}
		\lim_{k\rightarrow \infty}\bm\mu^{(k)}=\bm\mu, 
		\quad\text{and}\quad \lim_{k\rightarrow \infty} \bm\Sigma^{(k)} = (\sqrt{\bm C})^{-T}\bigg(\sum_{i=1}^{\infty} \bm\Gamma^{-i}{f}_{\delta t}^2(\widetilde{\bm R})(\bm\Gamma^{-i})^T\bigg)(\sqrt{\bm C})^{-1}.
	\end{equation}
\end{prop}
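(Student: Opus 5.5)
The plan is to work in the time-invariant setting of \Cref{prop:mean_cov_ad} and use the explicit formulas \eqref{eq:exp_z} and \eqref{eq:cov_z}. Both boundedness and convergence then reduce to one spectral fact: the powers $(\bm\Gamma^{-1})^k$ decay geometrically. Concretely, I will show that $\Vert\bm\Gamma^{-1}\Vert_2\le\rho<1$ for the spectral norm. This is stronger than a spectral radius bound and is convenient, because $\bm\Gamma$ is not normal in general.

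\textbf{Step 1 (key estimate).} For any unit vector $\bm x\in\R^N$,
\begin{equation*}
\bm x^T\bm\Gamma\bm x = 1+\frac{\delta t}{c}\big(\bm x^TP(\widetilde{\bm R})\bm x+\bm x^T\widetilde{\bm G}\bm x\big).
\end{equation*}
The skew part of $\widetilde{\bm B}$ contributes nothing to this quadratic form, which is why only $\widetilde{\bm G}$ appears. The matrix $\widetilde{\bm R}$ is symmetric positive semidefinite, since it is congruent to $\bm R$. Its eigenvalues $\lambda\ge0$ therefore satisfy $P(\lambda)\ge C_P$, so $\bm x^TP(\widetilde{\bm R})\bm x\ge C_P$. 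This gives
\begin{equation*}
\bm x^T\bm\Gamma\bm x\ge 1+\frac{\delta t}{c}\big(C_P+\lambda_{\min}(\widetilde{\bm G})\big)=:1+\eta,\qquad \eta>0.
\end{equation*}
By Cauchy--Schwarz, $\Vert\bm\Gamma\bm x\Vert_2\ge\bm x^T\bm\Gamma\bm x\ge 1+\eta$ for every unit $\bm x$. Hence $\bm\Gamma$ is invertible and $\Vert\bm\Gamma^{-1}\Vert_2\le(1+\eta)^{-1}=:\rho<1$. The same argument applied to $P(\widetilde{\bm R})+\widetilde{\bm B}$, whose symmetric part is bounded below by $C_P+\lambda_{\min}(\widetilde{\bm G})>0$, shows that this matrix is invertible, so the limit mean $\bm\mu$ is well defined.

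\textbf{Step 2 (mean).} From \eqref{eq:exp_z},
\begin{equation*}
\Vert\bm\mu^{(k)}-\bm\mu\Vert\le\Vert(\sqrt{\bm C})^{-T}\Vert\,\rho^k\,\Vert(\sqrt{\bm C})^{T}\Vert\,\Vert\bm\mu^{(0)}-\bm\mu\Vert.
\end{equation*}
The right-hand side tends to $0$ geometrically. This yields both boundedness and $\bm\mu^{(k)}\to\bm\mu$.

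\textbf{Step 3 (covariance).} In \eqref{eq:cov_z}, the initial-condition term has norm at most
\begin{equation*}
\rho^{2k}\,\Vert(\sqrt{\bm C})^{-T}\Vert^2\,\Vert\sqrt{\bm C}\Vert^2\,\Vert\bm\Sigma^{(0)}\Vert,
\end{equation*}
so it tends to $0$. The $i$-th summand has norm at most $\rho^{2i}\Vert f_{\delta t}^2(\widetilde{\bm R})\Vert$. Here $f_{\delta t}^2(\widetilde{\bm R})$ is a fixed finite matrix, so its norm is finite. The series therefore converges absolutely in norm, and its partial sums are uniformly bounded by $\Vert f_{\delta t}^2(\widetilde{\bm R})\Vert\rho^2/(1-\rho^2)$. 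Conjugating by the fixed matrices $(\sqrt{\bm C})^{-T}$ and $(\sqrt{\bm C})^{-1}$ preserves both boundedness and convergence. This gives boundedness of $\{\bm\Sigma^{(k)}\}$ and the limit stated in \eqref{eq:lim}.

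The main obstacle is Step 1. Because $\widetilde{\bm B}$ is not symmetric, eigenvalue arguments on $\bm\Gamma$ alone would give only spectral radius control, not uniform norm bounds on $\bm\Gamma^{-k}$. The numerical-range bound above avoids this difficulty, and it uses the hypothesis on $\widetilde{\bm G}$ exactly as stated.
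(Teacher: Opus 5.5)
Your proposal is correct and follows essentially the same route as the paper. Both arguments bound the quadratic form $\bm x^T\bm\Gamma\bm x$ from below to obtain $\Vert\bm\Gamma^{-1}\Vert\le\alpha^{-1}<1$, then apply geometric bounds to the terms of \eqref{eq:exp_z} and \eqref{eq:cov_z}. Your Cauchy--Schwarz derivation of the inverse bound is a slightly tidier variant of the paper's substitution $\bm x\mapsto\bm\Gamma^{-1}\bm x$, and it also makes explicit the invertibility of $\bm\Gamma$ and of $P(\widetilde{\bm R})+\widetilde{\bm B}$, which the paper leaves implicit.
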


\begin{proof}
	Let $\bm x\in\R^n$. Note that $ \bm x^T \bm\Gamma \bm x = \bm x^T \bm\Gamma^T \bm x = \frac{1}{2}\bm x^T (\bm\Gamma+\bm\Gamma^T) \bm x $. Hence,
	\begin{equation*}
		\bm x^T \bm\Gamma \bm x = \bm x^T \bigg(\frac{1}{2}(\bm\Gamma+\bm\Gamma^T) \bigg)\bm x
		=\bm x^T  \bigg(\bm I  +\frac{\delta t}{c} \big(P(\widetilde{\bm R}) + \widetilde{\bm G}\big) \bigg) \bm x
	\end{equation*}
	since the matrix  $P(\widetilde{\bm R})$ is symmetric. First, note that  the min-max principle gives that  $\bm x^T\widetilde{\bm G}\bm x \ge \lambda_{\min}(\widetilde{\bm G})\Vert \bm x\Vert^2$.  Note then that since the matrix $\widetilde{\bm R}$ is positive semi-definite, its eigenvalues are lower-bounded by zero. Therefore, given that $P$ is lower-bounded by $C_P$ over $\R_+$, we can deduce that $\lambda_{\min}(P(\widetilde{\bm R})) \ge C_P$. Finally, using the min-max principle, we can deduce that $\bm x^TP(\widetilde{\bm R})\bm x \ge \lambda_{\min}(P(\widetilde{\bm R}))\Vert \bm x\Vert^2 \ge C_P\Vert\bm x\Vert^2$. In conclusion, we have for any $\bm x\in\R^n$,
	\begin{equation*}
		\bm x^T \bm\Gamma \bm x 
		\ge \alpha\Vert\bm x\Vert^2
	\end{equation*}
	where $\alpha=\bigg(1  +\frac{\delta t}{c} \big(C_P +\lambda_{\min}(\widetilde{\bm G})\big) \bigg) >1$. 
	
	Then, for any  $\bm x\in\R^n$,
	\begin{equation*}
		\alpha \Vert \bm\Gamma^{-1} \bm x\Vert^2 \le \bm x^T\bm\Gamma^{-T}\bm \Gamma\bm\Gamma^{-1}\bm x =\bm x^T\bm\Gamma^{-T}\bm x = \bm x^T\bm\Gamma^{-1}\bm x
		\le \Vert \bm x\Vert \Vert\bm\Gamma^{-1}\bm x\Vert
	\end{equation*}
	which gives $\Vert \bm\Gamma^{-1} \bm x\Vert\le \alpha^{-1}\Vert\bm x\Vert$. Similarly, we have for any $\bm x\in\R^n$, $\alpha \Vert \bm\Gamma^{-T} \bm x\Vert^2 \le \bm x^T\bm\Gamma^{-T}\bm x\le \Vert \bm x\Vert \Vert\bm\Gamma^{-T}\bm x\Vert$ which gives $\Vert \bm\Gamma^{-T} \bm x\Vert\le \alpha^{-1}\Vert\bm x\Vert$.
	
	Consequently, we have for any $k\in\N_0$, 
	\begin{align*}
		\Vert\bm\mu^{(k)}-\bm\mu\Vert 
		&=\Vert(\sqrt{\bm C})^{-T}\bm\Gamma^{-k}(\sqrt{\bm C})^{T}\big({\bm \mu}^{(0)}-\bm\mu\big)\Vert
		\le \Vert (\sqrt{\bm C})^{-T} \Vert \cdot \Vert\bm\Gamma^{-k}(\sqrt{\bm C})^{T}\big({\bm \mu}^{(0)}-\bm\mu\big) \Vert \\
		&\le \Vert (\sqrt{\bm C})^{-T} \Vert \cdot \alpha^{-k}\Vert(\sqrt{\bm C})^{T}\big({\bm \mu}^{(0)}-\bm\mu\big) \Vert
		\le \alpha^{-k}\Vert (\sqrt{\bm C})^{-T} \Vert \cdot \Vert(\sqrt{\bm C})^{T}\Vert \cdot\Vert{\bm \mu}^{(0)}-\bm\mu \Vert
	\end{align*}
	Since $\alpha >1$, we have $\alpha^{-k} \in (0,1)$ for any $k\in\N_0$ and therefore the sequence $\lbrace \bm\mu^{(k)} \rbrace_{k\in\N_0}$ is bounded. Moreover, by taking the limit as $k\rightarrow\infty$ on both sides of the inequality, we get $\lim_{k\rightarrow\infty} \Vert\bm\mu^{(k)}-\bm\mu\Vert =0$ and therefore $\lim_{k\rightarrow\infty}\bm\mu^{(k)}=\bm\mu$.
	
	For $k\in\N$, let then $\bm X^{(k)}=\bm\Gamma^{-k}(\sqrt{\bm C})^{T}\bm\Sigma^{(0)}(\sqrt{\bm C})(\bm\Gamma^{-k})^T$ and $\bm Y^{(k)}=\sum_{i=1}^{k} \bm\Gamma^{-i}{f}_{\delta t}^2(\widetilde{\bm R})(\bm\Gamma^{-i})^T$. We then have $\bm\Sigma^{(k)}=(\sqrt{\bm C})^{-T}(\bm X^{(k)}+\bm Y^{(k)})(\sqrt{\bm C})^{-1}$. For any $\bm x\in\R^n$ we have
	\begin{align*}
		\Vert \bm X^{(k)}\bm x\Vert 
		&= \Vert \bm\Gamma^{-k}(\sqrt{\bm C})^{T}\bm\Sigma^{(0)}(\sqrt{\bm C})(\bm\Gamma^{-k})^T\bm x\Vert
		\le \alpha^{-k} \Vert (\sqrt{\bm C})^{T}\bm\Sigma^{(0)}(\sqrt{\bm C})(\bm\Gamma^{-k})^T\bm x \Vert \\
		&\le \alpha^{-k} \Vert (\sqrt{\bm C})^{T}\bm\Sigma^{(0)}(\sqrt{\bm C})\Vert \Vert(\bm\Gamma^{-k})^T\bm x \Vert 
		\le \alpha^{-2k} \Vert (\sqrt{\bm C})^{T}\bm\Sigma^{(0)}(\sqrt{\bm C})\Vert \Vert\bm x \Vert,
	\end{align*}
	meaning in particular that $\Vert \bm X^{(k)}\Vert \le \alpha^{-2k} \Vert (\sqrt{\bm C})^{T}\bm\Sigma^{(0)}(\sqrt{\bm C})\Vert$. Similarly, for any $\bm x\in\R^n$, we have
	\begin{align*}
		\Vert \bm Y^{(k)}\bm x\Vert 
		&= \Vert \sum_{i=1}^{k} \bm\Gamma^{-i}{f}_{\delta t}^2(\widetilde{\bm R})(\bm\Gamma^{-i})^T\bm x\Vert
		\le  \sum_{i=1}^{k} \Vert\bm\Gamma^{-i}{f}_{\delta t}^2(\widetilde{\bm R})(\bm\Gamma^{-i})^T\bm x\Vert 
		\le \sum_{i=1}^{k} \alpha^{-2i}\Vert{f}_{\delta t}^2(\widetilde{\bm R})
		\Vert \Vert \bm x\Vert, 
	\end{align*}
	meaning in particular that 
	\begin{align*}
		\Vert \bm Y^{(k)}\Vert 
		\le \Vert{f}_{\delta t}^2(\widetilde{\bm R})
		\Vert\sum_{i=1}^{k} \alpha^{-2i}  = \frac{1-\alpha^{-2k}}{1-\alpha^{-2}} \Vert{f}_{\delta t}^2(\widetilde{\bm R})
		\Vert. 
	\end{align*}
	We can then conclude that the sequences $\lbrace \bm X^{(k)}\rbrace_{k\in\N_0}$ and $\lbrace \bm Y^{(k)}\rbrace_{k\in\N_0}$ are both bounded, and therefore that the sequence $\lbrace \bm \Sigma^{(k)}\rbrace_{k\in\N_0}$ is bounded. Besides, by taking the limit as $k\rightarrow\infty$, we get that $\Vert \bm X^{(k)}\Vert\rightarrow 0$ and that the matrix series $\bm Y^{(k)}$ converges (absolutely). Hence, we end up with
	\begin{equation*}
		\lim_{k\rightarrow \infty} \bm\Sigma^{(k)} 
		= (\sqrt{\bm C})^{-T}\bigg(\lim_{k\rightarrow \infty} \bm X^{(k)} +\lim_{k\rightarrow \infty} \bm Y^{(k)} \bigg)(\sqrt{\bm C})^{-1}
		= (\sqrt{\bm C})^{-T}\bigg(\sum_{i=1}^{\infty} \bm\Gamma^{-i}{f}_{\delta t}^2(\widetilde{\bm R})(\bm\Gamma^{-i})^T\bigg)(\sqrt{\bm C})^{-1}.
	\end{equation*}
	
\end{proof}

Hence, whenever the smallest eigenvalue of the symmetric part of the matrix $\widetilde{\bm B}$ is strictly lower-bounded by $-C_P$, no instability occurs. It turns out that when a divergence-free vector field is considered in SPDE~\eqref{eq:spde}, the recursion is guaranteed to be stable, as stated in the next result.

\begin{corol}
    When considering a time-invariant and divergence-free vector field $\gamma=\gamma_t$ in SPDE~\eqref{eq:spde}, the recursion in \Cref{prop:rec_ad} is stable and satisfies~\eqref{eq:lim}.
\end{corol}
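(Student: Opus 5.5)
The plan is to verify the hypothesis of \Cref{prop:mean_cov_inf}, namely $C_P+\lambda_{\min}(\widetilde{\bm G})>0$. We will in fact show that $\widetilde{\bm G}=0$, i.e. that $\widetilde{\bm B}$ is skew-symmetric. Since $\widetilde{\bm B}=(\sqrt{\bm C})^{-1}\bm B(\sqrt{\bm C})^{-T}$ is a congruence of $\bm B$, we have $\widetilde{\bm B}+\widetilde{\bm B}^T=(\sqrt{\bm C})^{-1}(\bm B+\bm B^T)(\sqrt{\bm C})^{-T}$. It therefore suffices to prove that $\bm B+\bm B^T=0$, i.e. $B_{ij}=-B_{ji}$ for all $i,j$. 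Once this holds, $\lambda_{\min}(\widetilde{\bm G})=0$ and $C_P>0$, so \Cref{prop:mean_cov_inf} applies directly and gives boundedness together with the limits~\eqref{eq:lim}.

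For the skew-symmetry, expand $\dv(\gamma\psi_j)=\psi_j\dv(\gamma)+\langle\gamma,\nabla\psi_j\rangle=\langle\gamma,\nabla\psi_j\rangle$, using $\dv(\gamma)=0$. Then $B_{ij}+B_{ji}=\int\big(\psi_i\langle\gamma,\nabla\psi_j\rangle+\psi_j\langle\gamma,\nabla\psi_i\rangle\big)=\int\langle\gamma,\nabla(\psi_i\psi_j)\rangle$. By the divergence theorem this equals $\int\dv(\gamma\,\psi_i\psi_j)-\int\psi_i\psi_j\dv(\gamma)$. The first integral vanishes because the domain is closed (no boundary), and the second vanishes by the divergence-free assumption. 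On the exact manifold $\mathcal{M}$ with the lifted basis $\psi_k^\ell\in H^1(\mathcal{M})$, i.e. with entries given by~\eqref{eq:coef}, this argument is direct: the functions are Lipschitz and $\gamma$ is smooth, so the integration by parts is valid.

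The main obstacle is the version with entries~\eqref{eq:coef-approx}, computed on $\mathcal{M}_h$ with a piecewise-constant approximation of $\gamma$. Here I would argue face by face. On each planar triangle $T$, $\gamma$ is constant and $\psi_j$ is affine, so $\dv(\gamma\psi_j)=\gamma_T\cdot\nabla\psi_j$. The computation above then gives $B_{ij}+B_{ji}=\sum_T\int_{\partial T}\psi_i\psi_j\,\gamma_T\cdot\nu_T$. These edge terms cancel between adjacent faces only if the normal flux $\gamma_T\cdot\nu_T$ is continuous across each edge. That continuity holds when $\gamma$ is built as the discrete rotated gradient $\vec n\times\nabla\chi$ with $\chi\in V_N^\ell$, as in the Remark on parametrization: the tangential derivative of a continuous piecewise-linear $\chi$ along a shared edge is the same from both sides, so the rotated field has matching normal flux. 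I would therefore state that the corollary holds exactly in the continuous-Galerkin setting~\eqref{eq:coef}, and in the meshed setting for divergence-free fields discretized through the stream function $\chi$. In that case the discrete field is genuinely divergence-free, in the sense of zero divergence on each face and no flux jumps, and the same integration-by-parts identity gives $\bm B^T=-\bm B$.
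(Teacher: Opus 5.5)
Your argument for entries given by \eqref{eq:coef} is correct and is essentially the paper's proof. The Leibniz rule together with integration by parts on the closed manifold gives $\bm B+\bm B^T=0$, hence $\widetilde{\bm G}=0$, and \Cref{prop:mean_cov_inf} then applies because $C_P>0$.

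Your face-by-face treatment of \eqref{eq:coef-approx} goes beyond the paper, which only remarks that stability is no longer guaranteed on the mesh. Your refinement there is a correct sufficient condition: a divergence-free field discretized as $\vec n\times\nabla\chi$, with $\chi$ continuous piecewise linear and the rotation taken in each face's plane with consistently oriented normals, has continuous normal flux across edges.
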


\begin{proof}
    To prove this result, we simply apply \Cref{prop:mean_cov_inf} after noticing that, for a divergence-free vector field, $\tilde{\bm G}=0$.

    Indeed, Let $ 1\le i,j\le n$. On the one hand, using the Leibnitz rule, we have 
	\begin{align*}
		\langle\psi_j^\ell,\dv(\gamma\psi_i^\ell)\rangle
		=\langle\psi_j^\ell,\dv(\gamma)\psi_i^\ell\rangle+\langle\psi_j^\ell,g(\gamma,\nabla\psi_i^\ell)\rangle
        =\langle\psi_j^\ell,g(\gamma,\nabla\psi_i^\ell)\rangle.
	\end{align*}
	On the other hand, the integration by parts formula gives
		\begin{align*}
		\langle\psi_i^\ell,\dv(\gamma\psi_j^\ell)\rangle
		=-\langle\nabla\psi_i^\ell,\gamma\psi_j^\ell\rangle
		=-\langle g(\nabla\psi_i^\ell,\gamma),\psi_j^\ell\rangle.
	\end{align*}
    Hence, we can write, for any $ 1\le i,j\le n$,
	\begin{align*}
	2 {G}_{ij}
	=\langle\psi_i^\ell,\dv(\gamma\psi_j^\ell)\rangle+\langle\psi_j^\ell,\dv(\gamma\psi_i^\ell)\rangle
	=0,
\end{align*}
and therefore $\bm G =0$, which in turn implies that $\tilde{\bm G}=0$ and $\lambda_{\min}(\tilde{\bm G})=0$.
\end{proof}

\begin{rem}
    If we no longer consider that the vector field $\gamma$ is divergence-free, it is possible to show (cf. \Cref{prop:stab})  that the condition  $C_P+\lambda_{\min}(\tilde{\bm G}) > 0$ in \Cref{prop:mean_cov_inf} is satisfied whenever  $\dv(\gamma)$ satisfies the condition 
\begin{equation*}
	\inf_{\mathcal{M}} \bigg(C_P +  \frac{1}{2}\dv(\gamma)\bigg)  > 0.
\end{equation*}  
Note that this last inequality is one of the conditions we assumed for the well-posedness of the SPDE~\eqref{eq:spde}.

However, in practice, when using methods like Surface Finite Elements~\cite{bonito2022numerical}, we replace the surface integrals on $\mathcal{M}$  by integrals computed on its polyhedral approximation $\mathcal{M}_h$ when computing the finite element matrices $\bm C$, $\bm R$ and $\bm B$ (according to~\eqref{eq:coef-approx}). When doing so, differences of order $\mathcal{O}(h^2)$ (where $h$ is the size of the mesh) are to be expected when comparing both types of integrals. Consequently, the stability of the recursion is no longer guaranteed. However, by decreasing the size of the mesh, we reduce the discrepancy between the two types of integrals, and therefore we can retrieve stability (which is guaranteed at the limit $h\rightarrow 0$). In practice, though, we only need to decrease the mesh size until the matrix $\widetilde{\bm G}$ in \Cref{prop:mean_cov_inf} satisfies the inequality $C_P+\lambda_{\min}(\tilde{\bm G}) > 0$.
\end{rem}

Generalizing \Cref{prop:mean_cov_ad,prop:mean_cov_inf} to the case where the advection vector field is time-dependent is straightforward. Indeed, the same arguments as the ones used in the proof of \Cref{prop:mean_cov_ad} can be used to show that covariance matrix $\bm\Sigma^{(k)}$ can be expressed using \eqref{eq:cov_z}, but replacing the products $(\bm\Gamma^{-1})^k$ by $\prod_{i=1}^{k}((\Gamma^{(k-i)})^{-1})$. The stability of the covariance can then be obtained assuming that for any $k$, the matrix $\tilde{\bm G}^{(k)} = \frac{1}{2}(\tilde{\bm B}^{(k)}+(\tilde{\bm B}^{(k)})^{T})$ satisfies $C_P+\lambda_{\min}(\tilde{\bm G}^{(k)}) \ge C_G$ for some $C_G>0$. 

As for the stability of the mean, the extension is less trivial but can be tackled using  stability results of linear difference equations (cf. eg. \citep{elaydi2005introduction}). A particular where the stability is ensured though, is the case where no source term is considered (i.e. $\bm m=\bm 0$). Then, the same requirement on the matrices $\tilde{\bm G}^{(k)}$ as the one given above suffices to ensure stability.

\section{Prediction and inference from data}\label{sec:inf_pred}

In this section, we describe the algorithms used for inference, prediction, and conditional simulation of the spatio-temporal field from observed data.

Let $T>0$. We consider a spatio-temporal process $u(t, s)$ defined on the domain $[0, T] \times \mathcal{M}$, where $\mathcal{M}$ is the spatial surface of interest. Our goal is to model observations of $u$ using a combination of fixed effects (regression on covariates) and random effects representing the underlying spatio-temporal dynamics, modeled as the solution to an advection-diffusion SPDE. An independent measurement noise is added to account for observation error.

More formally, the continuous model can be written as:
$$u(t, s) = \bm \eta(t,s)^T \bm b + Z(t, s) + \epsilon(t, s),$$
where $\bm \eta(t,s)=(\eta_1(t,s), \dots, \eta_q(t,s))^T$ is the vector of covariate fields, $\bm b$ is the vector of the $q$ fixed effect coefficients, $Z(t,s)$ is the random field solution of the SPDE, and $\epsilon(t,s)$ is an independent Gaussian noise with variance $\sigma^2$.

For computational purposes, we discretize the time interval $[0,T]$ into $K+1$ regular time steps of size $\delta t=T/K$ and the manifold $\mathcal{M}$ into the polyhedral surface $\mathcal{M}_h$. We hence write $t_k=k\delta t$ for $k\in \lbrace 0, \dots, K\rbrace$. Let $n_k$ be the number of observations at time $t_k$, located at points  $s_1^{(k)},\dots,s_{n_k}^{(k)}\in\mathcal{M}_h$. Denoting by $\bm u^{(k)}=(u\big(t_k,s_1^{(k)}\big),\dots,u\big(t_k,s_{n_k}^{(k)}\big))^T\in \R^{n_k}$ the vector of observations at time $t_k$, the full observation vector containing all the $N_o= \sum_{k=0}^K n_k$ observations is $\bm U =(\big(\bm u^{(0)}\big)^T, \dots, \big(\bm u^{(K)}\big)^T)^T\in\R^{N_o}$.
Let $\bm\varepsilon$ be a vector of $N_o$ independent standard Gaussian variables and $\sigma>0$. Then the statistical model for the observations takes the form
\begin{equation}\label{eq:obs}
	\bm U = \bm \eta \bm b + \bm A^T \bm Z + \sigma \bm\varepsilon, 
\end{equation}
where $\bm b\in\R^{q}$ is the vector of $q$ fixed effects, $\bm\eta \in \R^{N_o\times q}$ is a matrix of covariates, $\bm Z$ is the vector containing the weights defining the solution of SPDE~\eqref{eq:spde_discr} as in \Cref{prop:Qeuler}, and $\bm A \in \R^{(K+1)N\times N_o}$ is the block diagonal matrix whose $k$-th block ($k\in \lbrace 0, \dots, K\rbrace$)  $\bm A^{(k)}\in\R^{N\times n_k}$ is defined by
\begin{equation*}
	[\bm A^{(k)}]_{ij} =\psi_i(s_j^{(k)}), \quad i\in \lbrace 1, \dots, N\rbrace,\; j\in \lbrace 1, \dots, n_k\rbrace.
\end{equation*}

\subsection{Inference}\label{sec:inf}

In this section we discuss the parameter inference strategy in two different cases, i.e., when data are available at all mesh locations and all temporal steps (we call it "fully observed field") and when data are scattered in space but available at all time steps (we call it "partially observed field").

\subsubsection{Fully observed field}\label{sec:fo}

Let $\bm\theta$ be the vector containing all the parameters that parametrize~\eqref{eq:spde_discr}, and let $\bm\nu=(\bm\theta^T,\bm \mu^{(0)},\bm m^T,\bm b^T,\sigma^2)^T$ be the vector containing all the parameters of the statistical model. Let us first assume that we observe the field $Z$ exactly at all mesh locations, i.e. $\bm U = \bm Z$. In that case, the log-likelihood of the parameters is simply given by
\begin{equation}
	\begin{aligned}
		\mathcal{L}(\bm\nu)=-\frac{N}{2}\log 2\pi
		+\frac{1}{2}\log\vert \bm Q_{\bm Z}(\bm\theta) \vert
		-\frac{1}{2}(\bm Z-\bm \mu_{\bm Z}(\bm\theta))^T \bm Q_{\bm Z}(\bm\theta)(\bm Z-\bm \mu_{\bm Z}(\bm\theta)),
	\end{aligned}
	\label{eq:loglik}
\end{equation}
where the expressions of $\bm \mu_{\bm Z}$ and $\bm Q_{\bm Z}$ are given in~\Cref{prop:Qeuler}. 

This case, where observations are available at all mesh nodes, can be viewed as a calibration setting that facilitates identification of the relationship between the advection field and the variable of interest. The resulting model can then be used in more realistic scenarios with sparse observations (e.g., monitoring stations) to reconstruct or predict the field over the entire mesh.

The log-determinant $\bm Q_{\bm Z}$ in \eqref{eq:loglik} can be deduced from the log-determinants of the diagonal block entries of the matrices $\bm L(\bm \Gamma)$, $\bm D\big(f_0^{-2}(\widetilde{\bm R}),\, {f}_{\delta t}^{-2}(\widetilde{\bm R})\big)$ and $\bm D\big(\sqrt{\bm C}\big)$. We obtain in particular the relations
\begin{equation*}
	\begin{aligned}
		\log\vert \bm Q_{\bm Z}\vert 
		&=2(K+1)\log\vert \sqrt{\bm C}\vert +\sum_{k=0}^{K-1}\log\vert (\bm \Gamma^{(k)})^T({\bm \Gamma}^{(k)})\vert +\log\vert f_0^{-2}(\widetilde{\bm R})\vert+K\log\vert{f}_{\delta t}^{-2}(\widetilde{\bm R})\vert, \\
		&=2(K+1)\log\vert \sqrt{\bm C}\vert +\sum_{k=0}^{K-1}\log\vert (\bm \Gamma^{(k)})^T({\bm \Gamma}^{(k)})\vert  +2\log\vert f_0^{-1}(\widetilde{\bm R})\vert+2K\log\vert{f}_{\delta t}^{-1}(\widetilde{\bm R})\vert
	\end{aligned}
\end{equation*}
Hence, in order to compute the log-determinant, we do not need to build and store the matrix $\bm Q_{\bm Z}$, but rather only need to store the matrices $\widetilde{\bm R}$, $\sqrt{\bm C}$ and $\bm \Gamma^{(k)}$, for $k=0,\dots,K-1$. Indeed, the log-determinants $\log\vert (\bm \Gamma^{(k)})^T({\bm \Gamma}^{(k)})\vert$, $\log\vert{f}_{0}^{-1}(\widetilde{\bm R})\vert=-\log\vert{f}_{0}(\widetilde{\bm R})\vert$ and $\log\vert{f}_{\delta t}^{-1}(\widetilde{\bm R})\vert=-\log\vert{f}_{\delta t}(\widetilde{\bm R})\vert$ are approximated using a stochastic trace estimator (Hutchinson estimator, cf.~\cite{pereira2022geostatistics}) combined with the Lanczos algorithm~\cite{lanczos1950}, which efficiently approximates quadratic forms involving $\log(A)$ using only matrix–vector products. This method, originally proposed in \cite{dong2017scalable}, ensures that the storage needs for that computation remain the same as the number of time steps considered $K$ increases (which is not the case when using directly the matrix $\bm Q_{\bm Z}\in\R^{(K+1)N\times(K+1)N}$).

As for the quadratic form, we first solve the linear system $\bm Q_{\bm Z} \bm X = \bm Y$ iteratively by substitution, exploiting the bi-diagonal structure of the matrix $\bm L\left(\{\bm \Gamma^{(k)}\}_{k=0}^{K-1}\right)$. This approach requires only the ability to compute products between functions of the matrix $\widetilde{\bm R}$ and vectors and to solve linear systems with the matrices $\bm \Gamma^{(k)}$ (cf.~\Cref{alg:linsolveQ}). Afterwards, we only need to compute matrix--vector products involving $\bm Q_{\bm Z}$ (cf.~\Cref{alg:mvprodQ}).

In the likelihood presented above, the initial condition is assumed to be sampled from~\eqref{eq:ic}. As this assumption might not be justified in practice, we present below a framework which allows us to infer the model parameters on a modified likelihood. The idea is to consider the joint distribution of the time steps $(\bm z^{(1)}, \dots, \bm z^{(K)})$ conditioned on the initial condition $\bm z^{(0)}$, rather than the full joint distribution of $\bm Z = (\bm z^{(0)},\bm z^{(1)}, \dots,\bm z^{(K)})^T$. In that way, no modeling assumption is needed on the initial condition $\bm z^{(0)}$, but only the parameters linked to the SPDE dynamics (i.e. all but the parameters linked to the initial condition) can be inferred.

Indeed, given $\bm \Theta \in\R^{N\times N}$ and $\lbrace\bm \Theta^{(k)}\rbrace_{k=0}^{K-1}\subset \R^{N\times N}$, let $\bm L_0\left(\lbrace\bm \Theta^{(k)}\rbrace_{k=0}^{K-1}\right)$, $\bm D(\bm \Theta)\in~\R^{KN\times K N}$ be the block matrices  given by
\begin{equation}
	\bm L_0\left(\lbrace\bm \Theta^{(k)}\rbrace_{k=0}^{K-1}\right)
	=\begin{smallmat}
\bm\Theta^{(0)} & & &\\
-\bm{I} & \bm\Theta^{(1)} & &\\
&\ddots & \ddots &\\
& & -\bm{I} & \bm\Theta^{(K-1)}
\end{smallmat}, \quad 
	\bm D_0(\bm \Theta)
	=\begin{smallmat}
		{\bm \Theta} & &     \\
		&   \ddots  &   \\
		&     & {\bm \Theta}\\
	\end{smallmat}.
	\label{eq:defL_new}
\end{equation}

Noting that, following the recursion (and notations) of \Cref{prop:rec_ad},
\begin{equation}
\bm L_0\left(\lbrace\bm \Gamma^{(k)}\rbrace_{k=0}^{K-1}\right)
\begin{pmatrix}
\bm x^{(1)}\\
\vdots\\
\bm x^{(K)}
\end{pmatrix}
=
\begin{pmatrix}
\bm x^{(0)}+\frac{\delta t}{c}(\sqrt{\bm C})^T \bm m\\
\frac{\delta t}{c}(\sqrt{\bm C})^T \bm m\\
\vdots \\
\frac{\delta t}{c}(\sqrt{\bm C})^T \bm m
\end{pmatrix}
+
\bm D_0({f}_{\delta t}(\widetilde{\bm R}))
\begin{pmatrix}
\bm w^{(1)}\\
\vdots\\
\bm w^{(K)}
\end{pmatrix},
\end{equation}
we can conclude that the distribution of $(\bm z^{(1)}, \dots, \bm z^{(K)}\vert \bm z^{(0)})$ is also multivariate Gaussian with mean $\bm\mu_0$ and precision matrix $\bm Q_0$ given by
\begin{align*}
    \bm\mu_0 &= \bm D_0\left((\sqrt{\bm C})^{-T}\right)\bm L_0\left(\lbrace\bm \Gamma^{(k)}\rbrace_{k=0}^{K-1}\right)^{-1}\bm D_0\left((\sqrt{\bm C})^T\right)\begin{pmatrix}
\bm z^{(0)}+\frac{\delta t}{c} \bm m\\
\frac{\delta t}{c}  \bm m\\
\vdots \\
\frac{\delta t}{c} \bm m
\end{pmatrix}, \\
\bm Q_0 &= \bm D_0\left((\sqrt{\bm C})\right)\bm L_0\left(\lbrace\bm \Gamma^{(k)}\rbrace_{k=0}^{K-1}\right)^{T} \bm D_0({f}_{\delta t}^{-2}\left(\widetilde{\bm R})\right) \bm L_0\left(\lbrace\bm \Gamma^{(k)}\rbrace_{k=0}^{K-1}\right)\bm D_0\left((\sqrt{\bm C})^{T}\right).
\end{align*}
The associated likelihood takes the same form as~\eqref{eq:loglik}, but where $\bm\mu_{\bm Z}$ is replaced by $\bm\mu_0$ and $\bm Q_{Z}$ by $\bm Q_0$.

\subsubsection{Partially observed field}\label{sec:inf_part}

The next proposition, proven in \cite[Section 3.1]{clarotto2024spde}, gives the log-likelihood in the more general setting of scattered observations in space $\bm U$ given by~\eqref{eq:obs}. 

From now on, for any $s>0$ and $\bm Q \in\R^{(K+1)N\times (K+1)N}$ a positive definite matrix, let us denote by $\mathcal{K}(\cdot \,\vert\, \bm Q,s^2)$ the linear map given by
\begin{equation}
	\mathcal{K}(\bm v \,\vert\,\bm Q,s^2)
	=s^{-2}(\bm Q + s^{-2}\bm A\bm A^T)^{-1}\bm A \bm v, \quad \bm v\in\R^{N_o}.
	\label{eq:defK}
\end{equation}

\begin{prop}
	The log-likelihood function of the vector of observations $\bm U$ is given by
	\begin{equation}
		\begin{aligned}
		\mathcal{L}(\bm\nu)=-\frac{N_o}{2}\log 2\pi
		&+\frac{1}{2}\log\vert \bm Q_{\bm U}(\bm\nu) \vert
		-\frac{\sigma^{-2}}{2}
		\Vert \bm U -\bm A^T \bm \mu_{\bm Z}- \bm\eta\bm b\Vert^2_2\\
		&+\frac{\sigma^{-2}}{2}(\bm U -\bm A^T \bm \mu_{\bm Z}- \bm\eta\bm b)^T\bm A^T\mathcal{K}\big(\bm U -\bm A^T \bm \mu_{\bm Z}- \bm\eta\bm b \,\vert\, \bm Q_{\bm Z}(\bm\theta),\sigma^2\big),
		\end{aligned}
		\label{eq:loglik_part}
	\end{equation}
	where $\bm Q_{\bm U}(\bm\nu)=(\bm A^T \bm Q_{\bm Z}(\bm\theta)^{-1}\bm A + \sigma^2\bm I)^{-1}$, and
	\begin{equation*}
		\log\vert \bm Q_{\bm U}(\bm\nu) \vert=-N_o\log\sigma^2+\log\vert \bm Q_{\bm Z}(\bm\theta) \vert-
		\log\vert \bm Q_{\bm Z}(\bm\theta) +\sigma^{-2}\bm A\bm A^T \vert.
	\end{equation*}
	
\end{prop}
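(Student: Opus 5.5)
Since $\bm U = \bm\eta\bm b + \bm A^T\bm Z + \sigma\bm\varepsilon$ with $\bm Z$ Gaussian, $\bm Z\sim\mathcal{N}(\bm\mu_{\bm Z},\bm Q_{\bm Z}^{-1})$ by \Cref{prop:Qeuler}, and $\bm\varepsilon$ independent of $\bm Z$, the vector $\bm U$ is Gaussian. Its mean is $\bm\eta\bm b+\bm A^T\bm\mu_{\bm Z}$ and its covariance is $\bm\Sigma_{\bm U}=\bm A^T\bm Q_{\bm Z}^{-1}\bm A+\sigma^2\bm I$. This gives $\bm Q_{\bm U}=\bm\Sigma_{\bm U}^{-1}$ as stated. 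Writing $\bm r=\bm U-\bm A^T\bm\mu_{\bm Z}-\bm\eta\bm b$, the Gaussian log-density is
\begin{equation*}
\mathcal{L}(\bm\nu)=-\frac{N_o}{2}\log 2\pi+\frac{1}{2}\log\vert\bm Q_{\bm U}\vert-\frac{1}{2}\bm r^T\bm Q_{\bm U}\bm r.
\end{equation*}
The rest of the proof rewrites the quadratic form and the log-determinant so that only sparse $(K+1)N$-dimensional objects appear.

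For the quadratic form, I would apply the Woodbury identity:
\begin{equation*}
(\sigma^2\bm I+\bm A^T\bm Q_{\bm Z}^{-1}\bm A)^{-1}=\sigma^{-2}\bm I-\sigma^{-4}\bm A^T(\bm Q_{\bm Z}+\sigma^{-2}\bm A\bm A^T)^{-1}\bm A .
\end{equation*}
Multiplying on both sides by $\bm r$ gives
\begin{equation*}
\bm r^T\bm Q_{\bm U}\bm r=\sigma^{-2}\Vert\bm r\Vert_2^2-\sigma^{-2}\bm r^T\bm A^T\big[\sigma^{-2}(\bm Q_{\bm Z}+\sigma^{-2}\bm A\bm A^T)^{-1}\bm A\bm r\big].
\end{equation*}
The bracket is exactly $\mathcal{K}(\bm r\,\vert\,\bm Q_{\bm Z},\sigma^2)$ from \eqref{eq:defK}. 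Substituting this into the log-density produces the two $\sigma^{-2}$ terms of \eqref{eq:loglik_part}.

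For the determinant, I would use the matrix determinant lemma in the form $\vert\sigma^2\bm I_{N_o}+\bm A^T\bm Q_{\bm Z}^{-1}\bm A\vert=\sigma^{2N_o}\,\vert\bm I+\sigma^{-2}\bm Q_{\bm Z}^{-1}\bm A\bm A^T\vert$. This follows from Sylvester's identity $\vert\bm I+\bm X\bm Y\vert=\vert\bm I+\bm Y\bm X\vert$ with $\bm X=\bm A^T$ and $\bm Y=\sigma^{-2}\bm Q_{\bm Z}^{-1}\bm A$. The right-hand side equals
\begin{equation*}
\sigma^{2N_o}\,\vert\bm Q_{\bm Z}\vert^{-1}\,\vert\bm Q_{\bm Z}+\sigma^{-2}\bm A\bm A^T\vert .
\end{equation*}
Taking logarithms and a minus sign yields
\begin{equation*}
\log\vert\bm Q_{\bm U}\vert=-N_o\log\sigma^2+\log\vert\bm Q_{\bm Z}\vert-\log\vert\bm Q_{\bm Z}+\sigma^{-2}\bm A\bm A^T\vert .
\end{equation*}

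There is no real obstacle. The only care needed is to justify invertibility: $\bm Q_{\bm Z}$ is positive definite by \Cref{prop:Qeuler}, since $f_0$ and $f_{\delta t}$ are positive and $\bm L$ and $\sqrt{\bm C}$ are invertible. Adding the positive semi-definite matrix $\sigma^{-2}\bm A\bm A^T$ keeps it positive definite, so $\mathcal{K}$ is well defined and both identities apply.
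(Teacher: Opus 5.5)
Your proof is correct. You take the Gaussian marginal of $\bm U$, apply the Woodbury identity to split $\bm r^T\bm Q_{\bm U}\bm r$ into the $\Vert\bm r\Vert_2^2$ term and the $\bm r^T\bm A^T\mathcal{K}(\bm r\,\vert\,\bm Q_{\bm Z},\sigma^2)$ term, and use Sylvester's determinant identity for $\log\vert\bm Q_{\bm U}\vert$; the paper gives no proof of its own and defers to \cite[Section 3.1]{clarotto2024spde}, and your argument is the standard derivation behind that result, so it is essentially the intended route.
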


Therefore, efficient algorithms to evaluate $\mathcal{K}(\cdot \,\vert\, \bm Q,s^2)$ and $\log\vert \bm Q_{\bm U}(\bm\nu)\vert$ are fundamental to evaluate the log-likelihood. A first approach to compute~\eqref{eq:defK} consists in factorizing the matrix $\bm \Psi=(\bm Q + s^{-2} \bm A\bm A^T )$ (using for instance a Cholesky decomposition), and then using the factorization to efficiently compute the log-likelihood.  However, since the matrix $\bm \Psi$ has size $N(K+1)\times N(K+1)$, building, storing and factorizing can become computationally prohibitive in some applications where either $N$ or $K$ (or both) are large.

In this paper, an alternative approach is used to evaluate $\mathcal{K}(\cdot \,\vert\, \bm Q,s^2)$. It consists in solving the linear system in~\eqref{eq:defK} using a matrix-free iterative algorithm~\cite{saad2003iterative}. Such algorithms yield an approximate solution of the linear system through an iterative process which requires at each iteration products between $\bm \Psi$ and vectors (see, for example, \cite[Section 4]{pereira2022geostatistics} and \cite[Section 3.1]{clarotto2024spde}). Such products can in turn be evaluated without having to explicitly build the matrix $\bm \Psi$, but using instead the diagonal block structure of $\bm A$. In this setting, only the \q{spatial} matrices $\widetilde{\bm R}$, $\sqrt{\bm C}$, $\bm \Gamma^{(k)}$ ($0\le k\le K-1$) and $\bm A^{(k)}$ ($0\le k\le K$), which are sparse and of size at most $N\times N$, are stored. In this paper, the Generalized minimal residual method or GMRES (see, for example, \cite{ascher2011}) is used. It is an iterative Krylov method based on the Lanczos algorithm~\cite{lanczos1950} which only requires matrix-vector products. To improve its convergence, a block Gauss-Seidel symmetric preconditioner is used (cf. \Cref{alg:blockGSPrecond}).
The cost of this system resolution will be exactly the same that will be needed for a complete spatio-temporal prediction by kriging, outlined in section \ref{sec:krig}.

The log-determinant terms are evaluated using matrix-free approaches based on Hutchinson trace estimators \cite{han2015large}. The maximization of the log-likelihood is tackled using the Nelder-Mead algorithm \cite{nelder1965simplex}, which only requires evaluations of the cost function.

\subsection{Prediction by kriging}\label{sec:krig}

Predictions of the spatio-temporal field are tackled using conditional expectations (and variances), following the approach outlined by \citet{clarotto2024spde,pereira2022geostatistics}. 
The next proposition provides explicit formulas for the computation of the conditional expectation and variance of the field $\bm Z$ given the observations $\bm U$.

\begin{prop}\label{prop:krig}
	The conditional expectation (also called kriging predictor) $\e[\bm Z\vert \bm U]$ of $\bm Z$ given $\bm U$ is given by
	\begin{equation*}
		\e[\bm Z\vert \bm U]=\bm \mu_{\bm Z}+\mathcal{K}(\bm U -\bm A^T \bm \mu_{\bm Z} - \bm\eta{\bm b} \,\vert\, \bm Q_{\bm Z},\sigma^2),
	\end{equation*}
where the expressions of $\bm \mu_{\bm Z}$ and $\bm Q_{\bm Z}$ are given in~\Cref{prop:Qeuler}.
	 Besides, the conditional variance $\var[\bm Z\vert \bm U]$ is given by
	\begin{equation*}
		\var[\bm Z\vert \bm U]=\e\big[(\bm Z-\e[\bm Z\vert \bm U])(\bm Z-\e[\bm Z\vert \bm U])^T\vert \bm U\big]=(\bm Q_{\bm Z} + \sigma^{-2}\bm A\bm A^T)^{-1}.
	\end{equation*}
\end{prop}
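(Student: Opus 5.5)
The plan is to treat $(\bm Z,\bm U)$ as a jointly Gaussian vector and apply the standard Gaussian conditioning formulas. The Schur complement and Woodbury identities then turn these formulas into expressions involving only $\bm Q_{\bm Z}$, $\bm A$ and $\sigma^2$.

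\textbf{Joint law.} By \Cref{prop:Qeuler}, $\bm Z\sim\mathcal{N}(\bm\mu_{\bm Z},\bm Q_{\bm Z}^{-1})$. By the observation model~\eqref{eq:obs}, $\bm U=\bm\eta\bm b+\bm A^T\bm Z+\sigma\bm\varepsilon$, where $\bm\varepsilon$ is independent of $\bm Z$. Hence $(\bm Z,\bm U)$ is jointly Gaussian, with
\[
\e[\bm U]=\bm\eta\bm b+\bm A^T\bm\mu_{\bm Z},\qquad
\cov[\bm Z,\bm U]=\bm Q_{\bm Z}^{-1}\bm A,\qquad
\var[\bm U]=\bm A^T\bm Q_{\bm Z}^{-1}\bm A+\sigma^2\bm I .
\]
Note that $\var[\bm U]=\bm Q_{\bm U}^{-1}$, matching the previous proposition.

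\textbf{Conditioning formulas.} The classical Gaussian conditioning formulas give
\[
\e[\bm Z\vert\bm U]=\bm\mu_{\bm Z}+\bm Q_{\bm Z}^{-1}\bm A\bigl(\bm A^T\bm Q_{\bm Z}^{-1}\bm A+\sigma^2\bm I\bigr)^{-1}\bigl(\bm U-\bm A^T\bm\mu_{\bm Z}-\bm\eta\bm b\bigr),
\]
\[
\var[\bm Z\vert\bm U]=\bm Q_{\bm Z}^{-1}-\bm Q_{\bm Z}^{-1}\bm A\bigl(\bm A^T\bm Q_{\bm Z}^{-1}\bm A+\sigma^2\bm I\bigr)^{-1}\bm A^T\bm Q_{\bm Z}^{-1}.
\]
The conditional variance does not depend on the value of $\bm U$, so it also equals the conditional second moment of $\bm Z-\e[\bm Z\vert\bm U]$ stated in the proposition.

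\textbf{Rewriting in precision form.} This is the only step requiring care. For the variance, apply the Woodbury identity
\[
(\bm Q_{\bm Z}+\sigma^{-2}\bm A\bm A^T)^{-1}=\bm Q_{\bm Z}^{-1}-\bm Q_{\bm Z}^{-1}\bm A(\sigma^2\bm I+\bm A^T\bm Q_{\bm Z}^{-1}\bm A)^{-1}\bm A^T\bm Q_{\bm Z}^{-1}.
\]
This immediately gives $\var[\bm Z\vert\bm U]=(\bm Q_{\bm Z}+\sigma^{-2}\bm A\bm A^T)^{-1}$.

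For the mean, I would verify the push-through identity
\[
\bm Q_{\bm Z}^{-1}\bm A(\bm A^T\bm Q_{\bm Z}^{-1}\bm A+\sigma^2\bm I)^{-1}=\sigma^{-2}(\bm Q_{\bm Z}+\sigma^{-2}\bm A\bm A^T)^{-1}\bm A.
\]
To do so, multiply on the left by $(\bm Q_{\bm Z}+\sigma^{-2}\bm A\bm A^T)$ and on the right by $(\bm A^T\bm Q_{\bm Z}^{-1}\bm A+\sigma^2\bm I)$. Both sides then reduce to $\bm A+\sigma^{-2}\bm A\bm A^T\bm Q_{\bm Z}^{-1}\bm A$.

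Both inverted matrices are invertible. Indeed, $\bm Q_{\bm Z}$ is positive definite, $\bm A\bm A^T$ is positive semi-definite, and $\sigma^2>0$, so $\bm Q_{\bm Z}+\sigma^{-2}\bm A\bm A^T$ and $\bm A^T\bm Q_{\bm Z}^{-1}\bm A+\sigma^2\bm I$ are both positive definite.

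Comparing the resulting expression with the definition~\eqref{eq:defK} of $\mathcal{K}(\cdot\,\vert\,\bm Q_{\bm Z},\sigma^2)$ yields the stated formula for the kriging predictor.
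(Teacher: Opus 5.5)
Your proof is correct and takes the same approach as the paper. The paper's proof only notes that $(\bm Z^T,\bm U^T)^T$ is jointly Gaussian and defers the computation to an earlier reference. Your Gaussian conditioning formulas, together with the Woodbury and push-through identities, supply exactly the algebra needed to turn the covariance-form expressions into the precision form involving $\mathcal{K}(\cdot\,\vert\,\bm Q_{\bm Z},\sigma^2)$ and $(\bm Q_{\bm Z}+\sigma^{-2}\bm A\bm A^T)^{-1}$.
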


\begin{proof}
	This result is a direct consequence of the fact that the vector $(\bm Z^T, \bm U^T )^T$ is Gaussian, and a complete proof is given in~\cite[Proposition 3.1]{pereira2022geostatistics}.
\end{proof}

	For $k\in\lbrace 0,\dots, K\rbrace$, the spatial prediction $Z^*(t_k,p)$ of the field $Z(t_k,\cdot)$ at any location $p\in\mathcal{M}_h$ is deduced from the conditional expectation $\e[\bm Z\vert \bm U]$ by leveraging the linearity of the (conditional) expectation, thus giving:
\begin{equation}
	Z^*(t_k,p)=\e[Z(t_k,p)\vert \bm U]=\begin{smallmat}
		\psi_1(p) \\
		\vdots \\
		\psi_N(p) 
	\end{smallmat}^T \e[\bm Z\vert \bm U].
	\label{eq:spat_pred}
\end{equation}
Time extrapolation at times $t_k$, $k>K$, can be handled in a similar fashion. Indeed, by taking the conditional expectation $\e[~\cdot~\vert \bm U]$ on both sides of the recursion~\eqref{eq:rec_ad} we get
\begin{equation}
	\e[\bm z^{(k+1)}\vert \bm U]
	=(\sqrt{\bm C})^{-T} \bm(\Gamma^{(k)})^{-1} (\sqrt{\bm C})^T\bigg(\e[\bm z^{(k)}\vert \bm U]+\frac{\delta t}{c}{\bm m}\bigg), \quad  k\ge K,
\end{equation}
where we recall that $\bm z^{(k)}$ is the weight vector defining the solution $Z$ at time $t_k$ (cf. \Cref{prop:rec_ad}), and $\e[\bm z^{(K)}\vert \bm U]$ corresponds to the $N$ last rows of $\e[\bm Z\vert \bm U]$. Then spatial predictions at any locations can once again be obtained using~\eqref{eq:spat_pred}.

Following the same approach outlined in section \ref{sec:inf_part}, we can compute kriging predictions by preconditioned GMRES method.

Finally, note that the conditional expectation can also be used to generate conditional simulations at time steps $t_k$, $0\le k\le K$ of the field $Z$, by leveraging the fact that the conditional variance does not depend explicitly on the conditioning data $\bm U$. This approach,  presented in more details in \cite[Section 3.3]{clarotto2024spde}, is recalled below:
\begin{enumerate}
	\item Compute a non-conditional simulation $\bm Z_{NC}$ by running the recursion \eqref{eq:rec_ad}.
	\item Generate new observations by computing $\bm U_{NC} = \bm \eta \bm b + \bm A^T \bm Z_{NC} + \sigma \bm\varepsilon_{NC}$.
	\item Compute the residuals  $\bm r_{NC}= \bm Z_{NC} - \e[\bm Z_{NC}\vert \bm U_{NC}]$.
	\item Return the conditional simulation $\bm Z_C=\e[\bm Z\vert \bm U] + \bm r_{NC}$.
\end{enumerate}
Conditional simulations at further time steps are then obtained using once again the recursion \eqref{eq:rec_ad}.

\section{Application}

We demonstrate our methodology using data from the ECMWF Atmospheric Composition Reanalysis 4 (EAC4) dataset~\cite{innes2019CAMS}, focusing on a 3-day period from December 1 to December 3, 2024. The dataset provides total aerosol optical depth (AOD) at 550 nm, recorded every 3 hours for a total of 21 time steps on a grid with a spatial resolution of $0.75^\circ \times 0.75^\circ$. Aerosol optical depth measures the extinction of solar radiation by aerosol particles in the atmosphere, with higher AOD values corresponding to heavier aerosol loads. It serves as a key indicator of dust concentration.

As a first preprocessing step, we apply a Box-Cox transformation with $\alpha = 0.15$ to approximate Gaussianity for the AOD values~\cite{box1964transform}. The transformed data are then centered and scaled (\Cref{fig:aod}). We model the transformed AOD data as a realization of a spatio-temporal Gaussian process governed by a source-free advection-diffusion SPDE defined on a Riemannian manifold (here, the sphere), as introduced in the previous sections.  

We discretize the globe using an icosahedron-based uniform triangulation of 163842 nodes. Since the number of observation locations is only slightly larger than the number of mesh points, we assign data to the mesh by associating each node with the value of its nearest observation.

We take $M_S=0$ (which implies that $\bm m=\bm 0$), $\bm \mu^{(0)}=\bm 0$, $P(\lambda)=\kappa^2+\lambda$, $f_S(\lambda)=(\kappa_S^2+\lambda)^{-1}$ and $f_0(\lambda)=(\kappa_{in}^2+\lambda)^{-1}$. To account for advection, we extract the 10m u-component and v-component wind fields from the same EAC4 dataset, using the same temporal and spatial resolution. These components are used as a proxy for the advection field. Since the real wind field is not necessarily divergence-free, in order to ensure the stability of the recursion in~\Cref{prop:rec_ad}, we decompose the field with the Helmholtz-Hodge decomposition (cf. end of \Cref{sec:model}) and only select the divergence-free term for our case study. This implies forgetting about small scales and tropical patterns in the wind flow. The vector field $\gamma_t$ is temporally varying, which implies that the (scaled) advection matrix $\widetilde{\bm B}^{(k)}$ is different for each time step $k$. 
Moreover, a global advection scaling coefficient $c_{adv}$ is multiplied to the advection term to modulate the influence of the wind field and compensate for potential discrepancies in magnitude. We assume a spatio-temporal process with zero mean and an explicit measurement error component with variance $\sigma^2$, without incorporating fixed effects.

\begin{figure}
	\centering
	{\includegraphics[width=0.85\textwidth]{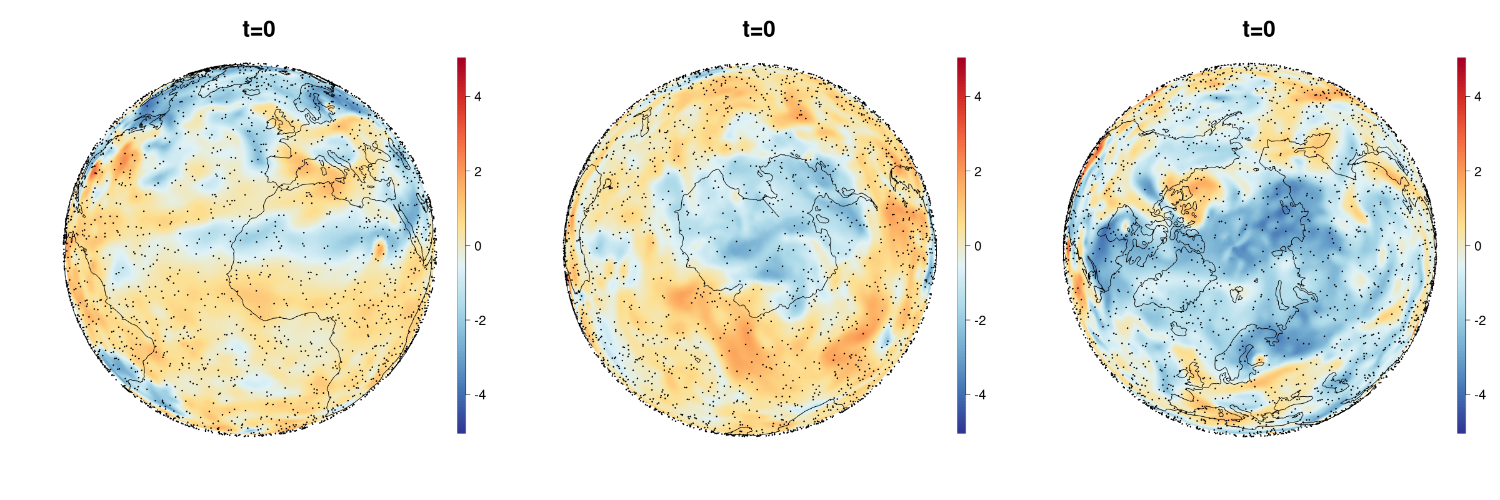}}\\{\includegraphics[width=0.85\textwidth]{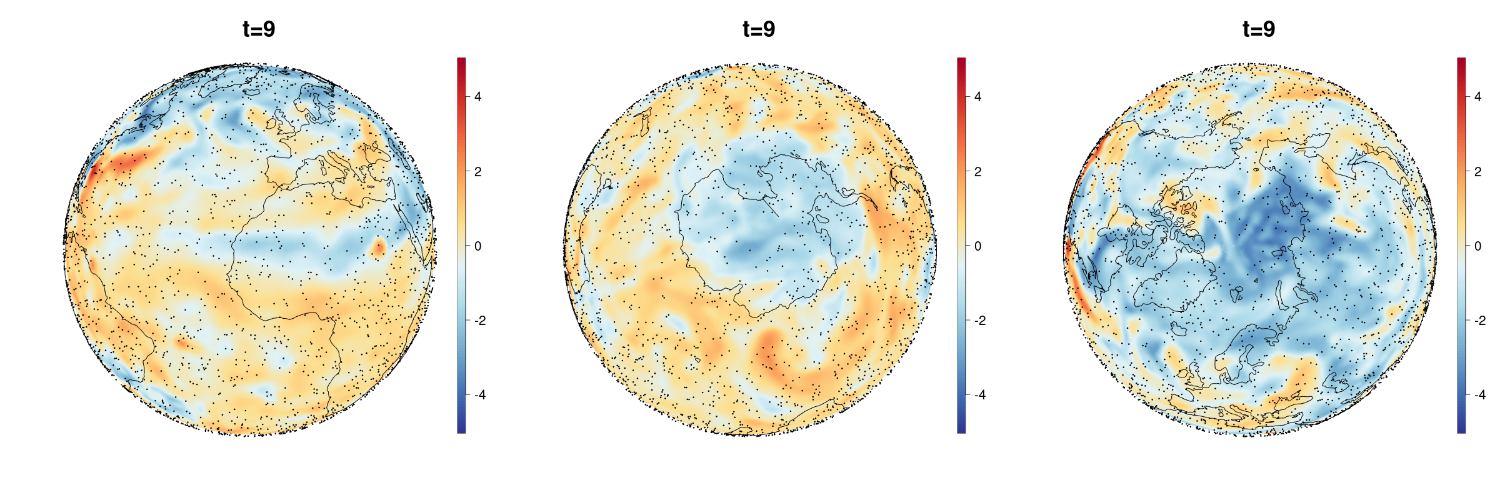}}\\{\includegraphics[width=0.85\textwidth]{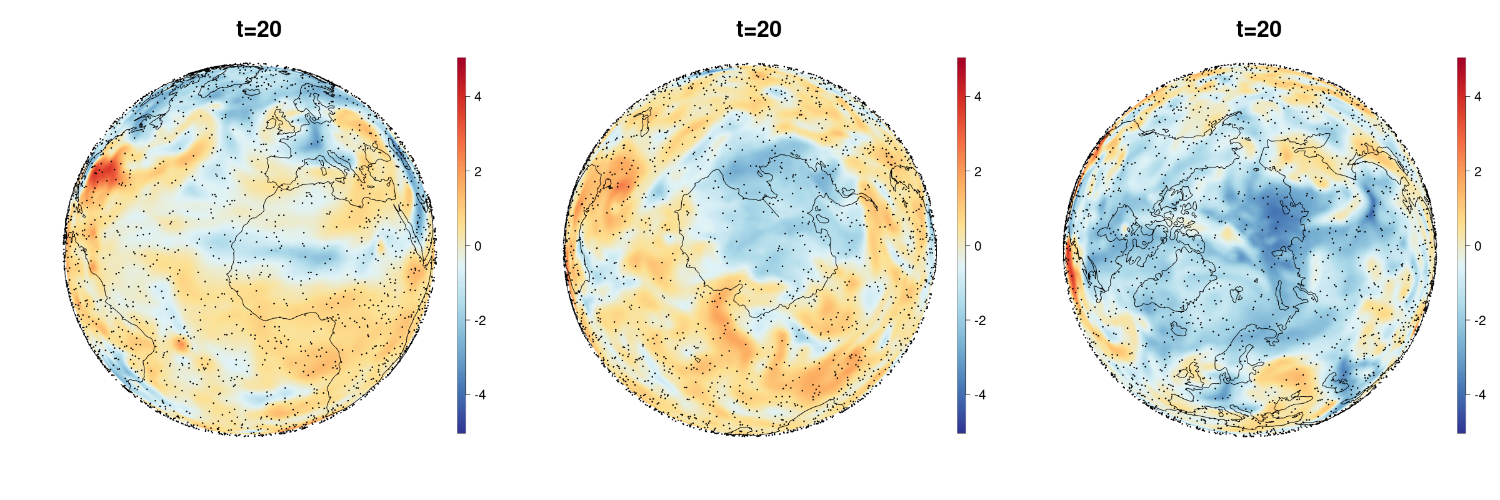}}
	\caption{\link{https://mike-pereira.github.io/PRED_STRF/dat-aod}{AOD data (after preprocessing) on the globe represented from three different viewpoints on the surface at three different time steps.}}\label{fig:aod}
\end{figure}

5000 spatial locations (approximately $3\%$ of the total) are randomly selected among the mesh nodes, and treated as measuring stations at all time steps. Our goal is to estimate the parameters of the SPDE model: the damping coefficient $\kappa^2$, the advection scaling parameter $c_{adv}$, the temporal scaling factor $c$, the range parameter $\kappa^2_{in}$ of the stationary spatial diffusion and unit-variance SPDE model at the initial time step, the range parameter $\kappa_S^2$ and variance scaling $\sigma_S^2$ characterizing the colored noise, and the variance of the noise $\sigma^2$. The vector of parameters $\bm \nu_{\text{AD}} = (\kappa^2,c,c_{adv},\kappa^2_{in},\kappa_S^2,\tau,\sigma)^T$ is inferred by maximizing the Gaussian log-likelihood of the observed data of equation~\eqref{eq:loglik_part}.

To reduce the computational burden of inference, we restrict the observations to the first 11 time steps. This will only be done for the inference step, and predictions will nevertheless be carried out on the 21 available time steps. We use the Nelder-Mead algorithm for the optimization \cite{nelder1965simplex}, which relies only on evaluations of the log-likelihood.

To assess the models' prediction capability, we perform kriging and conditional simulations using the same 5000 observations locations, taken over the 21 available time steps. The kriging results are shown in \Cref{fig:krig_aod}. They clearly illustrate that the advection-diffusion model successfully reproduces the spatio-temporal dynamics of dust transport driven by wind advection and diffusion, even in the presence of a limited number of measurement stations. The smoother appearance of the maps is consistent with the inherent characteristics of classical kriging. 

\begin{figure}
	\centering
	{\includegraphics[width=0.85\textwidth]{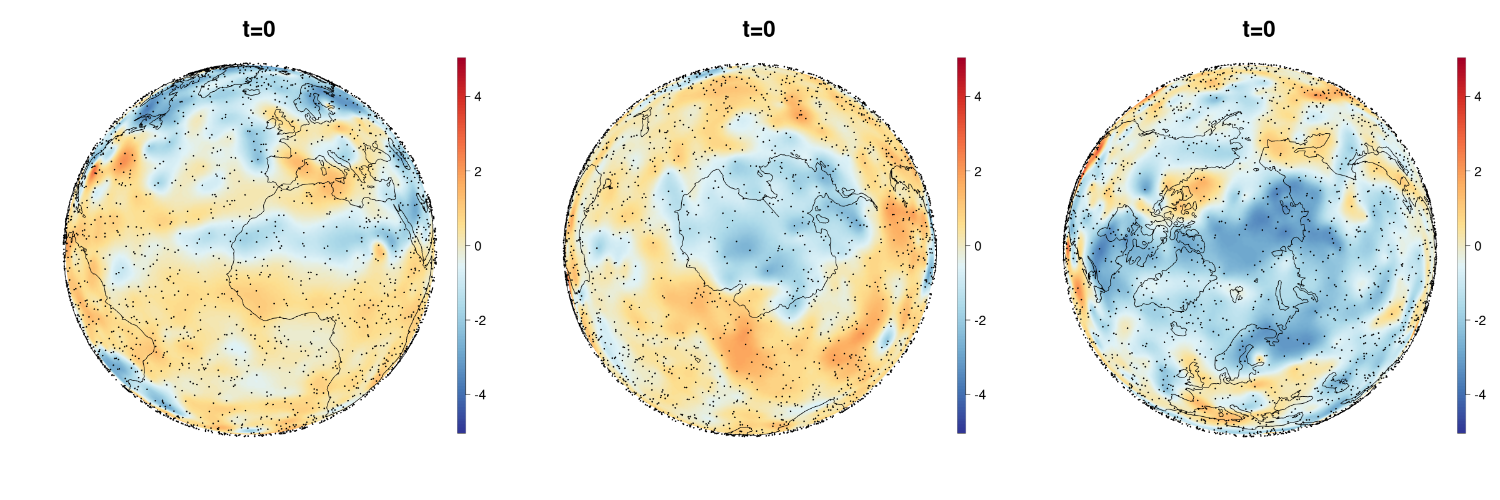}}\\{\includegraphics[width=0.85\textwidth]{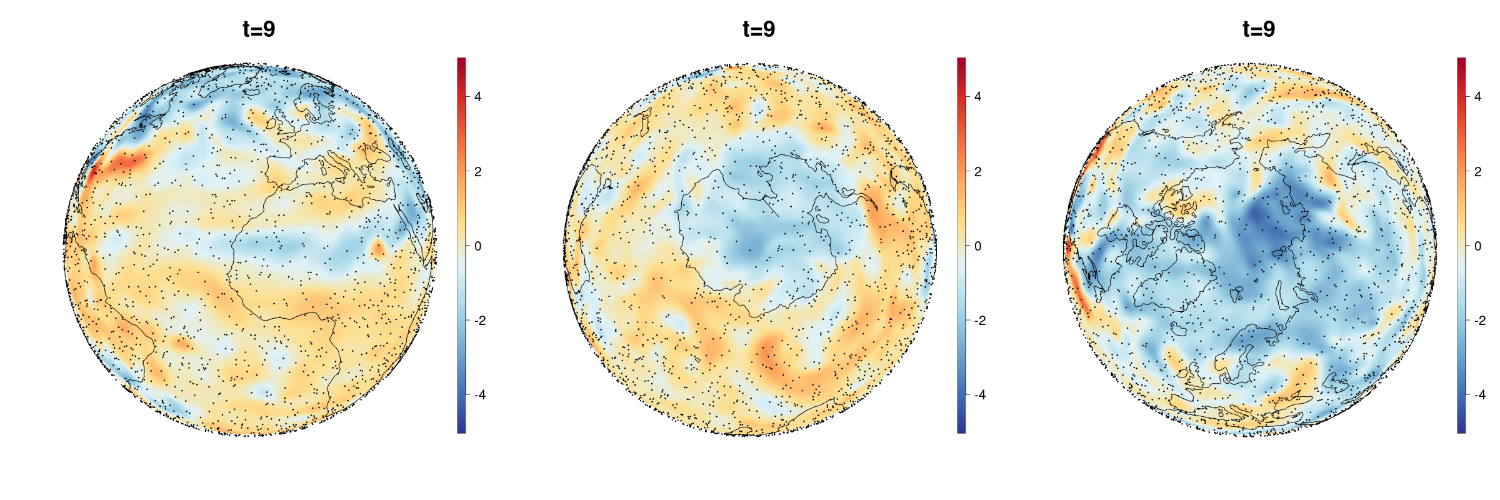}}\\{\includegraphics[width=0.85\textwidth]{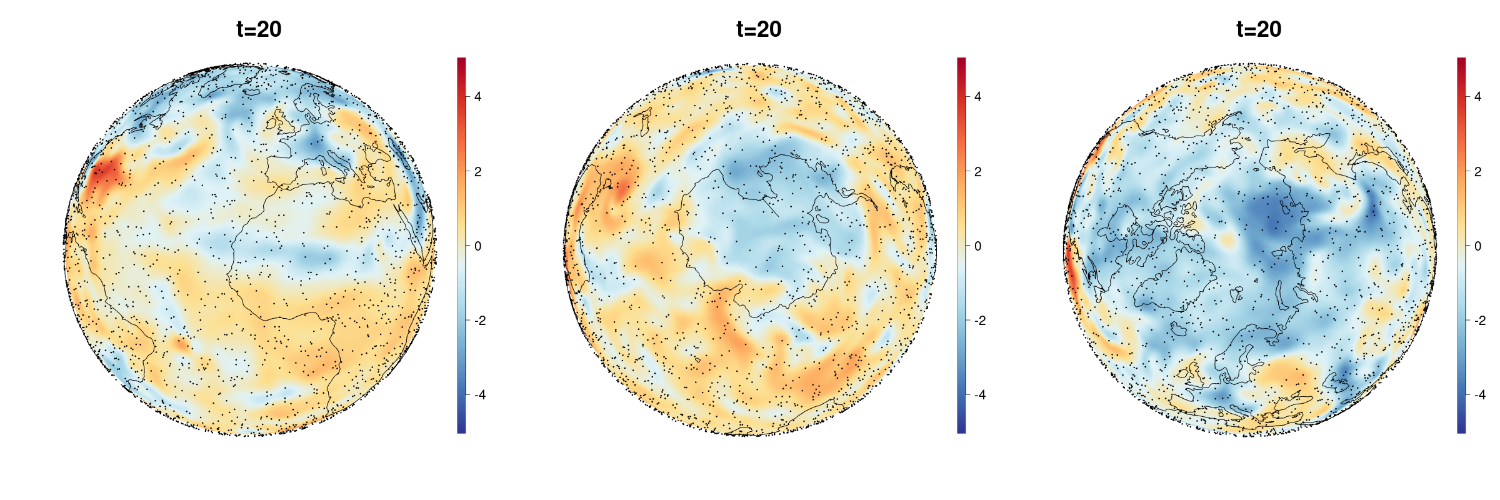}}
	\caption{\link{https://mike-pereira.github.io/PRED_STRF/krig-aod}{Spatio-temporal kriging of AOD on the globe (advection-diffusion model) represented from three different viewpoints on the surface at three different time steps. The black points locate the places where the observations are taken.}}\label{fig:krig_aod}
\end{figure}

We compare the predictions obtained from the advection-diffusion SPDE model with predictions obtained with a simpler model: a purely diffusive SPDE~\cite{lindgren2024diffusion}. Starting from the same observations, we therefore fit a model which does not account for advection, and therefore parametrized by a vector  $\bm \nu_{\text{D}}=(\kappa^2,c,\kappa^2_{in},\kappa_S^2,\tau,\sigma)^T$. We compare this model to the advection-diffusion model based on several criteria. 

First, the root mean squared error (RMSE) between the kriging predictors and the observed data, averaged over all spatial locations on the sphere, is computed at each time step. In addition, an overall spatio-temporal RMSE is evaluated. The results are reported in \Cref{fig:rmse}: subfigure (A) displays the spatial RMSE for the time steps used in the inference procedure (0 to 10), while subfigure (B) shows the spatial RMSE at both inference time steps (0 to 10) and subsequent time steps (11 to 20), where observations are used for prediction only.

In both settings, the advection–diffusion model outperforms the diffusion model, with an improvement of approximately $6\%$.

\Cref{fig:rmse}(B) further indicates that it is not necessary to perform inference using the full set of observations. When the model is calibrated using only the first 11 time steps, the predictive performance at later times remains consistent, and the relative behavior between the models is preserved. This suggests that the main conclusions are robust to a reduced inference window, and that reliable predictive performance can be achieved without exploiting the entire dataset for parameter estimation.

Finally, \Cref{fig:rmse_extrap} presents the spatial RMSE of extrapolated predictions over the six time steps following the inference window. In this setting, observations are assumed unavailable, so all predictions are fully out-of-sample. We additionally compare forecasts obtained from inference with partially observed fields (PO) to those derived from fully observed fields (FO) over time steps 0 to 10, as described in \Cref{sec:fo}. As expected, RMSE values are higher in this extrapolation setting due to the lack of observational data. Nevertheless, the qualitative behavior remains consistent: the PO advection–diffusion model systematically outperforms the PO diffusion model, with an improvement of
$3\%$. Moreover, the fully observed (FO) advection–diffusion model exhibits performance comparable to its partially observed (PO) counterpart, except for the first extrapolation step, where the FO model performs better due to the proximity to step 10, at which the field is observed at all locations. These results indicate that the advection–diffusion model, even when trained on incomplete fields, effectively captures transport dynamics and leverages them to produce reliable forecasts beyond the observation window.

\begin{figure}
    \centering
    \begin{subfigure}{0.48\textwidth}
        \includegraphics[width=\textwidth]{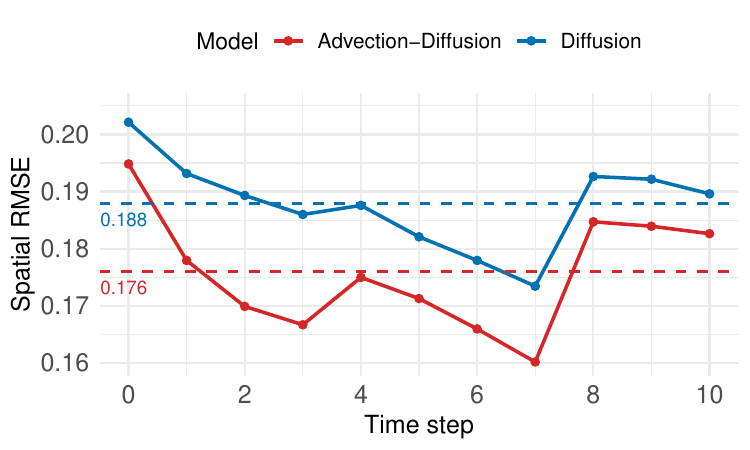}
        \caption{At inference time steps only}
    \end{subfigure}
    \begin{subfigure}{0.48\textwidth}
        \includegraphics[width=\textwidth]{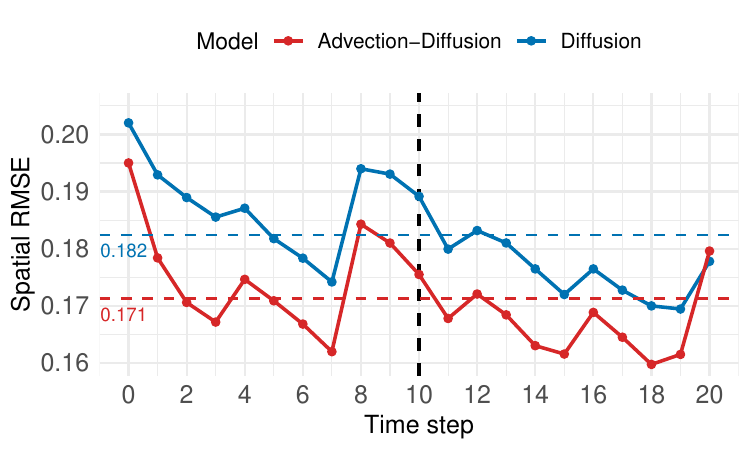}
        \caption{At all time steps}
    \end{subfigure}
    \caption{Spatial RMSE of the kriging predictors, at each time step, for the two prediction cases, with respective averages (dashed lines).}
    \label{fig:rmse}
\end{figure}

\begin{figure}
    \centering
    \includegraphics[width=0.6\textwidth]{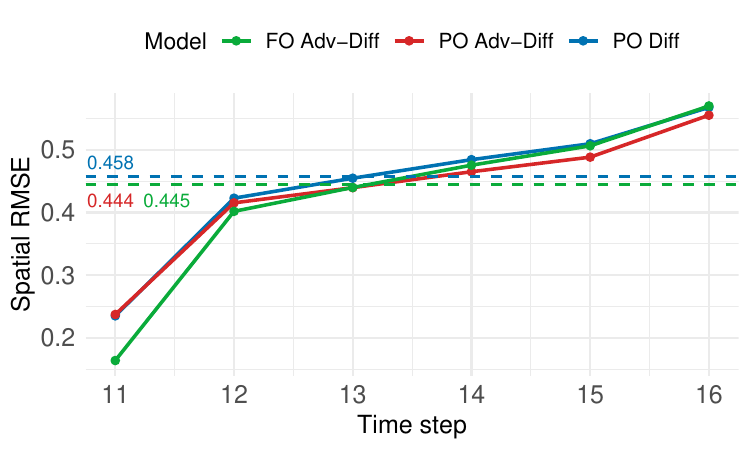}
    \caption{Spatial RMSE at each extrapolated time step, with respective averages (dashed lines).}
    \label{fig:rmse_extrap}
\end{figure}

Then, to assess the probabilistic prediction performance of the models, we rely on two scoring rules, the CRPS~\cite{matheson76} and the Variogram Score (VS)~\cite{scheuerer2015}, which are defined in Appendix~\ref{prob_score} and computed using 50 conditional simulations. The CRPS is used to evaluate the predictive performance of marginal distributions, whereas the VS is designed to assess the quality of bivariate dependence structures across pairs of locations. To compute these scores, we consider 32 random sets of 5000 (non-overlapping) spatial locations, which we keep fixed across the time steps. The weights in the VS score are chosen as the inverse of the total number of pairs of component combinations, and the order is set to $p=2$, so that the score is based on squared differences and places greater emphasis on larger discrepancies in spatial dependence. For each resulting set of spatio-temporal locations, we compute the average CRPS and the VS, and compare the obtained results. The result are shown in \Cref{fig:scores}.

The advection–diffusion model shows an improvement of $4\%$ in CRPS and of $6\%$ in VS over the diffusion model, which implies that the advection–diffusion model captures the spatial dependence and structure of the process more accurately.

\begin{figure}
    \centering
    \begin{subfigure}{0.48\textwidth}
        \includegraphics[width=\textwidth]{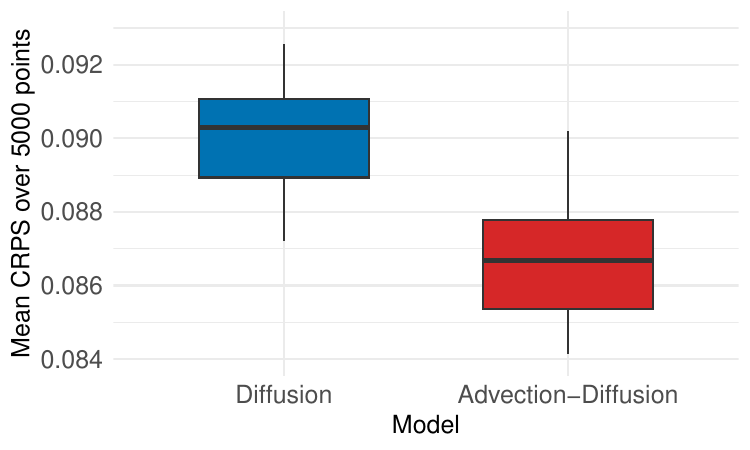}
        \caption{Average CRPS}
    \end{subfigure}
    \begin{subfigure}{0.48\textwidth}
        \includegraphics[width=\textwidth]{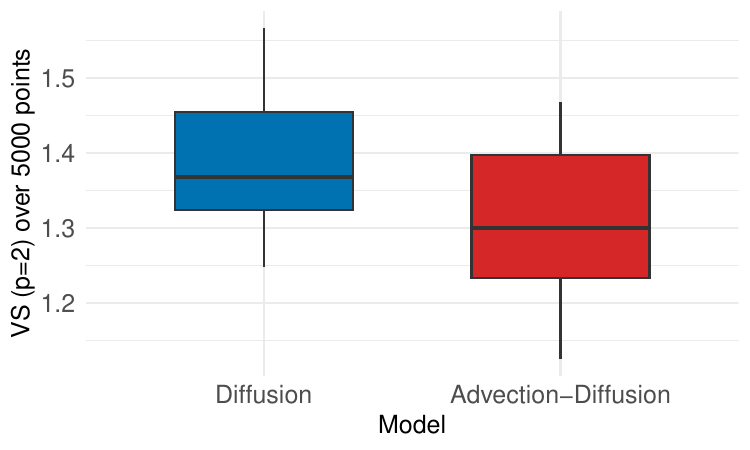}
        \caption{Variogram Score}
    \end{subfigure}
    \caption{Scores computed for the two models.}
    \label{fig:scores}
\end{figure}

As an illustration of how the two models differ, \Cref{fig:detail} presents their predictions at a given time step for both the advection–diffusion and diffusion models. The advection–diffusion model clearly captures the non-stationary channel-type behavior of AOD over central Antarctica. In contrast, the purely diffusion-based model fails to reproduce the underlying transport dynamics and performs worse in resolving the finer-scale structure of this feature.

\begin{figure}
    \centering
    \begin{subfigure}{0.25\textwidth}
        \includegraphics[width=\textwidth]{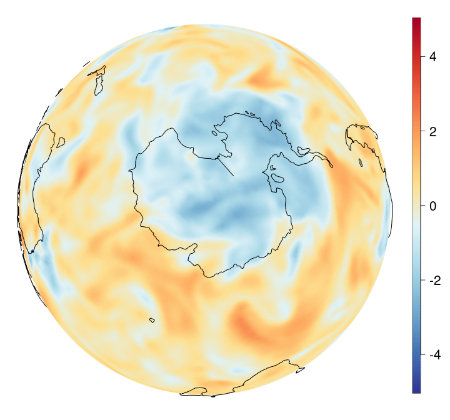}
        \caption{AOD data}
    \end{subfigure}
    \hspace{0.6cm}
    \begin{subfigure}{0.25\textwidth}
        \includegraphics[width=\textwidth]{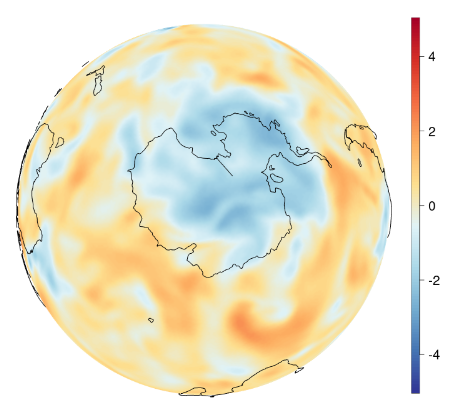}
        \caption{Advection-diffusion model}
    \end{subfigure}
    \hspace{0.6cm}
    \begin{subfigure}{0.25\textwidth}
        \includegraphics[width=\textwidth]{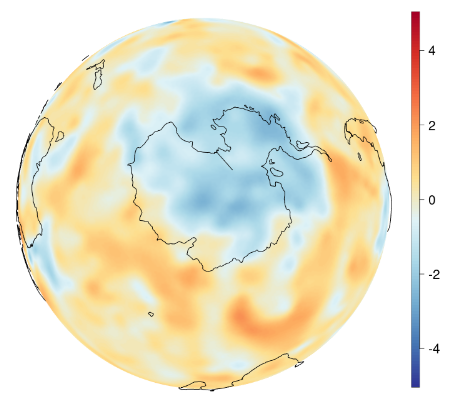}
        \caption{Diffusion model}
    \end{subfigure}
    \caption{AOD data and predictions with advection-diffusion and diffusion models at a given time step.}
    \label{fig:detail}
\end{figure}

This case study highlights the flexibility and scalability of SPDE-based Gaussian process models for analyzing high-resolution environmental data.

The reproducible code associated with this application is publicly available on GitHub at 
\url{https://github.com/mike-pereira/adv-diff-spde}.

\section{Conclusion and Discussion}
In this work, we introduced models for spatio-temporal GRFs on compact meshed surfaces, defined as solutions to advection–diffusion SPDEs. Our modeling framework easily accommodates non-stationary fields characterized by a temporally varying advection field. Our contribution follows recent developments in spatio-temporal statistics based on the SPDE formalism and offers the possibility to build and characterize increasingly precise and complex models. 

We derived numerically efficient algorithms for inference, simulation, and prediction by exploiting Galerkin discretization and matrix-free techniques. We showed how these algorithms can be implemented to achieve scalable computations even for very large meshes. The application to the aerosol dataset demonstrated that the model provides accurate predictions of a global environmental phenomenon governed by advection and diffusion processes. Moreover, it outperforms purely diffusive models in predictive performance, yielding improvements in both RMSE and probabilistic scores.

The SPDE model~\eqref{eq:spde} considered in this work assumes a polynomial diffusion term. That said, extending our framework to more general choices of $P$ is relatively straightforward. For instance, if $P$ is taken to be a function bounded away from zero, most of our analysis carries over with only minor modifications, thereby allowing for the inclusion of fractional operators. This generalization, however, introduces additional computational challenges. In particular, the matrices $\bm{\Gamma}^{(k)}$ would involve functions of matrices, which must be approximated numerically (e.g., using rational SPDE approximation methods \citep{bolin2020rational}, implemented in the \texttt{rSPDE} package \citep{bolin2023rspde}). Developing efficient implementations tailored to these specific operators is an interesting direction for future work.

Finally, future research could build on the proposed framework by further improving the inference approach, particularly by developing computationally feasible and efficient alternatives to gradient-free optimization. Further application-oriented developments could aim to integrate covariate-dependent fixed effects, enabling a deeper exploration of the aerosol physical process and enhancing the model's predictive skill.



\bibliographystyle{abbrvnat}
\bibliography{biblio}
\addcontentsline{toc}{section}{References}



\pagebreak

\appendix

\vspace{1em}
\begin{center}
	\Large\bfseries
	\textsc{Appendix}
\end{center}

\section{Mathematical tools}

\subsection{Functions of the Laplacian and colored noise}\label{sec:funcL}

Let $\lbrace \lambda_k\rbrace_{k\in\N}$ denote the set of eigenvalues of the Laplace–
Beltrami operator $\lb$ on $(\mathcal{M},g)$, and $\lbrace e_k\rbrace_{k\in\N}$ denote the associated eigenfunctions. In particular, $\lbrace e_k\rbrace_{k\in\N}$ form a basis of the space $L^2(\mathcal{M})$.
For $f : \R_+ \rightarrow \R$, we define $\mathcal{D}_f \subset L^2(\mathcal{M})$ as
\begin{equation*}
	\mathcal{D}_f = \lbrace \phi \in L^2(\mathcal{M}) : \sum_{j\in\N} f(\lambda_j)^2 \langle \phi, e_j\rangle^2 < \infty \rbrace.
\end{equation*}
Note that if $f$ is bounded, then $\mathcal{D}_f = L^2(\mathcal{M})$.
Then, the \textit{function of the Laplacian} $f(\lb)$ is the operator $f(\lb) : \mathcal{D}_f \rightarrow L^2(\mathcal{M})$ defined by 
\begin{equation*}
	f(\lb)\phi = \sum_{j\in\N} f(\lambda_j) \langle \phi, e_j\rangle e_j, \quad \phi\in L^2(\mathcal{M}).
\end{equation*}
Let then $f_S : \R_+ \rightarrow \R$ be a bounded function, and $\lbrace w_j\rbrace_{j\in\N}$ be a sequence of independent standard Gaussian variable. We call \textit{colored noise} the linear functional $f_S(-\Delta_{\mathcal{M}})\mathcal{W}_S$ defined by
\begin{equation*}
	f_S(-\Delta_{\mathcal{M}})\mathcal{W}_S : \phi \in H \mapsto \langle f_S(-\Delta_{\mathcal{M}})\mathcal{W}_S, \phi \rangle=\sum_{j\in\N} w_j f_S(\lambda_j) \langle \phi, e_j \rangle.
\end{equation*}
Note that for any $\phi \in H$, the series $\langle f_S(-\Delta_{\mathcal{M}})\mathcal{W}_S, \phi \rangle$ converges in quadratic mean since $\e[\langle f_S(-\Delta_{\mathcal{M}})\mathcal{W}_S, \phi \rangle]=0$, and by independence of the variables $w_j$, 
\begin{equation*}
	\e[\vert\langle f_S(-\Delta_{\mathcal{M}})\mathcal{W}_S, \phi \rangle\vert^2]
	=\sum_{j\in\N} \vert f_S(\lambda_j)\vert^2 \vert\langle \phi, e_j \rangle\vert^2
	\le (\sup_{\R_+} \vert f_S\vert)^2 \Vert \phi\Vert_S^2 < \infty.
\end{equation*}
Besides, using the same arguments, we have for any $\phi_1, \phi_2 \in H$,
\begin{equation*}
	\cov[\langle f_S(-\Delta_{\mathcal{M}})\mathcal{W}_S, \phi_1 \rangle, 
	\langle f_S(-\Delta_{\mathcal{M}})\mathcal{W}_S, \phi_2 \rangle] 
	=\sum_{j\in\N} \vert f_S(\lambda_j)\vert^2 \langle \phi_1, e_j \rangle \langle \phi_2, e_j \rangle
	=\langle f_S(-\Delta_{\mathcal{M}})\phi_1, f_S(-\Delta_{\mathcal{M}})\phi_2 \rangle,
\end{equation*}
Hence, in the case where $f_S(\lambda)=1$ for any $\lambda\ge 0$, $f_S(-\Delta_{\mathcal{M}})\mathcal{W}_S=\mathcal{W}_S$ corresponds to the definition of the spatial Gaussian white noise on $L^2(\mathcal{M})$. Also, whenever $f_S$ satisfies $f_S(\lambda)=\mathcal{O}_{\lambda\rightarrow\infty}(\lambda^{-\alpha})$ with $\alpha > d/4$, $f_S(-\Delta_{\mathcal{M}})\mathcal{W}_S$ can be identified with a square-integrable $H$-valued random variable, and decomposed as \cite[Proposition 2.7]{lang2023galerkin}:
\begin{equation*}
	f_S(-\Delta_{\mathcal{M}})\mathcal{W}_S=
	\sum_{j\in\N} w_j f_S(\lambda_j)  e_j.
\end{equation*}
For instance, if $f_S(\lambda)= \vert \kappa^2 +\lambda\vert^{-\alpha}$, then $\mathcal{Y}_S=f_S(-\Delta_{\mathcal{M}})\mathcal{W}_S$ can be seen as a solution of the Whittle-Matérn SPDE 
\begin{equation}
	(\kappa^2 \lb)^{\alpha}\mathcal{Y}_S=\mathcal{W}_S
\end{equation}
and can thus be seen as a Whittle-Matérn random field on $\mathcal{M}$. 

\subsection{Galerkin approximation}\label{sec:Galerkin}

For $N\in\N$, let $U_N=\spn{\varphi_k : 1\le k\le N}$ where $\varphi_1, \dots, \varphi_N \in H^1(\mathcal{M})$  are linearly independent functions. 	
The Galerkin approximation of $\lb$ over $U_N$ is the linear operator $-\Delta_N : U_N \rightarrow U_N$which maps any $\phi \in U_N$ to the element $-\Delta_N\phi\in U_N$ satisfying
\begin{equation*}
	\langle -\Delta_N\phi, \phi' \rangle = \langle \nabla\phi, \nabla \phi'\rangle, \quad \text{for any } \phi'\in U_N.
\end{equation*}
Similarly, for a smooth vector field $v$, we can define the Galerkin approximation of the operator $ f \in H^1(\mathcal{M}) \mapsto \dv(v f)$ as the linear operator  $\dv_N(v \,\cdot) : U_N \rightarrow U_N$ which maps any $\phi \in U_N$ to the element $\dv_N(v \phi)\in U_N$ satisfying
\begin{equation*}
	\langle \dv_N(v \phi), \phi' \rangle = \langle \dv(v \phi),  \phi'\rangle, \quad \text{for any } \phi'\in U_N.
\end{equation*}

As defined, $-\Delta_N$ is a symmetric endomorphism, and as such is diagonalizable. Let $\lbrace \lambda_k^{(N)}\rbrace_{1\le k\le N}$ denote its eigenvalues, and let $\lbrace e_k^{(N)}\rbrace_{1\le k\le N}$ be a set of associated eigenfunctions forming an orthonormal basis of $U_N$.

Let $\bm C, \bm R \in \R^{N\times N}$ be the matrices whose entries are respectively given by
\begin{equation}
	C_{ij}=\langle \varphi_i, \varphi_j\rangle, 
	\quad R_{ij}=\langle \nabla\varphi_i, \nabla\varphi_j\rangle, \quad 1\le i, j \le N.
\end{equation}
Note that, following \cite[Corollary 3.2]{lang2023galerkin}, $\lbrace \lambda_k^{(N)}\rbrace_{1\le k\le N}$ are also the eigenvalues  of the matrix $\widetilde{\bm R}=(\sqrt{\bm C})^{-1}\bm R(\sqrt{\bm C})^{-T}$. Besides, the map
\begin{equation}
	E : \bm v \in\R^N \mapsto \sum_{k=1}^N \bigg[(\sqrt{\bm C})^{-T}\bm v\bigg]_k \varphi_k \in U_N
	\label{eq:def_E}
\end{equation}
is an  isomorphism that maps the eigenvectors of $\widetilde{\bm R}$ to the eigenfunctions of $-\Delta_N$, and an isometry between $\R^N$ (equipped with the Euclidean metric) and $U_N$ (equipped with the metric induced by the inner product $\langle \cdot,\cdot\rangle$).

\subsection{Matrix functions}\label{sec:funcMat}

Let $\bm S\in\R^{N\times N}$ be a real symmetric matrix and let $f : \R \rightarrow \R$. In particular let us denote by $\lambda_1, \dots,\lambda_N$ the eigenvalues of $\bm S$ and let $\bm V \in\R^{N\times N}$ be an orthogonal matrix such that
\begin{equation}
	\bm S = \bm V \begin{smallmat}
		\lambda_1 & & \\
		& \ddots & \\
		& & \lambda_N
	\end{smallmat}\bm V^T.
	\label{eq:defV}
\end{equation}
Then, the matrix function $f(\bm S) \in\R^{N\times N}$ is the matrix defined by
\begin{equation*}
	f(\bm S) = \bm V \begin{smallmat}
		f(\lambda_1) & & \\
		& \ddots & \\
		& & f(\lambda_N)
	\end{smallmat}\bm V^T.
\end{equation*}
Note in particular that this definition is independent of the choice of matrix $\bm V$ in \eqref{eq:defV}, and that when $f$ is a polynomial, $f(\bm S)$ coincides with the usual notion of matrix polynomial.

\section{Analogy between SPDE formulations}\label{sec:analogy}

As defined, the forcing term $\mathcal{W}_T\otimes \mathcal{Y}_S$ in SPDE~\eqref{eq:spde} can be identified with a cylindrical Wiener process
$\lbrace\widetilde{\mathcal{W}}_t\rbrace_{t\in [0,T]}$ in $L^2(\mathcal{M})$  through 
\begin{equation*}
	\widetilde{\mathcal{W}}_t(\phi_S)=(\mathcal{W}_T\otimes \mathcal{Y}_S)(\ind_{[0,t]},\phi_S), \quad \phi_S\in L^2(\mathcal{M}),\; t\in [0,T],
\end{equation*}
where $\ind_{[0,t]}$ denotes the indicator function of the segment $[0,t]$ \cite{brehier2014short}. As such, we have (almost-surely) the following decomposition of $\widetilde{\mathcal{W}}_t$
\begin{equation}
	\widetilde{\mathcal{W}}_t = \sum_{j\in\N} f_S(\lambda_j) \beta_j(t) e_j, \quad t\in [0,T],
	\label{eq:cyl}
\end{equation}
where $\lbrace e_j\rbrace_{j\in\N}$ denotes an orthonormal basis of $L^2(\mathcal{M})$ composed eigenfunctions of the Laplace--Beltrami operator $\lb$, and $\lbrace \lambda_j\rbrace_{j\in\N}$ their associated eigenvalues.
This identification allows in turn to write
\begin{equation*}
	\mathcal{W}_T\otimes \mathcal{Y}_S(\phi_T,\phi_S) 
	= \langle \int_0^T \phi_T \dd\widetilde{\mathcal{W}}_t, \phi_S \rangle, \quad (\phi_T, \phi_S)\in L^2([0,T])\times L^2(\mathcal{M}),
\end{equation*}
where the integral term is given by
\begin{equation*}
	\int_0^T \phi_T \dd\widetilde{\mathcal{W}}_t = \sum_{j\in\N} f_S(\lambda_j) \bigg(\int_0^T \phi_T \dd\beta_j(t)\bigg) e_j.
\end{equation*}
Hence, we can interpret the forcing term $\mathcal{W}_T\otimes \mathcal{Y}_S$ as the (time) derivative of the cylindrical Wiener process $\lbrace\widetilde{\mathcal{W}}_t\rbrace_{t\in [0,T]}$. This analogy allows in particular to rewrite SPDE~\eqref{eq:spde} as in~\eqref{eq:spde-sde}.

The same analogy holds for the discretized forcing term $\mathcal{W}_T\otimes {Y}_S$ considered in the discretized SPDE~\eqref{eq:spde_discr}. It can be identified  with a cylindrical Wiener process
$\lbrace\widetilde{{W}}_t\rbrace_{t\in [0,T]}$ in $V_N^\ell$, which can be decomposed as~\eqref{eq:cyl} after replacing the eigenfunctions $\lbrace e_j\rbrace_{j\in\N}$ and eigenvalues $\lbrace \lambda_j\rbrace_{j\in\N}$ of $\lb$, by the eigenfunctions $\lbrace e_j^{(N)}\rbrace_{1\le j\le N}$ and eigenvalues $\lbrace \lambda_j^{(N)}\rbrace_{1\le j\le N}$ of $-\Delta_{N}$.

\section{Proofs of Section~\ref{sec:theory}}

\subsection{Proof of \Cref{prop:rec_ad}}\label{proof:rec_ad}

\begin{proof}
	We write for $0\le k<K$, $\bm \alpha^{(k)}=(\alpha_1^{(k)},\dots, \alpha_N^{(k)})^T\in\R^N$, $\bm y^{(k)}=(y_1^{(k)},\dots, y_N^{(k)})^T$, where 
	\begin{equation*}
		P(-\Delta_N)Z^{(k)}=\sum_{j=1}^N \alpha_j^{(k)}\psi_j^\ell, \quad Y^{(k)}=\sum_{j=1}^N y_j^{(k)}\psi_j^\ell.
	\end{equation*}
	Besides, let $\bm\xi^{(k)}$ be the vector defined by $\bm \xi^{(k)} = \big(\langle Z^{(k)}, e_1^{(N)}\rangle, \dots, \langle Z^{(k)}, e_N^{(N)}\rangle\big)^T$. 
	
	Firstly, note that, following~\citep[Theorem 3.4]{lang2023galerkin}, we can take
	\begin{equation*}
		\bm y^{(k)}=(\sqrt{\bm C})^{-T}f_S(\widetilde{\bm R})\bm w ^{(k)}, \quad k\in\N,
	\end{equation*}
	where $\lbrace\bm w ^{(k)}\rbrace_{k\in\N}$ is a sequence of independent centered Gaussian vectors with covariance matrix $\bm I$.

	Then, by testing~\eqref{eq:euler_discr} against $\psi_i$ (for $i\in\lbrace 1,\dots,N\rbrace$), and injecting~\eqref{eq:dd}, we get the following linear system of equations
	\begin{equation}
		\bm C \bm z^{(k+1)} - \bm C \bm z^{(k)} +\frac{\delta t}{c}\bigg(\bm C \bm \alpha^{(k+1)} + \bm B^{(k)}\bm z^{(k+1)}\bigg)
		=\frac{\delta t}{c}\bm C\bm m+\tau\sqrt{\frac{\delta t}{c}}\bm C \bm y ^{(k+1)}
		\label{eq:zk_scheme}
	\end{equation}
	On the one hand, using the map $E : E : \bm v \in\R^N \mapsto \sum_{k=1}^N \big[(\sqrt{\bm C})^{-T}\bm v\big]_k \psi_k \in V_N $, we have
	\begin{equation}
		Z^{(k+1)}
		=E\bigg((\sqrt{\bm C})^T\bm z^{(k+1)}\bigg) \quad \text{and}\quad P(-\Delta_N)Z^{(k+1)}
		=E\bigg((\sqrt{\bm C})^T\bm \alpha^{(k+1)}\bigg)
		\label{eq:eq_zk}
	\end{equation}
	On the other hand, following the definition of the basis $\lbrace e_j^{(N)}\rbrace_{1\le j\le N}$,
	\begin{equation*}
		\begin{aligned}
			Z^{(k+1)}
			&
			=\sum_{j=1}^N  \langle Z^{(k+1)}, e_j^{(N)}\rangle E(\bm v_j)
			=E\bigg(\sum_{j=1}^N \langle Z^{(k+1)}, e_j^{(N)}\rangle \bm v_j\bigg)
			=E\big(\bm V\bm\xi^{(k+1)}\big)
		\end{aligned}
	\end{equation*}
	Hence, since $E$ is invertible, we have $(\sqrt{\bm C})^T\bm z^{(k+1)} = \bm V\bm\xi^{(k+1)}$.
	
	Similarly, by definition of $P(-\Delta_N)$, we have
	\begin{equation*}
		\begin{aligned}
			P(-\Delta_N)Z^{(k+1)}
			&=\sum_{j=1}^N P(\lambda_j^{(N)})\langle Z^{(k+1)}, e_j^{(N)}\rangle e_j^{(N)}
			=E\big(\bm VP(\bm \Lambda^{(N)})\bm\xi^{(k+1)}\big)
			=E\big(P(\widetilde{\bm R})\bm V\bm\xi^{(k+1)}\big)
		\end{aligned}
	\end{equation*}
	Therefore, we have $P(-\Delta_N)Z^{(k+1)}=E\big(P(\widetilde{\bm R})(\sqrt{\bm C})^T\bm z^{(k+1)}\big)$, and using~\eqref{eq:eq_zk} and the fact that $E$ is invertible, we can deduce that  $\bm \alpha^{(k+1)}=
	(\sqrt{\bm C})^{-T}P(\widetilde{\bm R})(\sqrt{\bm C})^T\bm z^{(k+1)}$. 
	
	In conclusion, we can now rewrite~\eqref{eq:zk_scheme} as
	\begin{equation*}
		\bigg(\bm C  +\frac{\delta t}{c} (\sqrt{\bm C})P(\widetilde{\bm R})(\sqrt{\bm C})^T + \frac{\delta t}{c}\bm B^{(k)}\bigg)\bm z^{(k+1)}
		=\bm C \bm z^{(k)}+\frac{\delta t}{c}\bm C\bm m+\tau\sqrt{\frac{\delta t}{c}}\bm C \bm y ^{(k+1)}.
	\end{equation*}
	By then multiplying both sides of the equality by $(\sqrt{\bm C})^{-1}$ and  introducing $\bm x^{(k)}=(\sqrt{\bm C})^T \bm z^{(k)}$, we retrieve the recursion~\eqref{eq:rec_ad}. 
\end{proof}

\subsection{Proof of \Cref{prop:mean_cov_ad}}\label{proof:mean_cov_ad}

\begin{proof}
		For ease of notation, let us write for $k\ge 0$, $\bm \Gamma^{-k}=(\bm \Gamma^{-1})^k$. On the one hand, by taking the expectation on both sides of the recursion~\eqref{eq:rec_ad}, we find $\bm\mu^{(0)}=\e[\bm z^{(0)}]=(\sqrt{\bm C})^{-T} \e[\bm x^{(0)}]$ and for $k\ge 0$,
	\begin{equation*}
		\left\lbrace\begin{aligned}
			& \bm\Gamma\e[\bm x^{(k+1)}]
			=\e[\bm x^{(k)}]+\frac{\delta t}{c}\widetilde{\bm m},\\
			&\bm\mu^{(k+1)}=\e[\bm z^{(k+1)}]=(\sqrt{\bm C})^{-T} \e[\bm x^{(k+1)}],
		\end{aligned}\right.
	\end{equation*}
	where we take $\widetilde{\bm m}=(\sqrt{\bm C})^T \bm m$. Hence, we have for any $k\ge 0$,
	\begin{align*}
		\e[\bm x^{(k)}]-\frac{\delta t}{c}(\bm\Gamma-\bm I)^{-1}\widetilde{\bm m}
		&=\bm\Gamma\e[\bm x^{(k+1)}]-\frac{\delta t}{c}\widetilde{\bm m}-\frac{\delta t}{c}(\bm\Gamma-\bm I)^{-1}\widetilde{\bm m}
		=\bm\Gamma\e[\bm x^{(k+1)}]-\frac{\delta t}{c}(\bm I+(\bm\Gamma-\bm I)^{-1})\widetilde{\bm m}\\
		&=\bm\Gamma\e[\bm x^{(k+1)}]-\frac{\delta t}{c}((\bm\Gamma-\bm I)+\bm I)(\bm\Gamma-\bm I)^{-1}\widetilde{\bm m}
		=\bm\Gamma\bigg(\e[\bm x^{(k+1)}]-\frac{\delta t}{c}(\bm\Gamma-\bm I)^{-1}\widetilde{\bm m}\bigg).
	\end{align*}
	By iterating this last equation, we obtain
	\begin{equation*}
		\e[\bm x^{(0)}]-\frac{\delta t}{c}(\bm\Gamma-\bm I)^{-1}\widetilde{\bm m}
		=\bm\Gamma^{k}\bigg(\e[\bm x^{(k)}]-\frac{\delta t}{c}(\bm\Gamma-\bm I)^{-1}\widetilde{\bm m}\bigg).
	\end{equation*}
	Finally, by multiplying both side of this equality by $(\sqrt{\bm C})^{-T}$, we get
	\begin{align*}
		\bm\mu^{(0)}-\frac{\delta t}{c}(\sqrt{\bm C})^{-T}(\bm\Gamma-\bm I)^{-1}\widetilde{\bm m}
		=\bm\mu^{(0)}-\bm\mu 
		&=(\sqrt{\bm C})^{-T}\bm\Gamma^{k}(\sqrt{\bm C})^{T}\bigg(\bm \mu^{(k)}-\frac{\delta t}{c}(\sqrt{\bm C})^{-T}(\bm\Gamma-\bm I)^{-1}\widetilde{\bm m}\bigg)\\
		&=(\sqrt{\bm C})^{-T}\bm\Gamma^{k}(\sqrt{\bm C})^{T}\bigg(\bm \mu^{(k)}-\bm \mu\bigg),
	\end{align*}
	which in turn gives~\eqref{eq:exp_z}.
	
	On the other hand, by taking the covariance on both sides of the recursion~\eqref{eq:rec_ad}, we have, for any $k\ge 0$,
	\begin{equation}~\label{eq:rec_cov}
		\left\lbrace\begin{aligned}
			& \bm\Gamma\var[\bm x^{(k+1)}]\bm\Gamma^T
			=\var[\bm x^{(k)}]+f^2_{\delta t}(\widetilde{\bm R}),\\
			&\bm\Sigma^{(k+1)}=(\sqrt{\bm C})^{-T} \var[\bm x^{(k+1)}](\sqrt{\bm C})^{-1},
		\end{aligned}\right.
	\end{equation}
Let us show by induction that for any $k\ge 0$,
	\begin{equation}\label{eq:induc}
	\var[\bm x^{(k)}]= \bm\Gamma^{-k}\var[\bm x^{(0)}](\bm\Gamma^{-k})^T +\sum_{i=1}^{k} \bm\Gamma^{-i}{f}_{\delta t}^2(\widetilde{\bm R})(\bm\Gamma^{-i})^T.
\end{equation}
Formula~\eqref{eq:induc} clearly holds for $k=0$. Let us now assume that it holds for some fixed index $k\in\N_0$. Then, following~\eqref{eq:rec_cov}, we have
	\begin{align*}
	\var[\bm x^{(k+1)}]
	&= \bm\Gamma^{-1}(\var[\bm x^{(k)}]+f^2_{\delta t}(\widetilde{\bm R})) \bm\Gamma^{-T}\\
	&= \bm\Gamma^{-(k+1)}\var[\bm x^{(0)}](\bm\Gamma^{-(k+1)})^T +\sum_{i=1}^{k} \bm\Gamma^{-(i+1)}{f}_{\delta t}^2(\widetilde{\bm R})(\bm\Gamma^{-(i+1)})^T
	+\bm\Gamma^{-1}f^2_{\delta t}(\widetilde{\bm R})\bm\Gamma^{-T}\\
	&= \bm\Gamma^{-(k+1)}\var[\bm x^{(0)}](\bm\Gamma^{-(k+1)})^T +\sum_{i=1}^{k+1} \bm\Gamma^{-i}{f}_{\delta t}^2(\widetilde{\bm R})(\bm\Gamma^{-i})^T
\end{align*}
Hence, Formula~\eqref{eq:induc} also holds for the index $(k+1)$. In conclusion, by induction, Formula~\eqref{eq:induc} holds for any $k\ge 0$. We then retrieve~\eqref{eq:cov_z} as a consequence of~\eqref{eq:rec_cov}.
\end{proof}

\subsection{Proof of \Cref{prop:Qeuler}}\label{proof:Qeuler}

\begin{proof}
	Consider the vectors $\lbrace\bm w^{(k)}, \bm x^{(k)}\rbrace_{0\le k\le K}$ defined in \Cref{prop:rec_ad}, and define the vectors $\bm W =((\bm w^{(0)})^T, \dots, (\bm w^{(K)})^T)^T\in\R^{(K+1)N}$ and $\bm X =((\bm x^{(0)})^T, \dots, (\bm x^{(K)})^T)^T \in \R^{(K+1)N}$. In particular,  $\bm W$ is a centered Gaussian vector with precision matrix $\bm Q_{\bm W}=\bm I$. Besides, note that by definition of $Z(0,\cdot)$ we can write $\bm z^{(0)}={\bm m}^{(0)}+(\sqrt{\bm C})^{-T}f_0(\widetilde{\bm R})\bm w^{(0)}$ . On the one hand, we can rewrite the first equality of~\eqref{eq:rec_ad} in matrix form as
	\begin{equation}\label{eq:matform}
		\bm L\left(\lbrace\bm \Gamma^{(k)}\rbrace_{k=0}^{K-1}\right)\bm X=\bm D\big((\sqrt{\bm C})^{T}\big){\bm M}_{\delta t}+\bm D\big(f_0(\widetilde{\bm R}),\, {f}_{\delta t}(\widetilde{\bm R})\big)\bm W.
	\end{equation}
	Taking the expectation on both sides of~\eqref{eq:matform}, we then get $\e[\bm X]=\bm L\left(\lbrace\bm \Gamma^{(k)}\rbrace_{k=0}^{K-1}\right)^{-1}\bm D\big((\sqrt{\bm C})^{T}\big){\bm M}_{\delta t}$.
	
	On the one hand, since by definition of $\bm{x}^{(k)}$ ($k\ge 0$) in \Cref{prop:rec_ad}, we have $\bm X =  \bm D\big((\sqrt{\bm C})^{T})\bm Z$, we can conclude that 
	\begin{equation*}
		\bm \mu_{\bm Z}=\bm D\big((\sqrt{\bm C})^{-T})\bm L\left(\lbrace\bm \Gamma^{(k)}\rbrace_{k=0}^{K-1}\right)^{-1}\bm D\big((\sqrt{\bm C})^{T}\big){\bm M}_{\delta t}.
	\end{equation*}
	On the other hand, we can also write
	\begin{equation}\label{eq:matformb}
		\bm L\left(\lbrace\bm \Gamma^{(k)}\rbrace_{k=0}^{K-1}\right)(\bm X-\e[\bm X])=\bm D\big(f_0(\widetilde{\bm R}),\, {f}_{\delta t}(\widetilde{\bm R})\big)\bm W.
	\end{equation}
	Taking the covariance of both side of this equality, we get 
	\begin{equation*}
		\bm L\left(\lbrace\bm \Gamma^{(k)}\rbrace_{k=0}^{K-1}\right) \bm Q_{\bm X}^{-1}
		\bm L\left(\lbrace\bm \Gamma^{(k)}\rbrace_{k=0}^{K-1}\right)^T
		= \bm D\big(f_0^2(\widetilde{\bm R}),\, {f}_{\delta t}^2(\widetilde{\bm R})\big),
	\end{equation*}
	where we used the fact that $\bm D\big(f_0(\widetilde{\bm R}),\, {f}_{\delta t}(\widetilde{\bm R})\big)$ is block diagonal and that $\bm Q_{\bm W}=\bm I$.
	By then inverting both sides, we obtain the following relation for the precision matrix $\bm Q_{\bm X}$ of $\bm X$:
	\begin{equation*}
		\bm Q_{\bm X} =  \bm L\left(\lbrace\bm \Gamma^{(k)}\rbrace_{k=0}^{K-1}\right)^T~\bm D\left(f_0^{-2}(\widetilde{\bm R}),\, {f}_{\delta t}^{-2}(\widetilde{\bm R})\right)~\bm L\left(\lbrace\bm \Gamma^{(k)}\rbrace_{k=0}^{K-1}\right).
	\end{equation*}
	Finally, recall that $\bm X =  \bm D\big((\sqrt{\bm C})^{T})\bm Z$. By once again taking the covariance and then inverting both sides of this equality, we retrieve~\eqref{eq:defQ}.
\end{proof}

\section{Stability of the recursion}

\begin{prop}\label{prop:stab}
	Assume that the finite element matrices $\bm C$, $\bm R$ and $\bm B$ are defined according to~\eqref{eq:coef} and let ${\bm G}=\frac{1}{2}(\bm B+\bm B^T)$. Then we have, 
	\begin{equation*}
		{G}_{ij}= \langle\frac{1}{2}\dv(\gamma)\psi_i^\ell,\psi_j^\ell\rangle, \quad 1\le i,j\le n.
	\end{equation*}
	In particular, we have $C_P+\displaystyle\lambda_{\min}(\tilde{\bm G})\ge \inf_{\mathcal{M}} \bigg(C_P +  \frac{1}{2}\dv(\gamma)\bigg)$ meaning that the recursion in \Cref{prop:rec_ad} is stable whenever $$\inf_{\mathcal{M}} \bigg(C_P +  \frac{1}{2}\dv(\gamma)\bigg)>0.$$
\end{prop}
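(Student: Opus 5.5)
The plan has three steps: identify $\bm G$ entrywise, convert it into a lower bound on its Rayleigh quotient after the $\sqrt{\bm C}$ scaling, and then apply \Cref{prop:mean_cov_inf}.

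\textbf{Entrywise formula for $\bm G$.} I would follow the computation in the proof of the corollary for divergence-free fields, but keep the divergence term. For $1\le i,j\le n$, the Leibniz rule $\dv(\gamma f)=\dv(\gamma)f+g(\gamma,\nabla f)$ gives
\begin{equation*}
B_{ji}=\langle\psi_j^\ell,\dv(\gamma\psi_i^\ell)\rangle=\langle\dv(\gamma)\psi_i^\ell,\psi_j^\ell\rangle+\langle\psi_j^\ell,g(\gamma,\nabla\psi_i^\ell)\rangle .
\end{equation*}
Integration by parts on the closed manifold $\mathcal{M}$ (no boundary terms) gives
\begin{equation*}
B_{ij}=\langle\psi_i^\ell,\dv(\gamma\psi_j^\ell)\rangle=-\langle g(\nabla\psi_i^\ell,\gamma),\psi_j^\ell\rangle .
\end{equation*}
Adding the two identities, the first-order terms cancel and we get $B_{ij}+B_{ji}=\langle\dv(\gamma)\psi_i^\ell,\psi_j^\ell\rangle$, which is the claimed formula for $G_{ij}$. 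This step needs $\psi_i^\ell\in H^1(\mathcal{M})$ and $\gamma$ smooth, both of which hold.

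\textbf{Rayleigh quotient of $\widetilde{\bm G}$.} Set $\widetilde{\bm G}=(\sqrt{\bm C})^{-1}\bm G(\sqrt{\bm C})^{-T}$, the symmetric part of $\widetilde{\bm B}$. For $\bm x\in\R^N$, let $\bm v=(\sqrt{\bm C})^{-T}\bm x$ and $\phi=\sum_k v_k\psi_k^\ell$; this is exactly the isometry $E$ of \eqref{eq:def_E} applied to $\bm x$, so $\Vert\phi\Vert_{L^2}=\Vert\bm x\Vert$. By bilinearity,
\begin{equation*}
\bm x^T\widetilde{\bm G}\bm x=\bm v^T\bm G\bm v=\int_{\mathcal{M}}\tfrac12\dv(\gamma)\,\phi^2\ge\Big(\inf_{\mathcal{M}}\tfrac12\dv(\gamma)\Big)\Vert\bm x\Vert^2 .
\end{equation*}
By the min-max principle, $\lambda_{\min}(\widetilde{\bm G})\ge\inf_{\mathcal{M}}\frac12\dv(\gamma)$. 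Adding $C_P$ gives $C_P+\lambda_{\min}(\widetilde{\bm G})\ge\inf_{\mathcal{M}}\big(C_P+\frac12\dv(\gamma)\big)$.

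\textbf{Conclusion.} Stability then follows directly from \Cref{prop:mean_cov_inf}, since its hypothesis $C_P+\lambda_{\min}(\widetilde{\bm G})>0$ is implied by $\inf_{\mathcal{M}}\big(C_P+\frac12\dv(\gamma)\big)>0$.

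The only delicate point is the integration by parts with lifted, piecewise-smooth basis functions. These are globally $H^1$ on a closed manifold, so the divergence theorem applies with no boundary terms. The rest is bookkeeping with the $\sqrt{\bm C}$ scaling and the isometry $E$.
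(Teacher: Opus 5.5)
Your proposal is correct and follows essentially the same route as the paper. You use the Leibniz rule and integration by parts on the closed manifold to cancel the first-order terms in $B_{ij}+B_{ji}$, then substitute $(\sqrt{\bm C})^{-T}\bm x$ to rewrite the quadratic form as $\int_{\mathcal{M}}\frac12\dv(\gamma)\varphi^2$ with $\Vert\varphi\Vert^2=\Vert\bm x\Vert^2$. The only cosmetic difference is that the paper lower-bounds $P(\widetilde{\bm R})+\widetilde{\bm G}$ through $C_P\bm I+\widetilde{\bm G}$, whereas you bound $\lambda_{\min}(\widetilde{\bm G})$ directly and then add $C_P$, which matches the stated inequality slightly more literally.
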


\begin{proof}
	Let $ 1\le i,j\le n$. On the one hand, we have using the Leibnitz rule,
	\begin{align*}
		\langle\psi_j^\ell,\dv(\gamma\psi_i^\ell)\rangle
		=\langle\psi_j^\ell,\dv(\gamma)\psi_i^\ell\rangle+\langle\psi_j^\ell,g(\gamma,\nabla\psi_i^\ell)\rangle
	\end{align*}
	On the other hand, the integration by parts formula gives
		\begin{align*}
		\langle\psi_i^\ell,\dv(\gamma\psi_j^\ell)\rangle
		=-\langle\nabla\psi_i^\ell,\gamma\psi_j^\ell\rangle
		=-\langle g(\nabla\psi_i^\ell,\gamma),\psi_j^\ell\rangle
	\end{align*}
	Hence, we can write
	\begin{align*}
	2 {G}_{ij}
	=\langle\psi_i^\ell,\dv(\gamma\psi_j^\ell)\rangle+\langle\psi_j^\ell,\dv(\gamma\psi_i^\ell)\rangle
	=-\langle g(\nabla\psi_i^\ell,\gamma),\psi_j^\ell\rangle+\langle\psi_j^\ell,\dv(\gamma)\psi_i^\ell\rangle+\langle\psi_j^\ell,g(\gamma,\nabla\psi_i^\ell)\rangle
\end{align*}
which gives
	\begin{align*}
	{G}_{ij}
	=\frac{1}{2}\langle\psi_j^\ell,\dv(\gamma)\psi_i^\ell\rangle
\end{align*}

Following the proof of \Cref{prop:mean_cov_inf}, the recursion is stable if $(P(\widetilde{\bm R}) + \widetilde{\bm G})$ is positive definite. Let then $\bm x\in\R^n$ be fixed but arbitrary. Let us take $\widetilde{\bm x}=(\sqrt{\bm C})^{-T}\bm x$ and $\varphi=\sum_{i=1}^n\widetilde{x}_i\psi_i^\ell \in V_{N}$. Then,`
\begin{equation*}
	\bm x^T  \left(P(\widetilde{\bm R}) + \widetilde{\bm G}\right) \bm x \geq \bm x^T(C_P\bm I +\widetilde{\bm G})\bm x=\widetilde{\bm x}^T(C_P\bm C +{\bm G})\widetilde{\bm x}=C_P\widetilde{\bm x}^T\bm C\widetilde{\bm x}+\widetilde{\bm x}^T{\bm G}\widetilde{\bm x}
\end{equation*}
where in particular $\widetilde{\bm x}^T\bm C\widetilde{\bm x} =\sum_{1\le i,j\le n}\widetilde{x}_i\langle\psi_i,\psi_j\rangle\widetilde{x}_j=\langle\varphi,\varphi\rangle$ and $\widetilde{\bm x}^T\bm G\widetilde{\bm x} =\sum_{1\le i,j\le n}\widetilde{x}_i\langle \frac{1}{2}\dv(\gamma)\psi_i,\psi_j\rangle\widetilde{x}_j=\langle \frac{1}{2}\dv(\gamma)\varphi,\varphi\rangle$. Hence, we have
\begin{equation*}
	\bm x^T  \left(P(\widetilde{\bm R}) + \widetilde{\bm G}\right) \bm x \geq \langle \bigg(C_P+\frac{1}{2}\dv(\gamma)\bigg)\varphi,\varphi\rangle
	\ge \inf_{\mathcal{M}}\bigg(C_P+\frac{1}{2}\dv(\gamma)\bigg)\langle \varphi,\varphi\rangle=\inf_{\mathcal{M}}\bigg(C_P+\frac{1}{2}\dv(\gamma)\bigg)\widetilde{\bm x}^T\bm C\widetilde{\bm x}
\end{equation*}
Finally, note that by definition of $\widetilde{\bm x}$, we have $\widetilde{\bm x}^T\bm C\widetilde{\bm x}=\bm x^T\bm x$, thus giving that, for any  $\bm x\in\R^n$,
\begin{equation*}
	\bm x^T  \left(P(\widetilde{\bm R}) + \widetilde{\bm G}\right) \bm x \geq  \inf_{\mathcal{M}}\bigg(C_P+\frac{1}{2}\dv(\gamma)\bigg)\Vert\bm x\Vert^2,
\end{equation*}
which proves that $(P(\widetilde{\bm R}) + \widetilde{\bm G})$ is indeed positive definite.

\end{proof}

\section{Algorithms}

We expose in this section a few algorithms that are necessary to perform matrix-free inference and prediction.

\begin{algorithm}
	\caption{Matrix-vector product by $\bm Q_{\bm Z}$}\label{alg:mvprodQ}
	\begin{algorithmic}[1]
		
		\KwDepend{Matrices $(\sqrt{\bm C}), \widetilde{\bm R}, \lbrace\bm\Gamma^{(k)}\rbrace_{k=0}^{K-1}$ in~\eqref{eq:defQ}.}
		\KwIn{Vector $\bm X =((\bm x^{(0)})^T, \dots, (\bm x^{(K)})^T)^T\in\R^{(K+1)N}$.}
		\KwOut{Vector $\bm Y=((\bm y^{(0)})^T, \dots, (\bm y^{(K)})^T)^T=\bm Q_{\bm Z} \bm X$.}
		
		\vspace{1ex}
		\hrule
		\vspace{1ex}

		\For{$k=0$ \textbf{to} $K$}
		\State  Initialize $\bm y^{(k)}=(\sqrt{\bm C})^T\bm x^{(k)}$
		\State  Initialize $\bm z^{(k)}=\bm 0$
		\EndFor
		
		\vspace{1ex} 
		
		\State Set $\bm z^{(0)}\leftarrow \bm y^{(0)}$
		\For{$k=1$ \textbf{to} $K$}
		\State  Set $\bm z^{(k)}\leftarrow \bm \Gamma^{(k-1)} \bm y^{(k)}-\bm y^{(k-1)}$
		\EndFor
		
		\vspace{1ex}
		
		\State  Set $\bm z^{(0)} \leftarrow f_0^{-2}(\widetilde{\bm R})\bm z^{(0)}$
		\For{$k=1$ \textbf{to} $K$}
		\State Set $\bm z^{(k)}\leftarrow {f}_{\delta t}^{-2}(\widetilde{\bm R})\bm z^{(k)}$
		\EndFor
		
		\vspace{1ex}

		\State Set $\bm y^{(K)} \leftarrow  (\bm \Gamma^{(K-1)})^{T}\bm z^{(K)}$
		\For{$k=K-1$ \textbf{to} $1$}
		\State Set  $\bm y^{(k)}\leftarrow (\bm \Gamma^{(k-1)})^{T}\bm z^{(k)}-\bm z^{(k+1)}$
		\EndFor
		\State Set  $\bm y^{(0)} \leftarrow  \bm z^{(0)}-\bm z^{(1)}$
		
		\vspace{1ex}
		
		\For{$k=0$ \textbf{to} $K$}
		\State  Set $\bm y^{(k)}\leftarrow (\sqrt{\bm C}) \bm y^{(k)}$
		\EndFor
		
		\vspace{1ex}
		
		\KwRet{$\bm Y=((\bm y^{(0)})^T, \dots, (\bm y^{(K)})^T)^T$.}
	\end{algorithmic}
\end{algorithm}

\begin{algorithm}
	\caption{Solve a linear system defined by $\bm L(\lbrace\bm\Theta^{(k)}\rbrace_{k=0}^{K-1})$}\label{alg:linsolveL}
	\begin{algorithmic}[1]
		
		\KwDepend{Matrices $\lbrace\bm\Theta^{(k)}\rbrace_{k=0}^{K-1}$ in~\eqref{eq:defL}.}
		\KwIn{Vector $\bm X =((\bm x^{(0)})^T, \dots, (\bm x^{(K)})^T)^T\in\R^{(K+1)N}$.}
		\KwOut{Vector $\bm Y=((\bm y^{(0)})^T, \dots, (\bm y^{(K)})^T)^T=\bm L(\lbrace\bm\Theta^{(k)}\rbrace_{k=0}^{K-1})^{-1} \bm X$.}
		
		\vspace{1ex}
		\hrule
		\vspace{1ex}
		
		\For{$k=0$ \textbf{to} $K$}
		\State  Initialize  $\bm y^{(k)}=\bm 0$
		\EndFor
		
		\vspace{1ex} 
		
		\State  Set $\bm y^{(0)} \leftarrow \bm x^{(0)}$
		\For{$k=1$ \textbf{to} $K$}
		\State  Set $\bm y^{(k)}\leftarrow (\bm \Theta^{(k-1)})^{-1}\big(\bm y^{(k-1)}+\bm x^{(k)}\big)$
		\EndFor
		
		\vspace{1ex}
		
		\KwRet{$\bm Y=((\bm y^{(0)})^T, \dots, (\bm y^{(K)})^T)^T$.}
	\end{algorithmic}
\end{algorithm}

\begin{algorithm}
	\caption{Solve a linear system defined by $\bm L(\lbrace\bm\Theta^{(k)}\rbrace_{k=0}^{K-1})^T$}\label{alg:linsolveLT}
	\begin{algorithmic}[1]
		
		\KwDepend{Matrices $\lbrace\bm\Theta^{(k)}\rbrace_{k=0}^{K-1}$ in~\eqref{eq:defL}.}
		\KwIn{Vector $\bm X =((\bm x^{(0)})^T, \dots, (\bm x^{(K)})^T)^T\in\R^{(K+1)N}$.}
		\KwOut{Vector $\bm Y=((\bm y^{(0)})^T, \dots, (\bm y^{(K)})^T)^T=\bm L(\lbrace\bm\Theta^{(k)}\rbrace_{k=0}^{K-1})^{-T} \bm X$.}
		
		\vspace{1ex}
		\hrule
		\vspace{1ex}
		
		\For{$k=0$ \textbf{to} $K$}
		\State  Initialize  $\bm y^{(k)}=\bm 0$
		\EndFor
		
		\vspace{1ex}
		
		\State  Set $\bm y^{(K)} \leftarrow (\bm \Theta^{(K-1)})^{-T}\bm x^{(K)}$
		\For{$k=K-1$ \textbf{to} $1$}
		\State  Set $\bm y^{(k)}= (\bm \Theta^{(k-1)})^{-T}\big(\bm y^{(k+1)}+\bm x^{(k)}\big)$
		\EndFor
		\State  Set $\bm y^{(0)}\leftarrow  \bm y^{(1)}+\bm x^{(0)}$
		
		\vspace{1ex} 
		
		\KwRet{$\bm Y=((\bm y^{(0)})^T, \dots, (\bm y^{(K)})^T)^T$.}
	\end{algorithmic}
\end{algorithm}

\begin{algorithm}
	\caption{Solve a linear system defined by $\bm Q_{\bm Z}$}\label{alg:linsolveQ}
	\begin{algorithmic}[1]
		
		\KwDepend{Matrices $(\sqrt{\bm C}), \widetilde{\bm R}, \lbrace\bm\Gamma^{(k)}\rbrace_{k=0}^{K-1}$ in~\eqref{eq:defQ}.}
		\KwIn{Vector $\bm X =((\bm x^{(0)})^T, \dots, (\bm x^{(K)})^T)^T\in\R^{(K+1)N}$.}
		\KwOut{Vector $\bm Y=((\bm y^{(0)})^T, \dots, (\bm y^{(K)})^T)^T=\bm Q_{\bm Z}^{-1} \bm X$.}
		
		\vspace{1ex}
		\hrule
		\vspace{1ex}
		
		\For{$k=0$ \textbf{to} $K$}
		\State  Initialize $\bm y^{(k)}=(\sqrt{\bm C})^{-1}\bm x^{(k)}$
		\EndFor
		
		\vspace{1ex}
		
		\State Set $\bm Y \leftarrow \bm L(\lbrace\bm\Gamma^{(k)}\rbrace_{k=0}^{K-1})^{-T} \bm Y$ using \Cref{alg:linsolveLT}
		
		\vspace{1ex}
		
		\State  Set $\bm y^{(0)} \leftarrow f_0^{2}(\widetilde{\bm R})\bm y^{(0)}$
		\For{$k=1$ \textbf{to} $K$}
		\State  Set $\bm y^{(k)}\leftarrow {f}_{\delta t}^{2}(\widetilde{\bm R})\bm y^{(k)}$
		\EndFor
		
		\vspace{1ex}
		
		\State Set $\bm Y\leftarrow \bm L(\lbrace\bm\Gamma^{(k)}\rbrace_{k=0}^{K-1})^{-1} \bm Y$ using \Cref{alg:linsolveL}
		
		\vspace{1ex}
		
		\For{$k=0$ \textbf{to} $K$}
		\State  Set $\bm y^{(k)}\leftarrow(\sqrt{\bm C})^{-T}\bm y^{(k)}$
		\EndFor
		
		\vspace{1ex}
		
		\KwRet{$\bm Y=((\bm y^{(0)})^T, \dots, (\bm y^{(K)})^T)^T$.}
	\end{algorithmic}
\end{algorithm}

\begin{algorithm}
    \caption{Symmetric Block Gauss-Seidel preconditioner}\label{alg:blockGSPrecond}
    \begin{algorithmic}[1]
        
        \KwDepend{A symmetric block tri-diagonal matrix $\mathbf{Q} = \mathbf{L} + \mathbf{D} + \mathbf{L}^T$ where $\mathbf{D} = \text{diag}(\mathbf{D}_0, \dots, \mathbf{D}_K)$ and $\mathbf{L}$ contains sub-diagonal blocks $\mathbf{M}_k$ for $k \in \{1, \dots, K\}$.}
        \KwIn{Vector $\bm X =((\bm x^{(0)})^T, \dots,  (\bm x^{(K)})^T)^T \in \mathbb{R}^{(K+1)N}$.}
        \KwOut{Vector $\bm Y=((\bm y^{(0)})^T, \dots, (\bm y^{(K)})^T)^T=(\mathbf{L}+\mathbf{D})^{-1}\mathbf{D}(\mathbf{L}^T+\mathbf{D})^{-1} \bm X$.}

        \vspace{1ex}
        \hrule
        \vspace{1ex}
        
        \State \textbf{Step 0: Initialization} \Comment{Once for all the iterations of the GMRES algorithm}
        \For{$k=0$ \textbf{to} $K$}
            \State Compute the Cholesky decomposition of $\mathbf{D}_k$
            \State $\mathbf{D}_k = \mathbf{G}_k \mathbf{G}_k^T$ where $\mathbf{G}_k$ is lower triangular
        \EndFor

        \vspace{1ex}
        \hrule
        \vspace{1ex}
        
        \State \textbf{Step 1: Forward sweep}
        \State Solve $\mathbf{G}_0 \mathbf{G}_0^T \bm y^{(0)} = \bm x^{(0)}$
        \For{$k=1$ \textbf{to} $K$}
            \State $\bm r^{(k)} \leftarrow \bm x^{(k)} - \mathbf{M}_{k}^T \bm y^{(k-1)}$
            \State Solve $\mathbf{G}_k \mathbf{G}_k^T \bm y^{(k)} = \bm r^{(k)}$
        \EndFor

        \vspace{1ex}

        \State \textbf{Step 2: Scaling}
        \For{$k=0$ \textbf{to} $K-1$}
            \State $\bm y^{(k)} \leftarrow \mathbf{D}_k \bm y^{(k)}$
        \EndFor

        \vspace{1ex}

        \State \textbf{Step 3: Backward sweep}
        \For{$k=K-1$ \textbf{down to} $0$}
            \State $\bm r^{(k)} \leftarrow \bm y^{(k)} - \mathbf{M}_{k+1} \bm y^{(k+1)}$
            \State Solve $\mathbf{G}_k \mathbf{G}_k^T \bm y^{(k)} = \bm r^{(k)}$
        \EndFor

        \vspace{1ex}
        \KwRet{$\bm Y = ((\bm y^{(0)})^T, \dots, (\bm y^{(K)})^T)^T$}
    \end{algorithmic}
\end{algorithm}

\section{Probabilistic scores}\label{prob_score}

A scoring rule $\text{S}$ assigns a real-valued quantity $\text{S}(F, \bm y)$ to a forecast-observation pair $(F, \bm y)$, where $F \in \mathcal{F}$ is a probabilistic forecast and $\bm y$ is an observation, possibly multivariate. Probabilistic scoring rules are used to assess the quality of predictive distributions, evaluating both calibration (how well predicted probabilities match observations) and sharpness (the concentration of the predictive distribution) of forecasts.

The Continuous Ranked Probability Score (CRPS)~\cite{gneiting2007} is the most popular univariate scoring rule ($y$ is univariate) and is defined as
$$\text{CRPS}(F,y)=\e_F\lvert X-y\rvert-\frac{1}{2}\e_F\lvert X-X'\rvert,$$
where $y \in \R$ and $X$ and $X'$ are independent random variables following $F$, with a finite first moment. The CRPS evaluates the accuracy of marginal predictive distributions at individual locations.

The Variogram Score (VS)~\cite{scheuerer2015} is a multivariate score designed to assess the dependence structure. The VS of order $p$ is defined as
$$\text{VS}_p(F, \bm y)=\sum_{i=1}^d\sum_{j=1}^d w_{i,j}\left(\e_F[\lvert y_i - y_j\rvert^p] - \e_F[\lvert X_i - X_j\rvert^p] \right)^2$$
where $X_i$ is the $i$-component of the random vector $X$ following $F$, $w_{i,j}$ are nonnegative weights that allows one to emphasize or down-weight pairs of component combinations based on subjective expert decisions, and $p>0$ is the order of the variogram score. The VS captures the quality of the forecasts’ spatial dependence by comparing predictions across location pairs.

\end{document}